\documentclass[11pt,english]{article}
\usepackage{lmodern}
\usepackage[T1]{fontenc}
\usepackage[latin9]{inputenc}

\usepackage{geometry}
\geometry{verbose,tmargin=1in,bmargin=1in,lmargin=1in,rmargin=1in, footnotesep=0.3in}

\usepackage{pdflscape} 
\usepackage{rotating}  

\usepackage{mathrsfs}
\usepackage{amsmath}
\usepackage{amssymb}
\usepackage{amsthm}
\usepackage{physics}
\usepackage{bm}
\usepackage{latexsym}
\usepackage{esint}
\usepackage{authblk}
\usepackage{graphicx}
\usepackage{subfig}
\usepackage{graphics}
\graphicspath{{./figures/}}

\usepackage{mathtools}
\DeclarePairedDelimiter{\ceil}{\lceil}{\rceil}

\usepackage{xfp}
\makeatletter
\define@key{Gin}{sqrtofarea}{%
    \def\Gin@req@sizes{%
      \edef\Gin@scalex{\fpeval{#1/sqrt(\Gin@nat@height*\Gin@nat@width)}}%
      \let\Gin@scaley\Gin@exclamation
      \Gin@req@height\Gin@scalex\Gin@nat@height
      \Gin@req@width\Gin@scalex\Gin@nat@width
      }%
  \@tempswatrue}
\makeatother

\usepackage{afterpage}

\usepackage{titlesec}
\titlespacing*{\section} {0pt}{3ex plus 1ex minus .2ex}{1.5ex plus .2ex}
\titlespacing*{\subsection} {0pt}{2.5ex plus 1ex minus .2ex}{1.25ex plus .2ex}
\titlespacing*{\subsubsection}{0pt}{2.25ex plus 1ex minus .2ex}{1ex plus .2ex}
\titlespacing*{\paragraph} {0pt}{2.5ex plus 1ex minus .2ex}{1em}

\usepackage[labelfont=bf,tableposition=top,figureposition=bottom]{caption}

\DeclareCaptionLabelSeparator*{spaced}{\\[2ex]}
\captionsetup[table]{singlelinecheck=false,justification=justified,font=footnotesize,skip=5pt} 
\captionsetup[figure]{singlelinecheck=true,justification=justified,font=footnotesize,skip=5pt}

\usepackage{booktabs}
\usepackage{multirow}

\usepackage{rotating}

\usepackage{enumitem}
\usepackage{pdfpages}
\usepackage{comment}

\usepackage{hyperref}
\usepackage{url}
\usepackage{color}

\usepackage{float}
\usepackage{placeins}
\usepackage{lscape}
\usepackage[english]{babel}

\usepackage[lined,boxed,linesnumbered,commentsnumbered,ruled,vlined]{algorithm2e}
\SetAlFnt{\smaller}
\usepackage{floatpag}

\usepackage{lineno}
\usepackage[toc,page]{appendix}
\usepackage{titlesec}
\usepackage{pdfpages}
\usepackage{eso-pic}
\usepackage{everyshi}

\usepackage[noabbrev,capitalise]{cleveref}

\theoremstyle{plain}
\newtheorem{thm}{Theorem}
\theoremstyle{plain}
  \newtheorem{lem}[thm]{Lemma}  
\theoremstyle{plain}
\newtheorem{prop}[thm]{Proposition}                
\newtheorem{cor}[thm]{Corollary}

\theoremstyle{definition}
\newtheorem{Def}[thm]{Definition}

\theoremstyle{remark}
      
\newtheorem{rem}{Remark}

\theoremstyle{observ}
\newtheorem{observ}[thm]{Observation}

\def\minimize{\mathop{\rm minimize}}
\def\maximize{\mathop{\rm maximize}}
\def\argmin{\mathop{\rm arg\,min}}
\def\Area{\mathop{\rm \textbf{Area}}}
\def\AR{\mathop{\rm AR}}
\def\Vol{\mathop{\rm \textbf{Vol}}}
\def\diam{\mathop{\rm diam}}

\def\dom{\mathop{\rm \textbf{dom}}}

\newcommand{\st}{\textnormal{s.t.}}

\def\Conv{\mathop{\rm \textbf{Conv}}}

\def\dist{\mathop{\rm \textbf{dist}}}
\def\proj{\mathop{\rm \textbf{proj}}}
\def\relint{\mathop{\rm \textbf{relint}}}
\def\interior{\mathop{\rm \textbf{int}}}
\def\refl{\mathop{\rm refl}}
\def\rot{\mathop{\rm rot}}
\def\dwidth{\mathop{\rm dwidth}}

\newcommand{\etal}{{et al.}\xspace}

\providecommand{\keywords}[1]
{
  \small	
  \textbf{\textit{Keywords:}} #1
}

\usepackage{mathpazo}

\usepackage{setspace}

\singlespacing

\pagenumbering{arabic}

\setlength\parindent{2.5ex}
\parskip=1.5ex

\title{Largest Inscribed Rectangles in Geometric Convex Sets}

\author{Mehdi Behroozi}

\affil{\footnotesize Department of Mechanical and Industrial Engineering, Northeastern University\\m.behroozi@neu.edu}

\date{}

\begin{document}
\maketitle

\begin{abstract}
This paper considers the problem of finding maximum volume (axis-aligned) inscribed boxes in a compact convex set, defined by a finite number of convex inequalities, and presents optimization and geometric approaches for solving them. Several optimization models are developed that can be easily generalized to find other inscribed geometric shapes such as triangles, rhombi, and squares. To find the largest axis-aligned inscribed rectangles in the higher dimensions, an interior-point method algorithm is presented and analyzed. For 2-dimensional space, a parametrized optimization approach is developed to find the largest (axis-aligned) inscribed rectangles in convex sets. The optimization approach provides a uniform framework for solving a wide variety of relevant problems. Finally, two computational geometric $(1-\varepsilon)$--approximation algorithms with sub-linear running times are presented that improve the previous results. 

\end{abstract}

\keywords{Geometric Optimization; Computational Geometry; Convex Analysis; Approximation Algorithms; Maximum Volume Inscribed Box; Inner and Outer Shape Approximation}

\section{Introduction}
In the context of computational geometry and geometric optimization, working with some geometric shapes, in the practical sense, is usually easier than others. For example, compare working with a regular polygon (equiangular and equilateral) versus a non-regular polygon, a simple polygon (not self-intersecting) vis-{\`a}-vis a self-intersecting polygon, a monotone polygon compared to a non-monotone polygon, or a convex polygon versus a non-convex polygon. Similarly, in many applications of geometric optimization, it is common to approximate the value of an objective function over a convex polygonal region with its value over a simpler approximating shape. Two well-studied categories of such approximations are inner or outer approximations which refer to the cases where the simpler approximated shape is inscribed in the convex polygon or it encloses it. Approximating a convex polygon with L{\"o}wner--John ellipsoids \cite{john1948extremum,john2014extremum,henk2012lowner} or inner and outer boxes of P{\'o}lya and Szeg{\"o} \cite{polya1951isoperimetric} are the prime examples of such approximations. This paper studies the problem of approximating a convex set $C$, not necessarily a polygon, by its maximum volume inscribed rectangle $R$. Figure \ref{fig:LIR-image} illustrates an instance in two-dimensional space (2D), where $C$ is a convex polygon and the $R$ is the desired rectangle. Practical applications of this kind of approximation arise in the computer vision, apparel industry, footwear manufacturing, aluminum container production, steel/aluminum foil cutting, glass sheet cutting, sail manufacturing, carpet cutting, upholstery production, and many other industries. For example, in the apparel industry, the problem is to lay out small polygonal apparel pattern pieces in the unused parts of a rectangular sheet of cloth, called ``marker", after laying out the larger pattern pieces to minimize waste \cite{milenkovic1991automatic,milenkovic1992placement}. They compute the largest axis-aligned rectangle inside each trim piece (unused part) to work with a nicer geometric shape (see section 7 in \cite{daniels1993finding}). In terms of application, the considered problem in this paper is in the same spirit  
of some closely-related categories of geometric optimization problems such as packing, covering, and tiling --- generally focused on minimizing waste. These related problems in these categories include: cutting stock; knapsack; bin packing; guillotine; disk covering; polygon covering; kissing number; strip packing; square packing; squaring the square; squaring the plane; and, in three-dimensional space (3D), cubing the cube and tetrahedron packing. 
\begin{figure}[t]
\captionsetup{farskip=0pt}
 \begin{centering}
  \includegraphics[width=0.45\textwidth]{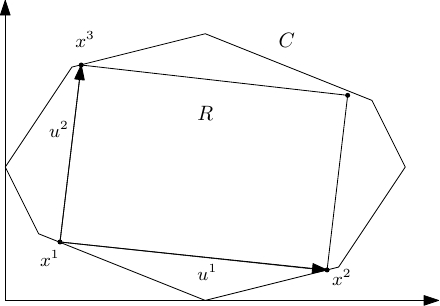}
  \caption{\protect
  An instance of the maximum volume inscribed box inside a convex set in 2D where the set $C$ is a convex polygon. The largest area rectangle $R$ can be determined either by three vertices $x^1,x^2,x^3$ or by the vertex $x^1$ and two vectors $u^1 = x^2 - x^1$ and $u^2 = x^3 - x^1$. 
  }
  \label{fig:LIR-image}
  \end{centering}
\end{figure}

\subsection*{Contributions and Organization of the Paper}
This paper considers the problem of finding the largest  
boxes inside a convex set defined by a finite number of convex inequalities. Throughout this paper, for simplicity and due to the frequency of use, the term ``rectangle'' is used for any box $R\in \mathbb{R}^d$. 
We are interested in finding 
the \emph{maximum volume/area inscribed rectangle} (MVIR / MAIR), the \emph{maximum volume/area axis-aligned inscribed rectangle} (MVAIR / MAAIR), and the \emph{maximum area axis-aligned inscribed rectangle in a fixed direction} (MAAIR-$\mbox{F}_{\mbox{dir}}$). The same acronyms are used for the problem of finding them. The main contributions, listed in presentation order, are:
\begin{enumerate}
\item \textbf{Optimal properties of the MAIR in convex polygons} are discussed and proved (\cref{subseq:MAIR_optProperties_Polygon}). These properties are stronger and better formulated than the current results and yet with simpler proofs. 

\item \textbf{Optimal properties of the MAIR in centrally symmetric and axially symmetric convex sets} are discussed and proved (\cref{subseq:MAIR_optProperties_CentrallySymmetric} and \cref{subseq:MAIR_optProperties_AxiallySymmetric}). To the best of our knowledge this is the first attempt in analyzing the properties of the MAIR in a convex set that is not necessarily a polygon. 

\item \textbf{Optimization models for the  
MVIR, and MVAIR problems} are developed (\cref{sec:OptimizationModels}). To our knowledge, this is the first comprehensive optimization approach to these problems in higher dimensions, which can bring new insights and open a new stream of research in this area. This approach is easily generalizable to other inscribed shapes such as triangles, rhombi, and squares.

\item \textbf{An interior-point method algorithm} is used to solve the MVAIR problem (\cref{subsec:SolvingMVAIR}). Full convergence analysis and computational complexity results are provided. The running time of our optimization-based $(1-\varepsilon)$--approximation algorithm is $\mathcal{O}((d^3+d^2 n) \sqrt{n}\log \frac{n}{\varepsilon})$. To our knowledge, this is the first such algorithm and analysis that is presented for this problem.

\item \textbf{A parametrized optimization approach} is used to solve the MAAIR-$\mbox{F}_{\mbox{dir}}$, MAAIR, and MAIR problems (\cref{subsec:SolvingMVIR}).  To our knowledge, this is the first parametrized optimization approach to solve these problems. Our algorithm for solving the MAAIR in any given direction, i.e., finding the largest rectangles aligned to any rotated axes without rotating them, in any convex set can find 
a $(1-\varepsilon)$--approximation solution in   
$\mathcal{O}(n\sqrt{n}\log \frac{n}{\varepsilon})$ time. Our $(1-\varepsilon)$--approximation algorithm for solving the MAIR runs in 
$\mathcal{O}(\varepsilon^{-1} n \sqrt{n}\log \frac{n}{\varepsilon})$ time.

\item \textbf{Two upper bounds on the aspect ratio of the MAIR} in a convex set are derived (\cref{subsubsec:Approx_MAIR}) that can be computed efficiently. We prove that the aspect ratio of the MAIR in a convex set is bounded from above by a multiple of aspect ratio of the convex set or a multiple of the aspect ratio of a minimum area enclosing rectangle in a particular direction. The bounds are quite loose as we just needed an upper bound in our proofs. However, to the best of our knowledge, this is the first such result and we expect that the tightness of these bounds could be improved.

\item \textbf{Two computational geometric algorithms} are developed for the 2-dimensional space (\cref{subsubsec:Family-Approx-MAIR}). When the convex set $C$ is a polygon defined by $n$ linear inequalities, our first geometric algorithm can find the MAIR in $\mathcal{O}(\varepsilon^{-1}\log n)$ time. Our second geometric algorithm finds the MAIR in any compact convex set in $\mathcal{O}(\varepsilon^{-1/2}T_C+\varepsilon^{-1}\log\varepsilon^{-1/2})$ time, where $T_C$ is the time needed to perform two different queries on $C$ due to \cite{ahn2006inscribing} and will be discussed in Section \ref{subsubsec:Family-Approx-MAIR}. For MAIR in a convex polygon this is $\mathcal{O}(\varepsilon^{-1/2}\log n+\varepsilon^{-1}\log\varepsilon^{-1/2})$. Both of these algorithms improve the previous best results by Cabello \etal \cite{cabello2016finding}, who present a deterministic $(1-\varepsilon)$--approximation algorithm with running time $\mathcal{O}(\varepsilon^{-3/2}+\varepsilon^{-1/2}T_C)$ for a convex set and $\mathcal{O}(\varepsilon^{-3/2}+\varepsilon^{-1/2}\log n)$ for a convex polygon.

\end{enumerate}

\section{Background}
\label{sec:background}
\subsection{Notational Conventions}
\label{subseq:NotationalConventions}
Throughout this paper, the following notational conventions are adopted: consider a compact convex set $C\in \mathbb{R}^d$, where a compact set is defined as a closed and bounded set in $\mathbb{R}^d$. The boundary of $C$ is shown with $\partial C$. The convex hull of a set of points $p^1,...,p^n$ is shown by $\Conv(p^1,...,p^n)$. The relative interior of a set $C$ is shown with $\relint(C)$. The line segment between points $A$ and $B$ is shown with $\overline{AB}$. The distance between two points $p^1$ and $p^2$ is shown by $\dist(p^1,p^2)$, while the distance between point $p$ and a set $S$ is defined as distance between $p$ and its projection on $S$, i.e., $\dist(p,\proj(p,S))$. The diameter of $C \in \mathbb{R}^d$ is denoted by $\diam(C)$ and its volume by $\Vol(C)$. The area of $C \in \mathbb{R}^2$ is shown by $\Area(C)$. For simplicity of notation in proofs $|C|$ is also used to represent the volume (area) of $C$ and similarly, $|\overline{AB}|$ is used to show the length of the line segment $\overline{AB}$.  In 2D, a rectangle $R$ with four corners at points $A,B,C,$ and $D$ is identified by $\square ABCD$ and a triangle with three corners at points $A,B,$ and $C$ is identified by $\triangle ABC$. 
The \emph{aspect ratio} $\AR$ of a rectangle $R \in \mathbb{R}^d$ with sides $s_1,...,s_d$ is defined as the ratio of the length of its longest side to the length of its shortest side, i.e., $\AR(R) = \max_i s_i / \min_i s_i$. In 2D, this is $\AR(R)=\max\left\{height/width, \: width/height\right\}$, 
i.e., $AR \geq 1$ and the equality holds for a square. We also define the aspect ratio of a non-rectangular convex set $C \subset \mathbb{R}^d$ as $\AR_{cvx} (C) = (\diam(C))^2 / \Vol(C)$. Finally, we denote by $\refl(\cdot,\ell)$ the reflected image of a point or a set under reflection at line $\ell$ and by $\rot(\cdot,o)$ the rotated image of a point or a set with respect to the center $o$.

\subsection{Related work}
\label{sec:related-work}
\subsubsection{Geometric Shape Approximation}
\label{subsubsec:GeoShapeApx}
The practice of approximating one geometric shape with another geometric shape is not restricted to approximating convex polygons with inscribed or circumscribed rectangles. DePano \etal \cite{depano1987finding} presented $\mathcal{O}(n^2)$ time algorithms for finding inscribed equilateral triangles and squares of maximum area inside convex polygons and an $\mathcal{O}(n^3)$ time algorithm for finding the largest inscribed equilateral triangle inside a general polygon, where $n$ is the number of vertices of the input polygon. Alt \etal \cite{alt1990approximation} presented several polynomial time algorithms for the problem of approximating convex polygons with rectangles, circles, and polygons with fewer edges, where the approximate shape is not restricted to be inscribed or circumscribing but the area of its symmetric difference with the polygon or the Hausdorff distance between their boundaries must be minimized. 
Zhu \cite{zhu1997approximating} expanded the results of \cite{alt1990approximation} to three dimensional convex polyhedrons. Jin proposed an $\mathcal{O}(n^2)$ time algorithm \cite{jin2017finding-parallelogram} and an $\mathcal{O}(n \log^2 n)$ time algorithm \cite{jin2018maximal-parallelogram} for finding all locally maximal area parallelograms inside a convex polygon. Recently, Keikha \etal \cite{keikha2017maximum} showed that a long-lasting linear-time algorithm for finding the maximal triangle inside a convex polygon proposed in 1979 by Dobkin and Snyder \cite{dobkin1979general} was, in fact, incorrect and then provided an $\mathcal{O}\left(n\log n\right)$ time algorithm for this problem. However, there exist other linear-time algorithms for this problem proposed by Chandran and Mount \cite{chandran1992parallel}, Kallus \cite{kallus2017linear}, and Jin \cite{jin2017maximal-triangle}. 
Shape approximation has not been limited to convex polygons either. For example, Chaudhuri \etal \cite{chaudhuri2003largest} developed an $\mathcal{O}(n^3)$ time algorithm for the problem of finding the largest empty rectangle among a point set. Marzeh \etal \cite{marzeh2019algorithm} developed a heuristic for finding the largest axis-aligned inscribed rectangle in a general polygon. 
Molano \etal \cite{molano2021finding} proposed an $\mathcal{O}(n^3)$ approximation algorithm for finding the largest volume parallelepiped of arbitrary orientation inscribed in a solid in $\mathbb{R}^3$.

\subsubsection{Approximation with Rectangles}
\label{subsubsec:ApxRectangles}
This paper is primarily focused on approximation of convex sets with ``rectangles''. There is a rich literature for this problem when the convex set is further restricted to be a polygon. The history of the problem of approximating polygons with inscribed (circumscribed) rectangles goes back to the famous problem posed in 1951 by P{\'o}lya and Szeg{\"o}  \cite[p.~110]{polya1951isoperimetric}. They showed that there exist homothetic rectangles $R_1$ and $R_2$ for a planar convex region $C$ with the homothety ratio 3 such that $R_1\subset C \subset R_2$, i.e., $R_2$ is a dilation of $R_1$ by a scaling factor 3. They also conjectured that the homothetic ratio is no more than 2. Radziszewski \cite{radziszewski1952probleme} presented a lower bound on the area of the largest inscribed rectangle $R$ inside a convex polygon $C$, as $\Area(R)\geq \frac{1}{2}\Area(C)$. 
Hadwiger \cite{hadwiger1955volumschatzung} showed that for a convex body $C \subset \mathbb{R}^d$, there exist side-parallel 
$d$-dimensional rectangles $R_1$ and $R_2$ such that $R_1\subset C \subset R_2$ and $\frac{1}{d!}\Vol(R_2) \leq \Vol(C) \leq d^d \Vol(R_1)$. 
Separately, Kosinski \cite{kosinski1957proof} proved that for a convex body $C \subset \mathbb{R}^d$, there exists a $d$-dimensional rectangle $R$ such that $C \subset R$ and  $\Vol(R) \leq d! \Vol(C)$.  
Gr{\"u}nbaum \cite[pp.~258-259]{grunbaum1963measures} showed that for a convex set $C \subset \mathbb{R}^2$ there exist parallelograms $L_1$ and $L_2$ such that $L_1 \subset C \subset L_2$ for which $L_1$ and $L_2$ are homothetic with the homothety ratio 2. Lassak \cite{lassak1993approximation} first proved P\'{o}lya and Szeg\"{o}'s conjecture and improved all the above results in 2D by showing that for any convex set $C \subset \mathbb{R}^2$ there are homothetic rectangles $R_1$ and $R_2$ for which $R_1$ is inscribed in $C$ and $R_2$ is circumscribed about $C$ with a positive homothety ratio of at most 2 and $\frac{1}{2}\Area(R_2)\leq \Area(C)\leq 2\Area(R_1)$. Schwarzkopf {\em \etal}~\cite{schwarzkopf1998approximation} obtained the same homothety ratio while presenting a more transparent proof. For a finer inner and outer approximation in 2D, Brinkhuis \cite{brinkhuis2016inner} showed that there exists a quadrangle $Q$ that its sides support $C$ at the vertices of a rectangle $R_1$ and at least three of its vertices lie on the boundary of a rectangle $R_2$ that is a dilation of $R_1$ with ratio 2.

\subsubsection{MAAIR, MAIR, and Higher Dimensions}
\label{subsubsec:MAAIR-MAIR-MVAIR-MVIR}
Amenta \cite{amenta1994bounded} proposed a convex programming model with exponential ($2^{d}n$) number of constraints to find the  MVAIR inside the intersection of a family of $n$ convex sets in $d$-dimensional space and proposed existing randomized algorithms to solve it in \emph{expected} $\mathcal{O}(n)$ time when the $n$ convex sets are either halfspaces (where the intersection forms a polytope) or of constant complexity.  In 2D, Daniels \etal \cite{daniels1993finding,daniels1997finding} proposed an $\mathcal{O}(n\alpha(n)\log^2 n)$ time algorithm for finding the MAAIR inside an $n$-vertex horizontally (vertically) convex polygon, where $\alpha(n)$ is the slowly growing inverse of Ackermann function \cite{ackermann1928hilbertschen}. For orthogonally convex polygons the algorithm performs in $\mathcal{O}(n\alpha(n))$ time. Fischer and H{\"o}ffgen \cite{fischer1994computing} developed an exact $\mathcal{O}(\log^2 n)$ time algorithm to compute the MAAIR in a convex $n$-gon. Alt \etal \cite{alt1995computing} developed an exact $\mathcal{O}(\log n)$ time algorithm for the same problem. This is the best known exact result for the MAAIR problem in a convex polygon.

For MAIR, Hall-Holt \etal \cite{hall2006finding} developed a polynomial-time approximation scheme (PTAS), which for any fixed $\varepsilon > 0$ computes the $(1-\varepsilon)$--approximation to the optimal solution of the maximum area $c$-\textit{fat} rectangle, i.e. a rectangle with aspect ratio bounded from above by $c$, inside a ``simple'' polygon in $\mathcal{O}\left(n\right)$ time.   
Knauer \etal \cite{knauer2012largest} showed that the fatness condition is unnecessary for approximating the optimal MAIR when the input polygon is convex. They developed a randomized $\mathcal{O}(\varepsilon^{-1}\log n)$ time $(1-\varepsilon)$--approximation algorithm that works with probability $t$ for any constant $t<1$ and a deterministic $\mathcal{O}(\varepsilon^{-2}\log n)$ time $(1-\varepsilon)$--approximation algorithm. Their algorithm uses Alt \etal's exact algorithm \cite{alt1995computing} for finding MAAIR as a subroutine. It appears that the analysis of the running time of this algorithm misses the fact that for using Alt \etal's algorithm \cite{alt1995computing} to find the MAAIR aligned to the $\varepsilon$-direction of the MAIR, one needs to rotate the axes or the polygon to that direction, which takes $\mathcal{O}(n)$ time (to make this a fair comparison we consider this part as pre-processing in one of our algorithms). This algorithm was the first $(1-\varepsilon)$--approximation algorithm for finding the MAIR inside a convex polygon. They have also sketched a straightforward exact algorithm that works in $\mathcal{O}(n^4)$ time. Cabello \etal \cite{cabello2016finding} improved these results by presenting an $\mathcal{O}(n^3)$ exact algorithm as well as a $(1-\varepsilon)$--approximation algorithm for this problem. Their approximation algorithm works for any convex set in running time $\mathcal{O}(\varepsilon^{-3/2}+\varepsilon^{-1/2}T_C)$, where $T_C$ is the time needed to perform two different queries on $C$ due to \cite{ahn2006inscribing} that will be discussed in Section \ref{subsubsec:Family-Approx-MAIR}. For a convex polygon, whose vertices are given as a sorted array or as a binary search tree, those queries can be done in $\mathcal{O}\left(\log n\right)$ time. Recently, Choi \etal \cite{choi2021maximum} presented an algorithm that computes the MAIR in a general polygon in $\mathcal{O}(n^3\log n)$ time. A variant of their algorithm finds the MAIR in a convex polygon in $\mathcal{O}(n^3)$ time.

\section{Optimal Properties of the MAIR}
\label{seq:MAIR_optProperties}

\subsection{Optimal Properties of the MAIR in a Convex Polygon}
\label{subseq:MAIR_optProperties_Polygon}
To understand the optimal inscribed rectangles better and before diving into the optimization models and algorithms in higher dimensions, we begin with the geometric properties of the traditional largest inscribed rectangles (MAIR) in a 2-dimensional convex polygon  $C\subset \mathbb{R}^2$.  

DePano \etal \cite{depano1987finding} proved that a maximum area equilateral triangle inscribed in $C$ must have at least one corner coincident with a vertex of $C$. They also proved that a maximum area square inscribed in $C$ either has at least one corner coincident with a vertex of $C$ (i.e., a vertex corner), or all four corners lie on the interior of edges of $C$. Schwarzkopf \etal \cite{schwarzkopf1998approximation} showed that the maximum area rectangle inscribed in $C$ has two diagonal vertices that lie on the boundary of $C$. Knauer \etal \cite{knauer2012largest} mentioned without proof that the largest inscribed rectangle inside $C$ is either a square with two opposite corners coincident with two vertices of $C$ or has at least three non-vertex corners on the boundary of $C$. The latter statement seems to be a misstatement or incorrect. Schlipf \cite{schlipf2014stabbing} proved the former statement of \cite{knauer2012largest} and also proved that a rectangle with three non-vertex corners on the boundary of $C$ cannot be optimal. We strengthen these results and provide a much simpler proof.

\begin{observ}
\label{observ:diagonal}
Consider a family of axis-aligned rectangles in $\mathbb{R}^2$ with a diagonal of fixed size $\ell$ that makes angle $\theta$ with the $x$-axis with $0 \leq \theta \leq \pi/4$. Among all such rectangles generated by this diagonal, the rectangle with $\theta=\pi/4$ has the largest area. Similarly, for rectangles with  $\pi/4 \leq \theta \leq \pi/2$ the area is maximized when $\theta=\pi/4$. In other words, the square has the maximum area among the rectangles of the same diagonal size.  
\end{observ}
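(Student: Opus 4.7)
The plan is to parametrize the family of rectangles by the angle $\theta$ and express the area as an explicit function of $\theta$, which reduces the claim to a one-variable optimization that is essentially immediate.

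First I would fix an axis-aligned rectangle whose diagonal has length $\ell$ and makes an angle $\theta$ with the $x$-axis. Since the rectangle is axis-aligned, the horizontal side equals the horizontal projection of the diagonal and the vertical side equals its vertical projection; that is, the width is $w=\ell\cos\theta$ and the height is $h=\ell\sin\theta$. Hence the area is
\[
A(\theta) = wh = \ell^{2}\sin\theta\cos\theta = \tfrac{1}{2}\ell^{2}\sin(2\theta).
\]

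Next I would read off the optimum. On $[0,\pi/4]$ the function $\sin(2\theta)$ is strictly increasing and attains its maximum at $\theta=\pi/4$; on $[\pi/4,\pi/2]$ it is strictly decreasing and again attains its maximum at $\theta=\pi/4$. In both cases the maximal value is $A(\pi/4) = \ell^{2}/2$, which by the equalities $w=\ell\cos(\pi/4)=\ell/\sqrt{2}=\ell\sin(\pi/4)=h$ corresponds to a square. This proves both halves of the observation simultaneously and identifies the optimizer as the square.

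There is no real obstacle here; the statement is essentially the fact that $\sin(2\theta)$ peaks at $\theta=\pi/4$, equivalent to the AM--GM bound $wh\leq \tfrac{1}{2}(w^{2}+h^{2}) = \tfrac{1}{2}\ell^{2}$ with equality iff $w=h$. The only thing worth double-checking is the parametrization: the axis-alignment assumption is what makes $w$ and $h$ equal to the projections of the diagonal, so that the diagonal length constraint $w^{2}+h^{2}=\ell^{2}$ is encoded directly by $(w,h) = (\ell\cos\theta,\ell\sin\theta)$.
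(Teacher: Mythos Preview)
Your proof is correct. The paper states this as an observation without proof, so there is nothing to compare against; your parametrization $A(\theta)=\tfrac{1}{2}\ell^{2}\sin(2\theta)$ is exactly the standard justification and is entirely adequate.
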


\begin{Def} A corner of an inscribed rectangle inside $C$ is one of the following types: \\
(a) A \textit{vertex-corner} which coincides with a vertex of $C$. \\
(b) An \textit{edge-corner} which lies on a non-vertex point of 
an edge (boundary) of $C$. \\
(c) An \textit{interior-corner} which lies strictly inside $C$.
\end{Def}

\begin{thm}
\label{thm:optPropertiesPolygon}
The MAIR inside a convex polygon $C$ must satisfy at least one of the following conditions: 
\begin{description}
\item [Case 1:] It has no interior-corner (i.e., the four corners are on $\partial C$, the boundary of $C$).
\item [Case 2:] It has one interior-corner and at least one vertex-corner adjacent to the interior-corner. 
\item [Case 3:] It has two diagonal interior-corners, two diagonal vertex-corners, and the MAIR is a square. 
\end{description}
\end{thm}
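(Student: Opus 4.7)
The plan is a case analysis on the number $k$ of interior-corners of the MAIR: in each configuration that does not match Cases~1--3, I will construct a local perturbation of the rectangle that is still inscribed in $C$ and has strictly larger area, contradicting optimality. The key tool throughout is \cref{observ:diagonal}: among rectangles sharing a fixed diagonal, the square has the largest area.

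I would first dispose of the easy reductions. If $k=4$, a uniform scaling about the rectangle's center stays inscribed (by compactness of $C$) and strictly enlarges area. If $k=3$, a small translation along the diagonal away from the unique boundary-corner reduces to $k=4$. If $k=2$ with two \emph{adjacent} interior-corners, the side joining them lies strictly inside $C$; translating that side perpendicularly outward by $\varepsilon$ extends the two orthogonal sides, strictly increases area, and keeps the two former interior-corners interior for small $\varepsilon$.

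For $k=2$ with two \emph{diagonal} interior-corners, label them $P, R$ and the two boundary-corners (which form the other diagonal) $Q, S$. Since $P, R$ are interior, we may change the rectangle's shape by moving $P, R$ along the Thales circle of diameter $QS$ while keeping $Q, S$ fixed; by \cref{observ:diagonal}, optimality forces the square orientation on $QS$. If either of $Q, S$ were merely an edge-corner, sliding it along its supporting edge strictly increases $|QS|$ (the squared distance to the other endpoint is a strictly convex quadratic on the edge with no interior maximum), and the enlarged square on the new diagonal remains inscribed because $P, R$ perturb only slightly and stay interior. Hence both $Q, S$ are vertex-corners and the MAIR is the square on diagonal $QS$, yielding Case~3.

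The main obstacle is $k=1$. Let $P$ be the unique interior-corner, $Q, S$ its two adjacent corners (hence diagonal to each other in the rectangle), and $R$ the corner diagonal to $P$. Suppose for contradiction that neither $Q$ nor $S$ is a vertex-corner. I would again apply \cref{observ:diagonal} to the diagonal $QS$: if the rectangle is not already the square on $QS$, then moving $P$ (and the diametrically opposite $R$) along the Thales circle of diameter $QS$ toward the square orientation strictly increases area, and $P$ stays interior by small perturbation. The delicate step is the motion of $R\in\partial C$. The tangent cone to the convex polygon at $R$ is a closed half-plane (if $R$ is an edge-corner) or a closed wedge (if $R$ is a vertex-corner), so at least one of the two directions of motion along the Thales circle keeps $R$ in $C$. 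Combining this rotation with a slide of the edge-corner $Q$ (or $S$) along its edge --- or, if $R$ is a vertex-corner, additionally perturbing $R$ onto an adjacent edge --- yields enough perturbation freedom to select a direction that is simultaneously area-increasing and $R$-feasible. Once the rectangle is reduced to the square on $QS$, a final slide of $Q$ along $e_Q$ strictly increases $|QS|$ and produces a larger inscribed square, contradicting optimality. The main technical care lies in verifying feasibility of the perturbation at $R$ across these sub-configurations, where the hypothesis that both $Q$ and $S$ are non-vertex edge-corners is essential for supplying the needed perturbation freedom.
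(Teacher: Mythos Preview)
Your treatment of $k\ge 2$ interior-corners is correct and, for the $k=2$ diagonal case, is arguably cleaner than the paper's: you use \cref{observ:diagonal} to force the square shape first and then slide an edge-corner $Q$ along $e_Q$ to strictly lengthen the diagonal, whereas the paper handles the ``not both vertex-corners'' subcases by ad~hoc rotations and then gives a separate constructive argument (building a rotated rectangle $R'$) for the ``two vertex-corners, not a square'' subcase. Both routes work; yours unifies them through the Thales circle.

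The gap is in the $k=1$ case, and you have essentially flagged it yourself. The Thales rotation on diameter $QS$ increases area only in \emph{one} specific direction (toward the square), while the feasibility of the boundary corner $R$ is guaranteed only in \emph{some} direction along the circle. These need not coincide, and your appeal to ``combining this rotation with a slide of the edge-corner $Q$ (or $S$)'' to reconcile them is not an argument: sliding $Q$ changes the diagonal $QS$, the Thales circle, and the required position of $R$ simultaneously, and you have not shown that this two-parameter family contains an area-increasing, $R$-feasible direction. The same issue recurs in your final step: once you are at the square on $QS$ and slide $Q$ to enlarge $|QS|$, the corresponding corner $R'$ of the larger square is forced to a specific location near $R\in\partial C$ and may exit $C$. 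With $P,R$ both interior (your $k=2$ argument) this is harmless; with $R\in\partial C$ it is the whole difficulty.

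The paper avoids this by abandoning the Thales mechanism for $k=1$. When all three boundary corners are edge-corners, it rotates the rectangle about the intersection of two of the \emph{edge-normals} (so those two corners move tangentially to their edges and slip strictly inside), and observes that one of the two rotation senses also pushes the third edge-corner inward; this yields a rectangle with $\ge 2$ interior-corners, hence improvable. When the corner diagonal to the interior-corner is a vertex-corner, it uses a cone-of-directions translation argument (or a rotation about the remaining edge-corner) to again produce extra interior-corners. In both subcases the paper never needs to move a boundary corner along an area-increasing curve; it first gains interior-corners and only then enlarges. That decoupling is what your $k=1$ argument is missing.
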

\begin{proof}
Consider an inscribed rectangle $R$ with vertices $a,b,c$ and $d$ in a c.c.w. order and dimensions $w\times h$. We prove by contradiction that if none of the conditions hold for $R$, it cannot be the MAIR and that in each of these three cases the specified conditions must hold. The cases in the theorem are organized based on the number of interior-corners of the rectangle. So we present the proof in the same order. 

If $R$ has  four or three interior-corners, we can easily expand both its length and width. This rules out all cases not specified in the theorem.  
We prove Case 1 by providing a counterexample to the opposite statement that no MAIR could have four corners on the boundary of $C$. The counterexample is simply provided by considering $C$ to be a triangle.

Suppose $R$ has one interior-corner (Case 2), no vertex-corner, and three edge-corners $a,b$ and $c$ lying on edges $e_a, e_b$ and $e_c$. Let $l_a$ be the perpendicular line to $e_a$ at $a$. Similarly define $l_b$ and $l_c$ for $b$ and $c$. Let $p_1$ be the intersection of $l_a$ and $l_b$ and $p_2$ be the intersection of $l_b$ and $l_c$. A small rotation of $C$  in either directions around $p_1$ (or $p_2$), will put $a$ and $b$ (or $b$ and $c$) in the interior of $C$. At least one of these rotations (c.w. or c.c.w) will do the same for $c$ (or $a$), while maintaining the fourth corner in the interior of $C$. Hence, $R$ cannot be the MAIR. This proves that if the MAIR has one interior point it should have at least one vertex point as this rotation argument would not hold for a vertex-corner. It remains to prove that this vertex-corner should be adjacent to the interior-corner. Assume they are diagonal to each other. Without loss of generality (w.l.o.g.) let $a$ the lower left corner of $R$ be the vertex-corner and $c$ be the interior-corner; see Figure \ref{fig:oneInteriorCornerProof}. If either $b$ or $d$ is a vertex corner the case is proved. So let $b$ and $d$ be both edge-corners. Let $e_a^1$ and $e_a^2$ denote the edges of $C$ at the vertex $a$ such that $e_a^1$ lies below $R$ and $e_a^2$ lies to the left of $R$. Also, let $e_b$ and $e_d$ be the edges of $C$ tangent to the corners $b$ and $d$, respectively. Let $K_a$ be a convex cone of directions (vectors) created by the directions of $e_a^1$ and $e_a^2$. By convexity of $C$ and perpendicularity of the edges of $R$, the direction vectors of $e_b$ and $e_d$ must lie inside $K_a$. If the angle between $e_d$ and $e_a^1$ is greater than (or equal to) the angle between $e_b$ and $e_a^1$, then a small enough translation of $R$ in the direction of $e_b$, will keep $b$ on $e_b$, will keep $c$ an interior-corner, will make $d$ an interior-corner (or slide it along the edge $e_d$), and will make $a$ an interior-corner. This means $R$ cannot be the MAIR. It is crucial to note that the translation could be done with respect to any direction that lies between the direction of $e_b$ and $e_d$ in the cone $K_a$. If the angle between $e_d$ and $e_a^1$ is smaller than the angle between $e_b$ and $e_a^1$, there will be no feasible direction for translation. However, in this case $R$ can be rotated around $d$ in a c.c.w direction, putting both $a$ and $b$ strictly inside $C$, while keeping $c$ as an interior-corner. This rotation is possible since $c$ is an interior-corner. This completes the proof of Case 2.

\begin{figure}[t]
\begin{centering}
\subfloat[\label{fig:Translation-feasible}]{ \begin{centering}\includegraphics[width=0.3\columnwidth]{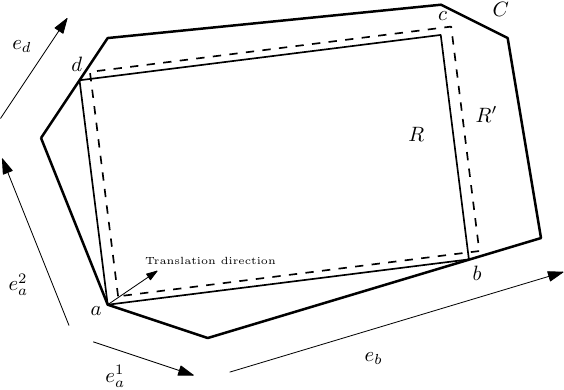}\par\end{centering} }
\qquad{} \qquad{}
\subfloat[\label{fig:Cone_feasible}]{ \begin{centering}\includegraphics[width=0.18\columnwidth]{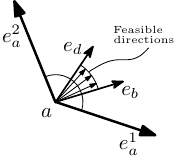}\par\end{centering} }
\par\end{centering}

\begin{centering}
\vspace{-5pt}
\subfloat[\label{fig:Translation-infeasible}]{ \begin{centering}\includegraphics[width=0.3\columnwidth]{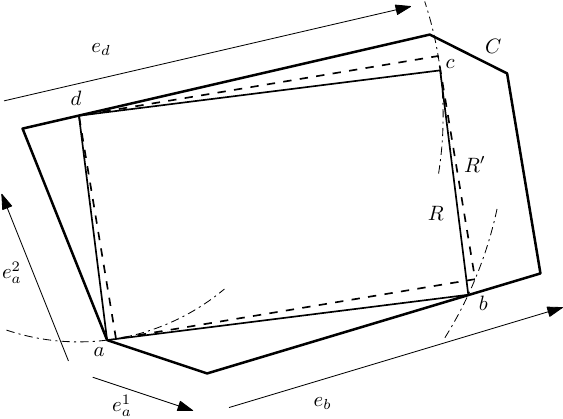}\par\end{centering} }
\qquad{} \qquad{}
\subfloat[\label{fig:Cone_infeasible}]{ \begin{centering}\includegraphics[width=0.18\columnwidth]{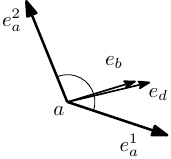}\par\end{centering} }
\par\end{centering}

\caption{\label{fig:oneInteriorCornerProof} The illustration of the proof of the adjacency part in the second case (two vertex-corners) in Theorem \ref{thm:optPropertiesPolygon}. Rectangle $R$ with one interior-corner which is diagonal to the vertex-corner. Figure (\ref{fig:Translation-feasible}) shows the case when feasible directions exists as shown in Figure (\ref{fig:Cone_feasible}) and Figure (\ref{fig:Translation-infeasible}) shows the case where there is no feasible direction  for translation of $R$ as seen in Figure (\ref{fig:Cone_infeasible}) but $R$ can be rotated around corner $d$. The dashed rectangle shows the translated or rotated rectangle $R'$ that has more than one interior-point. \vspace{-5pt}}
\end{figure}

\begin{figure}[t]
  \centering
  \includegraphics[width=0.35\textwidth]{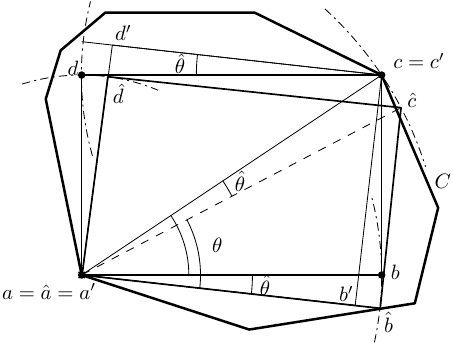}
  \caption{The illustration of the proof of the third case (two vertex-corners) in Theorem \ref{thm:optPropertiesPolygon}. By constructing the angle $\hat{\theta}$ and then the new rectangle $R'=a'b'c'd'$, we can show that a non-square rectangle $R$ with only two vertex-corners cannot be optimal. \vspace{-5pt}}
  \label{fig:2vertex-corners-proof6}
\end{figure}

Assume $R$ has two interior-corners (Case 3) that are adjacent to each other, then we can expand $R$ in the direction perpendicular to the edge connecting these two corners. Now suppose $R$ has two diagonal interior-corners, say $b$ and $d$, and two edge-corners, say $a$ and $c$, touching edges $e_a$ and $e_c$. Rotating $C$ slightly around $a$ (or $c$) either c.w. or c.c.w. will put $c$ (or $a$) in the interior and maintain $b$ and $d$ strictly inside the polygon and hence we can enlarge it. The direction of rotation (c.w or c.c.w.) is toward increasing the obtuse angle between the diagonal $\overline{ac}$ and $e_c$ (or $\overline{ac}$ and $e_a$). If it is a right angle then both directions work. Now suppose $R$ has two diagonal interior-corners and there is just one vertex-corner, a slight rotation of $C$ around the vertex-corner will put the fourth corner (an edge-corner) and keep the two interior-corners strictly inside $C$, making it expandable. This proves that if $R$ has two interior corners they should be diagonal and the other two vertices should be diagonal vertex-corners

Now, suppose $R$ has two diagonal interior-corners, say $b$ and $d$, and two vertex-corners, say $a$ and $c$, but it is not a square, the proof is a bit more complicated. Let $\theta= \angle bac$; see Figure \ref{fig:2vertex-corners-proof6}. Let $\theta_1$ be the maximum angle for c.w. rotation of $R$ around $a$ such that $b$ stays in $C$. Similarly $\theta_2$ be the maximum angle for c.w. rotation of $R$ around $c$ such that $d$ stays in $C$. Since $R$ is not a square, we have either $w>h$ or $w < h$. Without loss of generality assume $w>h$ and thus $\theta=\arctan (h/w) <\pi/4$. Choose $\hat{\theta}>0$ such that $\hat{\theta}<\min\{(\frac{\pi}{4}-\theta),\theta_1,\theta_2\}$. Notice that if $R$ is a square we cannot find such $\hat{\theta}$. Now without loss of generality assume $\theta_1<\theta_2$. Rotate $R$ in a c.w. direction around $a$ as much as angle $\hat{\theta}$. Let $\hat{R}=\square \hat{a}\hat{b}\hat{c}\hat{d}$ denote the rotated rectangle. Then clearly $a=\hat{a}$, $\hat{b}$ lies either strictly inside or on an edge of $C$, $\hat{c}$ is possibly outside $C$, and $\hat{d}$ is inside $C$. Rotate the segment $\overline{cd}$ in a c.w. direction as much as angle $\hat{\theta}$. The whole rotated segment, call it $r_{cd}$, will stay inside $C$. Extend $\overline{\hat{a}\hat{d}}$ to touch $r_{cd}$. Call the intersection point $d'$ and let $a'=a$. It is easy to see that $\overline{a'd'}$ is perpendicular to $r_{cd}$. Draw a line from $c$ parallel to $\overline{a'd'}$ to touch $\overline{\hat{a}\hat{b}}$ at $b'$ and let $c'=c$. It can also be observed that $\overline{b'c'}$ is perpendicular to $\overline{a'b'}$. Then the new rectangle $R'=\square a'b'c'd'$ is inscribed inside $C$, has two diagonal vertex-corners, has the same diagonal as $R$, and the $\angle{b'a'c'}$ is greater than $\theta$. Therefore, we have $\Area(R')>\Area(R)$ by Observation \ref{observ:diagonal}. This proves Case 3.

Finally, none of the three conditions in the theorem is redundant since for each one of them we can easily construct a polygon that gives us a MAIR satisfying that condition.
\end{proof}

The following corollaries are direct results of Theorem \ref{thm:optPropertiesPolygon}.
\begin{cor}
\label{cor:diagonalCorners}
The MAIR has at least two diagonal corners on $\partial C$. Unless both of these corners are vertex-corners, at least one other corner has to lie on $\partial C$. 
\end{cor}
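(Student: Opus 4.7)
The plan is to derive both assertions directly from the trichotomy of Theorem \ref{thm:optPropertiesPolygon}. I would label the corners of the MAIR $a,b,c,d$ in counter-clockwise order so that $(a,c)$ and $(b,d)$ are the two diagonal pairs, and then walk through the three cases of the theorem, in each case exhibiting a diagonal pair on $\partial C$ and (when needed) an extra boundary corner.

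First I would dispatch Case~1: since the MAIR has no interior-corner, all four corners lie on $\partial C$, so both diagonal pairs are on $\partial C$ and in particular at least two other corners beyond any chosen diagonal pair are on $\partial C$, giving both claims trivially. For Case~2, there is exactly one interior-corner; after relabeling, I may assume it is $a$. Then $b,c,d$ all lie on $\partial C$, so the diagonal pair $(b,d)$ is on $\partial C$, proving the first assertion. For the second assertion, note that by Case~2 at least one of the two corners adjacent to the interior-corner (namely $b$ or $d$) is a vertex-corner; in any event the remaining corner $c\in\partial C$ serves as the required ``other'' boundary corner, so the condition holds regardless of whether both of $b,d$ happen to be vertex-corners.

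Finally I would handle Case~3, where the MAIR is a square with two diagonal interior-corners and two diagonal vertex-corners; after relabeling, $b,d$ are the interior-corners and $a,c$ are the vertex-corners. Then the diagonal pair $(a,c)$ lies on $\partial C$ and moreover both members are vertex-corners, so the ``unless'' clause of the second assertion is triggered and no further boundary corner is required. Combining the three cases covers every possibility allowed by Theorem \ref{thm:optPropertiesPolygon}, proving the corollary.

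The argument is essentially bookkeeping, so I do not expect any genuine obstacle; the only point requiring a little care is parsing the second assertion correctly, namely recognizing that the ``at least one other corner on $\partial C$'' clause is automatic in Cases~1 and~2 and is explicitly waived in Case~3 because the hypothesis of the ``unless'' clause is met there.
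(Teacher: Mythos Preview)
Your proposal is correct and is precisely the intended argument: the paper states the corollary as a ``direct result of Theorem~\ref{thm:optPropertiesPolygon}'' without giving any explicit proof, and your case-by-case verification across the three configurations of that theorem is the natural way to unpack that remark. There is nothing to add.
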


\begin{cor}
\label{cor:adjacentCorners}
Each interior corner, if any exists, has two adjacent corners on $\partial C$.
\end{cor}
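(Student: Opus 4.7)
The plan is to derive \cref{cor:adjacentCorners} directly from \cref{thm:optPropertiesPolygon} by case analysis on the three possibilities listed there. Since \cref{thm:optPropertiesPolygon} asserts that any MAIR must fall into one of Case~1, Case~2, or Case~3, it suffices to check the corollary in each case separately. The verification is essentially bookkeeping about which of the four corners are interior versus on $\partial C$, so no new geometric argument is needed --- the content of the corollary is already packaged inside the theorem.

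First I would dispose of Case~1: there are no interior-corners, so the statement is vacuously true. Next, for Case~2 there is exactly one interior-corner; its two adjacent corners, together with the diagonally opposite corner, account for all three of the remaining corners of $R$. By the theorem these remaining corners are either vertex-corners or edge-corners, hence all lie on $\partial C$, and in particular so do the two adjacent ones. Finally, for Case~3 there are exactly two interior-corners, and the theorem specifies that they are diagonal to each other. Therefore the two corners adjacent to any interior-corner are precisely the two vertex-corners, which are on $\partial C$ by definition.

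There is no real obstacle here; the only point to double-check is that \cref{thm:optPropertiesPolygon} is exhaustive, which is explicitly claimed in its statement. One stylistic choice is whether to merge Cases~1--3 into a single ``three remaining corners'' argument or to present them individually; I would present them individually for clarity, since the geometric configuration of the adjacent corners differs (edge-corners versus vertex-corners) across the cases, and the reader sees immediately how each case of the theorem feeds into the corollary.
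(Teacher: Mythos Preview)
Your proposal is correct and matches the paper's approach: the paper simply declares \cref{cor:adjacentCorners} a ``direct result of Theorem~\ref{thm:optPropertiesPolygon}'' without writing out any argument, and your case analysis is exactly the routine verification one would supply if asked to spell out that claim. There is nothing to add or correct.
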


\subsection{Optimal Properties of the MAIR in a Centrally Symmetric Convex Set}
\label{subseq:MAIR_optProperties_CentrallySymmetric}
In this section we first define a centrally symmetric convex set and then present a theorem that summarizes a basic property of the MAIR in such sets.
\begin{Def}
A convex set $C$ is centrally symmetric if there exists a point $o\in C$ such that $C$ is centrally symmetric to itself with respect to $o$; the point $o$ is called the center of symmetry for $C$.
\end{Def}
\begin{thm}
\label{thm:optPropertiesCentrallySymmetricConvex}
Let $C$ be a centrally symmetric convex compact set with respect to the center $o$ and let $R_{opt}$ be the MAIR inside $C$. Then, the center of $R_{opt}$, i.e., the intersection of its diagonals, must lie at $o$.
\end{thm}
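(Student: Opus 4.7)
The plan is to exploit the central symmetry of $C$ by a Minkowski-averaging argument that converts any candidate MAIR into one centered at $o$. Let $R_{opt}$ denote the MAIR and let $c$ be its center. The first step is to construct a companion rectangle $R':=\refl(R_{opt},o)$. Since $C$ is centrally symmetric about $o$, the point reflection $x\mapsto 2o-x$ is a bijection of $C$ onto itself, so $R'\subset C$. Moreover a rectangle is itself centrally symmetric about its own center, so this reflection acts on $R_{opt}$ as a pure translation by $2(o-c)$; consequently $R'$ is a parallel rectangle with the same side lengths, the same orientation, and the same area as $R_{opt}$, now centered at $2o-c$. In particular $R'$ is itself an MAIR.

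The second step is to average the two rectangles by the Minkowski combination $\tilde R:=\tfrac12 R_{opt}+\tfrac12 R'=\{\tfrac12(p+p'):p\in R_{opt},\ p'\in R'\}$. Every point of $\tilde R$ is the midpoint of a pair of points drawn from $R_{opt}\subset C$ and $R'\subset C$, so convexity of $C$ immediately gives $\tilde R\subset C$. Writing $R_{opt}=c+R_0$ and $R'=(2o-c)+R_0$, where $R_0$ is the common ``centered'' rectangle aligned with $R_{opt}$ and symmetric about the origin, one checks in one line that $\tfrac12 R_{opt}+\tfrac12 R'=\tfrac12(c+(2o-c))+R_0=o+R_0$. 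Thus $\tilde R$ is a rectangle of the same dimensions and orientation as $R_{opt}$, inscribed in $C$, and centered precisely at $o$.

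To finish, note that $\tilde R$ has the same area as $R_{opt}$, so it is itself an MAIR with center at $o$, which is the content of the theorem. If the MAIR is unique then $R_{opt}=\tilde R$ and $c=o$ directly; more generally the same Minkowski argument produces a one-parameter family $R_\lambda:=R_{opt}+2\lambda(o-c)$, $\lambda\in[0,1]$, of translates of $R_{opt}$ each inscribed in $C$ and each optimal, with $R_{1/2}$ the canonical centered representative.

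The only step that requires any real verification is the Minkowski-sum identity $\tfrac12 R_{opt}+\tfrac12 R'=o+R_0$ for two parallel rectangles of equal side lengths, which reduces to a routine coordinate computation in the frame aligned to the sides of $R_{opt}$ (using that $\tfrac12 R_0+\tfrac12 R_0=R_0$ for any origin-symmetric rectangle). Everything else follows from the definitions of centrally symmetric set, convex combination, and inscribed rectangle, so the main subtlety is conceptual rather than technical: the argument shows that central symmetry forces the optimum to ``want'' to sit at $o$, even when multiple optima exist.
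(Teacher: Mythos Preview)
Your Minkowski-averaging argument is clean and correctly establishes that \emph{some} MAIR is centered at $o$: the reflection $R'=2o-R_{opt}$ lies in $C$ by central symmetry, and the average $\tfrac12 R_{opt}+\tfrac12 R'$ is indeed a congruent, parallel rectangle centered at $o$ and contained in $C$ by convexity. But the theorem as stated asserts that \emph{every} MAIR has its center at $o$, and you do not close this gap. Your final paragraph in fact concedes the opposite reading: in the non-unique case you exhibit a one-parameter family $R_\lambda=R_{opt}+2\lambda(o-c)$ of optima, only one of which ($\lambda=1/2$) is centered. If that family genuinely consisted of distinct MAIRs, it would be a counterexample to the theorem rather than a proof of it. The averaging step can never yield $|\tilde R|>|R_{opt}|$, so it cannot by itself rule out off-center optima.

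The paper's proof targets the stronger claim directly: assuming a rectangle $R$ with center $x\neq o$, it forms $H=\Conv(R,\refl(R,o))\subset C$ and argues that $R$ is not even the MAIR inside the polygon $H$. When $\overrightarrow{xo}$ is parallel to an edge of $R$, $H$ is itself a rectangle strictly larger than $R$. Otherwise $H$ is a hexagon, and a small translate of $R$ toward $o$ inside $H$ has the same area as $R$ but fails every case of Theorem~\ref{thm:optPropertiesPolygon}; hence $H$ (and therefore $C$) contains a rectangle strictly larger than $R$. This strict-improvement step is exactly what your approach is missing. If you want to salvage the averaging route, you would still need to invoke Theorem~\ref{thm:optPropertiesPolygon} (or some other strict-inequality argument) on the hexagon $H$ to exclude off-center optima; the identity $\tfrac12 R_0+\tfrac12 R_0=R_0$ only gives equality of areas.
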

\begin{proof}
\begin{figure}[h]
  \centering
  \includegraphics[width=0.55\textwidth]{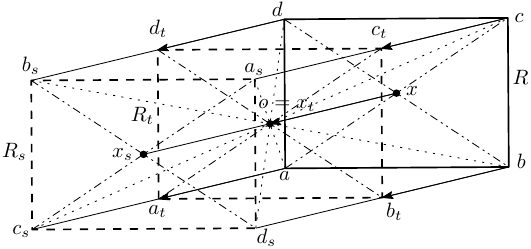}
 \protect\caption{An illustration of the proof of the Theorem \ref{thm:optPropertiesCentrallySymmetricConvex}. Rectangle $R_s$ is the symmetric counterpart of $R$ with respect to the center $o$ and $R_t = R + \protect\overrightarrow{xo}$ is the translation of $R$ in direction $\protect\overrightarrow{xo}$ and is centered at point $x_t=o$. Rectangle $R_t \subset \Conv(R,R_s) \subset C$ has the same size as $R$. \vspace{-5pt}}
  \label{fig:Symmetric-Centrally}
\end{figure}
Consider $R = \square abcd \subset C$ with its center at a point $x \neq o$. We first prove that the direction $\overrightarrow{xo}$ is a feasible translation direction for $R$ and for any such rectangle there exists a rectangle with the same size centered at $o$. Since $C$ is centrally symmetric with respect to $x$, let $R_s=\square a_s b_s c_s d_s \subset C$ be the symmetric counterpart of $R$ with respect to the center $o$, as shown in Figure \ref{fig:Symmetric-Centrally}. We must have $\overline{ac} \parallel \overline{a_s c_s}$ and $\overline{bd} \parallel \overline{b_s d_s}$ with $|\overline{ac}| = |\overline{a_s c_s}| = |\overline{bd}| = |\overline{b_s d_s}|$. Therefore, we have $\overline {x x_s} \parallel \overline{a c_s} \parallel \overline{c a_s} \parallel \overline{b d_s} \parallel \overline{d b_s}$,  and $|\overline {x x_s}| = |\overline{a c_s}|  = |\overline{c a_s}| = |\overline{b d_s}|  = |\overline{d b_s}|$. Let $R_t$ be the rectangle that has its center at $x_t=o=(x+x_s)/2$ and its corners $a_t,b_t,c_t,d_t$ at the midpoint of segments $\overline{a c_s}, \overline{b d_s}, \overline{c a_s}, \overline{d b_s}$, respectively. Note that $R_t \subset \Conv(R,R_s) \subset C$ and $|R_t| = |R|$ as $R_t$ is simply a translation of $R$ in direction $\overrightarrow{xo}$ and its corners are sliding along line segments that are entirely inside $C$ due to the convexity of $C$. 

Now, if $\overline{ox}$ is aligned with an edge of $R$ then $R' = \Conv(R,R_s)$ is a rectangle and $|R'| > |R|$, since $x \neq o$. If $\overline{ox}$ is not aligned with an edge of $R$, then $\Conv(R,R_s)\supset R$ is an irregular hexagon in which $R$ has one interior-corner and three vertex-corners. Assume w.l.o.g. that $a$ is the interior corner.  Translating $R$ slightly in the direction $\overrightarrow{xo}$ will keep $a$ as interior-corner, makes $c$ (its diagonal opposite) an interior corner, and makes $b$ and $d$ edge-corners. Therefore, by Theorem \ref{thm:optPropertiesPolygon}, $R$ cannot be the MAIR. 
\end{proof}

\subsection{Optimal Properties of the MAIR in an Axially Symmetric Convex Set}
\label{subseq:MAIR_optProperties_AxiallySymmetric}
Another major category of symmetric sets are axially symmetric sets. We first define an axially symmetric convex sets and then present a theorem that summarizes some of the properties of the MAIR in such sets.
\begin{Def}
A convex set $C$ is axially symmetric if there exists a line $\ell$ with $\ell\cap C\neq\emptyset$ such that $C$ is symmetric to itself  relative to $\ell$; the line $\ell$ is said to be the axis of symmetry for $C$.
\end{Def}
\begin{thm}
\label{thm:optPropertiesAxialSymmetricConvex}
Let $C$ be an axially symmetric convex compact set, $\ell$ be its line of axial symmetry with $\textrm{SymAxis}(C) = \ell \cap C$, and $R_{opt}$ be the MAIR inside $C$.  
Then, $R_{opt}$ must satisfy  
the following conditions: 
\begin{enumerate}
\item We must have $\Lambda=\ell \cap R_{opt} \neq \emptyset$, and the set $\Lambda$ is not a singleton or an edge of $R_{opt}$.
\item Unless $R_{opt}$ has a corner on one end point of $\textrm{SymAxis}(C)$, at least one corner of $R_{opt}$ must lie on $\partial C$ in each side of $\ell$.
\item If $R_{opt}$ is a square it cannot have three corners strictly on one side of $\ell$.
\item If $R_{opt}$ has two corners (a diagonal) on $\textrm{SymAxis}(C)$, it is either a square or a rectangle that makes an angle $\pi/6 \leq \alpha < \pi/4$ with $\ell$.
\end{enumerate}
\end{thm}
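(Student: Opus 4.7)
All four parts exploit axial symmetry through the reflected inscribed rectangle $R_s=\refl(R_{opt},\ell)$, combined with the local optimality certificates of Theorem~\ref{thm:optPropertiesPolygon} and Corollaries~\ref{cor:diagonalCorners}--\ref{cor:adjacentCorners}; the proofs of the latter extend from polygons to general compact convex sets by replacing rotations around polygon vertices with perturbations along boundary tangent lines.

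For Part~1, assume $R_{opt}$ lies strictly on one side of $\ell$. Each point of the translate of $R_{opt}$ moved perpendicular to $\ell$ toward $\ell$ by $t>0$ is a convex combination of a point of $R_{opt}$ and its mirror in $R_s$, so the translate is contained in $\Conv(R_{opt}\cup R_s)\subseteq C$. Pushing $t$ to the extreme yields a same-area inscribed rectangle touching $\ell$. If the touching set is an edge, its union with its own reflection is a strictly larger inscribed rectangle, contradicting optimality. If the touching is a single corner $v$, a small rotation of the rectangle around $v$---one of the two directions being feasible by the mirror-symmetric structure of the tangent cone of $C$ at $v$---produces a same-area inscribed rectangle with three interior corners, violating Corollary~\ref{cor:adjacentCorners}. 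The identical edge/corner dichotomy rules out $\Lambda$ being an edge or a singleton of $R_{opt}$ as well. Part~2 then follows: Part~1 forces each diagonal of $R_{opt}$ to have endpoints on opposite sides of $\ell$, and Corollary~\ref{cor:diagonalCorners} forces some diagonal pair onto $\partial C$, placing at least one boundary corner on each side of $\ell$ unless one diagonal endpoint lies at $\ell\cap\partial C$---i.e., at an endpoint of $\textrm{SymAxis}(C)$.

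For Part~3, suppose $R_{opt}$ is a square with a $3$--$1$ corner split across $\ell$; by Part~2 the lone minority-side corner $A$ lies on $\partial C$, and by Corollary~\ref{cor:diagonalCorners} either $A$'s diagonal partner or the other diagonal pair also lies on $\partial C$. The reflected square $R_s$ is a second MAIR with the reverse $1$--$3$ split. The plan is to exploit the mismatch between $R_{opt}$ and $R_s\neq R_{opt}$: a small rotation of $R_{opt}$ around $A$ in the direction in which the tangent cone of $C$ at $A$ allows it (the cone's mirror image through $\ell$ is the tangent cone at $\pi_\ell(A)\in\partial C$, so feasibility of one rotational direction can be traded across $\ell$) moves the three majority-side corners into $\interior(C)$; the resulting square of the same area has more than one interior corner, contradicting the extension of Theorem~\ref{thm:optPropertiesPolygon} Cases 1--3.

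Part~4 is the main obstacle. Since $A,C$ are diagonal corners on $\textrm{SymAxis}(C)$, the diagonal $\overline{AC}$ lies on $\ell$ and the area equals $|AC|^2\sin\alpha\cos\alpha$, maximized in $\alpha$ at $\pi/4$ by Observation~\ref{observ:diagonal}; the upper bound $\alpha<\pi/4$ is immediate from the non-square hypothesis. Optimality forces the corner $B$---and by axial symmetry $D$ as well---to lie on $\partial C$ so as to block the area-increasing rotation of $B$ along the Thales circle of diameter $\overline{AC}$ toward $\pi/4$. Combined with the constraint that the tangent cone of $C$ at $A$ (symmetric about $\ell$ because $A\in\ell\cap\partial C$) contains both edges of $R_{opt}$ emanating from $A$, which make angles $\alpha$ and $\pi/2-\alpha$ with $\ell$ on opposite sides, the blocking condition yields an angular inequality saturated precisely at $\alpha=\pi/6$. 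The delicate step is pinning down that extremal configuration: at $\alpha=\pi/6$, $R_{opt}$ together with $R_s$ fill $2\pi/3$ of the tangent cone at $A$, so no rotation of the Thales diameter can increase area, and it is in this final calculation that axial symmetry plays its essential role.
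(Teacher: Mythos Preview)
Your Part~2 argument has a concrete error. You claim that ``Part~1 forces each diagonal of $R_{opt}$ to have endpoints on opposite sides of $\ell$,'' but this is false: Part~1 only guarantees that $\ell$ meets the interior of $R_{opt}$, which is perfectly compatible with a $3$--$1$ split of corners across $\ell$ (indeed Part~3 is devoted to ruling this out \emph{only} for squares). In a $3$--$1$ split one diagonal has both endpoints on the majority side, so even after invoking Corollary~\ref{cor:diagonalCorners} you cannot conclude that a boundary corner sits on each side. The paper's argument for Part~2 is different and does not go through the diagonal corollary: it simply observes that if no corner on, say, the lower side of $\ell$ lies on $\partial C$, then $R_{opt}$ can be translated orthogonally toward $\ell$ (the upper corners stay in $C$ by the convex-hull-with-reflection trick you already used in Part~1, the lower corners were interior to begin with), and after that translation the formerly boundary-touching upper corners become interior and the rectangle can be enlarged.

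Your Part~4 plan does not actually produce the bound $\alpha\ge\pi/6$. You correctly identify that $B$ (and its mirror image) must lie on $\partial C$ to block the Thales-circle rotation toward $\alpha=\pi/4$, but the subsequent ``angular inequality saturated precisely at $\alpha=\pi/6$'' coming from the tangent cone at $A$ is not substantiated: the half-angle of that cone is an unknown parameter of $C$, and nothing in your sketch ties it to $\pi/6$. (You also silently assume $A\in\partial C$; if $A$ lies in the interior of $\textrm{SymAxis}(C)$ there is no tangent-cone constraint at all.) The paper's proof of Part~4 takes a completely different and much more direct route: with the diagonal $\overline{bd}$ on $\ell$ and $a,c\in\partial C$ on opposite sides, it drops perpendiculars to $\ell$ through $a$ and $c$ and uses axial symmetry to land on boundary points $a'',c''$ at the same distance on the far side; the axis-aligned rectangle $R'=\square aa''cc''\subset C$ satisfies
\[
\frac{|R'|}{|R|} \;=\; 2(\cos^2\alpha-\sin^2\alpha)\;=\;2\cos 2\alpha,
\]
which exceeds $1$ exactly when $\alpha<\pi/6$. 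That single computation is the whole content of the lower bound, and your tangent-cone heuristic does not recover it.

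Parts~1 and~3 are closer in spirit to the paper, though your rotation steps (``one direction is feasible by the mirror-symmetric structure of the tangent cone'') are more asserted than argued; the paper handles Part~3 by a careful case split on whether the diagonal $\overline{bd}$ is parallel to $\ell$, using explicit bisector lines to force $\ell$ into a slab that already separates the corners $2$--$2$, rather than by a rotation.
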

\begin{proof}
We first prove the intersection condition. Consider rectangle $R \subset C$ and assume $R=R_{opt}$ and that $\Lambda = \ell \cap R$ is an empty set, a singleton, or an edge of $R$. Under any of these three conditions, we must have $R$ completely on one side of $\ell$. Assume, w.l.o.g., that $\ell$ is aligned with the $x$-axis and consider rectangle $R = \square abcd  \subset C$, in c.c.w. order with $a$ being the lower left corner, lies above $\ell$ as shown in Figure \ref{fig:AxialSymmetricCase}. Let the points $a',b',c',d'$ be the reflection of corners $a,b,c,d$ with respect to $\ell$. In other words, $R'_{l} = \square a'b'c'd' =\refl(R,\ell) \subset C$ is the reflection of $R$ at $\ell$ and has the same size as $R$. Also let $p,q$ be the left and right end of $\textrm{SymAxis}(C)$ and define the polygon $P= \Conv(p,q,a,b,c,d,a',b',c',d')$. Note that $P \subset C$ and $|\textrm{SymAxis}(C)|=|\overline{pq}| \geq |\overline{bd}|=|\overline{ac}|$. If $R$ has four corners on $\partial C$ then $R$ must be aligned to $\ell$, due to the convexity of $C$, $P$ must be a rectangle, and $R$ can be extended in the direction orthogonal to $\ell$ due to the axial symmetry of $C$, contradicting the assumption. If only two corners of $R$ touch $\partial C$ and they are adjacent corners, then $R$ can be easily enlarged by extension. Also, if these two corners are diagonal $R$ can be enlarged by first rotating it around one of these two corners that has a larger distance to $\ell$ and then extending it. Hence, $R$ must have three of its corners on $\partial C$  and must have an angle with $\ell$, making either $a$ or $b$ an interior-corner. Let $a$ be the interior-corner (w.l.o.g.). Due to the axial symmetry and convexity of $C$, the vectors $\overrightarrow{dp}$ and $\overrightarrow{bq}$ must be either parallel or diverging. Any direction within the convex cone defined by these two vectors would be a translation direction that could put at least one more corner of $R$ in the interior of $C$. Therefore, $R\neq R_{opt}$. Note that in this case the other two corners would become either interior-corners or edge-corners on $\partial P$, which also shows (by Theorem \ref{thm:optPropertiesPolygon}) that $R$ cannot be the MAIR.  
If the set $\Lambda$ is a singleton, it must be an interior-corner of $R$, e.g., $a$, and the same translation argument applies. If $\Lambda$ is equal to an edge of $R$ then  $R$ is aligned to $\ell$ and it can be extended in the direction orthogonal to $\ell$. This proves that line $\ell$ crosses $\partial R_{opt}$ in two points, i.e., $\ell \cap \interior R_{opt} \neq \emptyset$ (1st condition). 

\begin{figure}[t]
  \centering
  \includegraphics[width=0.35\textwidth]{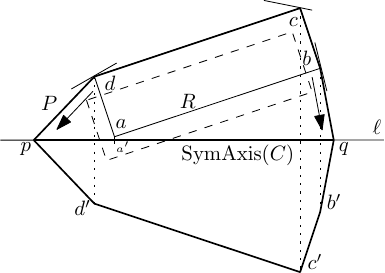}
  \caption{An illustration of the proof of the first condition in Theorem \ref{thm:optPropertiesAxialSymmetricConvex}. The line $\ell$, is the axis of symmetry of the convex set $C$ with $\textrm{SymAxis}(C) = \ell \cap C$. Rectangle $R$ lies on one side of $\ell$ and touches the boundary of $C$ at three of its corners for which a tangent line of $C$ at that point is drawn. The polygon $P$ is the convex hull of $R$, $\textrm{SymAxis}(C)$, and the reflected image of $R$ with respect to $\ell$. The dashed rectangle shows $R$ when translated in a feasible direction.}
  \label{fig:AxialSymmetricCase}
\end{figure}

For the second condition, by the intersection condition, we know that at least one corner of $R_{opt}$ lies on each side of $\ell$. If $R_{opt}$ has no corner on the endpoints of $\textrm{SymAxis}(C)$, we must have at least one corner of $R_{opt}$ on $\partial C$ in each side of $\ell$, otherwise the translation and extension are possible and $R_{opt}$ could be enlarged.

Now we prove that if $R_{opt}$ is a square, it cannot have three corners ``strictly'' on one side of $\ell$. Assume $\ell$ crosses the boundary of  square $R \subset C$ in two points and that $R$ has three corners strictly on one side of $\ell$ and the last corner is strictly on the other side of $\ell$. Then $R$ must have an angle $\alpha$ with $\ell$. Without loss of generality assume $a$ is the corner of $R$ below $\ell$ and let $P=\Conv(p,q,R,\refl(R,\ell))$. If $R$ has three corners on one side of $\ell$ and its diagonal $\overline{bd}$ is parallel to the line $\ell$, then $a$ cannot be on the $\partial C$, contradicting the second condition; see Figure (\ref{fig:SymmetricCase7-SquareDiagonalAligned}). In this case, a small enough translation of $R$ in the orthogonal direction to $\ell$ would make $c$ an interior-corner while keeping $a$ in the interior of the convex polygon $P\subset C$. The corners $b$ and $d$ would be either interior-corners or edge-corners with respect to $P$. Hence, by Theorem \ref{thm:optPropertiesPolygon}, $R$ cannot be the optimal rectangle inside $P$ and therefore cannot be the MAIR in $C$.

Now assume $R$ is a square that has three corners on one side of $\ell$ and makes angle $0< \alpha < \pi/4$ with $\ell$, i.e., its diagonal $\overline{bd}$ is not parallel to $\ell$; see Figure (\ref{fig:SymmetricCase7-SquareDiagonalNotAligned}). Let $a'$ and $c'$ be the points that vertical lines (perpendicular to $\ell$) going through $a$ and $c$ intersect the edges $\overline{cd}$ and $\overline{ab}$, respectively. Clearly, the quadrangle $aa'cc'$ is a parallelogram. Let $\ell_1,\ell_2$ and $\ell_3$ be lines parallel to $\ell = \ell_0$ such that $\ell_1$ bisects $\overline{aa'}$ at the point $e$, $\ell_2$ goes through the center point $f$ of the parallelogram, and $\ell_3$ bisects $\overline{cc'}$ at the point $g$. The line $\ell_i,\; i=0,1,2,3$ partitions the segment $\overline{aa'}$ into two pieces with lengths $d_1^i$ in the bottom and $d_2^i$ at the top and the segment $\overline{cc'}$ into two pieces with lengths $d_3^i$ on the top and $d_4^i$ in the bottom. Obviously, $d_1^i+d_2^i=d_3^i+d_4^i,\; \forall i$. For $\ell_1$, we have $d_1^1=d_2^1$ and $d_3^1 > d_4^1$. For $\ell_3$, we have $d_1^3 > d_2^3$ and $d_3^3 = d_4^3$. In the case of $\ell_2$, we have $d_1^2=d_3^2 > d_2^2=d_4^2$. Note that since the diagonal $\overline{bd}$ is not parallel to $\ell$, the lines $\ell_1,\ell_2$ and $\ell_3$ are distinguished, otherwise the three lines would coincide with $\overline{bd}$. Also, by Thales's basic proportionality theorem,  
the bisector of edges $\overline{ad}$ and $\overline{bc}$ goes through the points $e,f$ and $g$. Since $R$ is a square we can show that all three lines will cross $\overline{ad}$ and $\overline{bc}$ and thus have two corners of $R$ on each side. For $\ell_1$ not crossing $\overline{bc}$, we must have $d_1^1 < |\overline{ab}| \sin \alpha$, while we have the opposite as $d_1^1=d_2^1=|\overline{aa'}|/2 = |\overline{ad}|/(2\cos \alpha) = |\overline{ab}|/(2\cos \alpha) > |\overline{ab}| \sin \alpha$, as we have $2\sin\alpha \cos \alpha = \sin 2\alpha <1$ for $0< \alpha < \pi/4$. The proof for $\ell_3$ is symmetric. 
By the second condition we must have $a \in \partial C$ (otherwise, $R$ could be shifted in the direction orthogonal to $\ell$, making $c$ an interior-corner). Therefore, we must have $d_1^0 \geq d_2^0$ and $d_3^0\geq d_4^0$ due to the convexity and axial symmetry of $C$. Hence, $\ell$ must lie in the slab defined by $\ell_1$ and $\ell_3$ contradicting the assumption of having three corners on one side. This concludes the proof of the third condition.

\begin{figure}[t]
\begin{centering}
\subfloat[\label{fig:SymmetricCase7-SquareDiagonalAligned}]{ \begin{centering}\includegraphics[width=0.35\columnwidth]{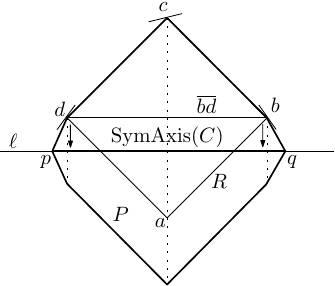}\par\end{centering} }
\quad{}
\subfloat[\label{fig:SymmetricCase7-SquareDiagonalNotAligned}]{ \begin{centering}\includegraphics[width=0.35\columnwidth]{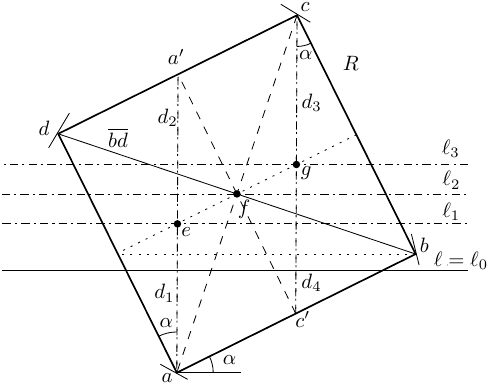}\par\end{centering} }
\par\end{centering}

\caption{\label{fig:SymmetricCase_Square} An illustration of the proof of the third condition in Theorem \ref{thm:optPropertiesAxialSymmetricConvex}. Square $R$ in Figure (\ref{fig:SymmetricCase7-SquareDiagonalAligned}) with three corner on one side of axis of symmetry $\ell$ and a diagonal parallel to $\ell$ cannot be the MAIR as it can be translated in the direction orthogonal to $\ell$ leaving $R$ with two interior-corners $a,c$ and two edge- or interior-corners $b,d$. Figure (\ref{fig:SymmetricCase7-SquareDiagonalNotAligned}) shows the case where the diagonal of $R$ is not aligned with $\ell$. Since $R$ is a square, $\ell$ must lie within the slab defined by $\ell_1$ and $\ell_3$, the bisectors of line segments $\overline{aa'}$ and $\overline{cc'}$. }
\end{figure}

We prove  
 the fourth condition by first showing the existence of such a square by construction and then proving that the same does not hold for a rectangle with $0<\alpha<\pi/6$. Consider a square $R=\square abcd$ with its diagonal $\overline{bd}$ aligned to the $x$-axis. Let $C$ be the rhombus obtained by stretching $R$ slightly via extending the diagonal $\overline{bd}$ for some $\delta > 0$.  For sufficiently small $\delta$, the square $R$ would be the MAIR inside $C$ with its diagonal on the $\textrm{SymAxis}(C)$ and one vertex-corner on the boundary of the polygon $C$ on each side of the symmetry axis.

\begin{figure}[t]
  \centering
  \includegraphics[width=0.35\textwidth]{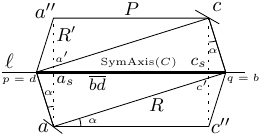}
  \caption{An illustration of the proof of fourth condition in Theorem \ref{thm:optPropertiesAxialSymmetricConvex}. Rectangle $R$ has the $\textrm{SymAxis}(C)$ as  its diagonal and  has one vertex on the boundary of $C$ on each side of the symmetry axis $\ell$. The rectangle $R'=\square aa''cc''$ has a larger area than $R$. \vspace{-5pt}}
  \label{fig:SymmetricCase7-RectangleDiagonalAligned}
\end{figure}
We now prove that a rectangle $R=\square abcd$ that has its diagonal on the $\textrm{SymAxis}(C)$ and makes angle $0<\alpha<\pi/6$ with $\ell$ cannot be the MAIR. Assume the most restrictive case where the diagonal $\overline{bd}$ is equal to the $\textrm{SymAxis}(C)$ and $a,c \in \partial C$; see Figure \ref{fig:SymmetricCase7-RectangleDiagonalAligned}.  Let $a_s$, $a'$, and $a''$ be the points that a vertical line going through the corner $a$ crosses the $\textrm{SymAxis}(C)$, the edge $cd$, and $\partial C$ on the opposite side of $\ell$, respectively. Similarly define the points $c_s$, $c'$, and $c''$. Due to the symmetry of $C$ with respect to  $\ell$ and the fact that $a,c \in \partial C$ we must have $\dist(a'',a_s) = \dist(a,a_s) \geq \dist(a_s,a')$ and $\dist(c'',c_s) = \dist(c,c_s) \geq \dist(c_s,c')$.  We prove that the rectangle $R'=\square aa''cc''$ satisfies $|R'| > |R|$, when $0<\alpha<\pi/6$. As $R'$ is aligned with $\ell$, we have 
\[
|R'| = |\overline{ac''}|\times |\overline{aa''}| = (|\overline{ab}|-|\overline{ad}|\tan \alpha)\cos \alpha \times 2|\overline{ad}| \cos \alpha = 2|\overline{ab}|\times|\overline{ad}| \cos^2 \alpha - 2 |\overline{ad}|^2\sin \alpha\cos \alpha 
\]
Therefore,
\[
\frac{|R'|}{|R|} = \frac{2|\overline{ab}|\times|\overline{ad}| \cos^2 \alpha - 2 |\overline{ab}|\times |\overline{ad}| \tan \alpha \sin \alpha\cos \alpha}{|\overline{ab}|\times|\overline{ad}|} = 2 (\cos^2 \alpha - \sin^2 \alpha) = 2 \cos 2 \alpha \geq 1 \, , \; \mbox{for } 0<\alpha<\pi/6
\] 
This concludes the proof of the fourth condition.
\end{proof}

\begin{cor}
\label{cor:TwoAxis_Centrally}
If a compact convex set $C \in\mathbb{R}^2$ has two perpendicular axes of symmetry $\ell_1$ and $\ell_2$ that cross each other at point $o$, then it is centrally symmetric with respect to the center $o$. 
\end{cor}
\begin{proof}
This is because a rotation with angle $\pi$ of any point $p \in C$ is achieved by two consecutive reflections with respect to lines $\ell_1$ and $\ell_2$, i.e., $\rot(p,\pi) = \refl ((\refl(p, \ell_1), \ell_2)$ and we will have $\rot(p,\pi) \in C$. This can be also seen by the fact that $(p+p')/2 = (p + (-p+2o))/2 = o$.
\end{proof}

\begin{cor}
\label{cor:TwoAxis_Centrally_Diam}
In a compact convex set $C \in\mathbb{R}^2$ that is centrally symmetric with respect to a center $o$, the $\diam(C)$ goes through $o$. Furthermore, if $C$ has two perpendicular axes of symmetry, the $\diam(C)$ may not be necessarily aligned to these axes. 
\end{cor}
\begin{proof}
Trivial.
\end{proof}

\begin{cor}
\label{thm:optPropertiesTwoOrthogonalAxialSymmetricConvex}
Let $C$ be an axially symmetric compact convex set with two perpendicular axes of symmetry $\ell_1$ and $\ell_2$ and let $R_{opt}$ be the MAIR inside $C$. Also, assume the quadrants created by the intersection of $\ell_1$ and $\ell_2$ are numbered from 1 to 4 in a c.c.w order starting from top right quadrant and let $\textrm{SymAxis}_1(C) = \ell_1 \cap C$ and $\textrm{SymAxis}_2(C) = \ell_2 \cap C$. Then, $R_{opt}$ must satisfy the following conditions:
\begin{enumerate}
\item The center of $R_{opt}$, i.e., the intersection of its diagonals, must lie at the intersection of $\ell_1$ and $\ell_2$.
\item Unless $R_{opt}$ has a corner on one end point of $\textrm{SymAxis}_1(C)$ or $\textrm{SymAxis}_2(C)$, at least one corner of $R_{opt}$ must lie on $\partial C$ in each side of $\ell_1$ and $\ell_2$.
\item  If $R_{opt}$ is a square, it either has its diagonals on $\textrm{SymAxis}_1(C)$ and $\textrm{SymAxis}_2(C)$ or has one corner strictly in each quadrant. It must also have at least two diagonally opposite corners on $\partial C$. 
\item If $R_{opt}$ has two corners (a diagonal) on $\textrm{SymAxis}_i(C),\; i=1,2$, it is either a square or a rectangle that makes an angle $\pi/6 \leq \alpha < \pi/4$ with $\ell_i$.
\end{enumerate}
\end{cor}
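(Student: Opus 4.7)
The plan is to derive each of the four conditions of the corollary by combining the results already established for the centrally symmetric case (\cref{thm:optPropertiesCentrallySymmetricConvex}) and the axially symmetric case (\cref{thm:optPropertiesAxialSymmetricConvex}), using \cref{observ:TwoAxis_Centrally} to bridge between them.

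First I would observe that by \cref{observ:TwoAxis_Centrally}, the set $C$ is centrally symmetric with respect to $o = \ell_1 \cap \ell_2$. Then condition (1) follows immediately by applying \cref{thm:optPropertiesCentrallySymmetricConvex}: the center of $R_{opt}$ must coincide with $o$, i.e., with the intersection of $\ell_1$ and $\ell_2$. Condition (2) follows by applying the second condition of \cref{thm:optPropertiesAxialSymmetricConvex} twice, once for $\ell_1$ and once for $\ell_2$: unless $R_{opt}$ has a corner on an endpoint of the respective symmetry axis segment, it must have at least one corner on $\partial C$ on each side of that line.

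For condition (3), I would first use condition (1) to place the center of the square $R_{opt}$ at $o$. Combined with the third condition of \cref{thm:optPropertiesAxialSymmetricConvex} applied to both $\ell_1$ and $\ell_2$, the square cannot have three corners strictly on one side of either $\ell_1$ or $\ell_2$. Combined with the centering at $o$ and the four-fold symmetry this forces, the only two possibilities are: either the two diagonals of the square lie along $\ell_1$ and $\ell_2$ (so all four corners lie on the symmetry axes), or each quadrant contains exactly one corner. The ``two diagonally opposite corners on $\partial C$'' part then follows from \cref{thm:optPropertiesPolygon} via \cref{cor:diagonalCorners}: every MAIR must have at least two diagonally opposite corners on $\partial C$, and by central symmetry with respect to $o$, if one corner is on $\partial C$ then so is its diagonal opposite.

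Condition (4) is essentially a direct transcription of the fourth condition of \cref{thm:optPropertiesAxialSymmetricConvex} applied to either of the two symmetry lines $\ell_i$. The main obstacle, and the only spot that needs a careful argument rather than invocation, is the dichotomy in condition (3) between ``diagonals on the axes'' and ``one corner strictly in each quadrant''; the key point there is that the center lies at $o$ by condition (1), so if any corner lies on $\ell_1$ (or $\ell_2$) then by the central symmetry its diagonal opposite also lies on the same axis, and the only way to avoid having three corners on one side of the other axis while still having the center at $o$ is for the second diagonal to lie on that other axis as well.
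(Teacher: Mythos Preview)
Your approach is essentially the same as the paper's: condition (1) via \cref{observ:TwoAxis_Centrally} and \cref{thm:optPropertiesCentrallySymmetricConvex}, conditions (2) and (4) by applying the corresponding parts of \cref{thm:optPropertiesAxialSymmetricConvex} to each axis, and condition (3) by combining the centering with the axial results. One minor caution: \cref{cor:diagonalCorners} is stated only for convex polygons, so your invocation of it for a general convex set $C$ is not quite licensed --- but your alternative central-symmetry argument (if one corner lies on $\partial C$ then so does its diagonal opposite) already suffices, since any MAIR in a compact convex set must have at least one boundary corner.
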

\begin{proof}
The first condition is proved by Corollary \ref{cor:TwoAxis_Centrally} and Theorem \ref{thm:optPropertiesCentrallySymmetricConvex}. The second condition is a direct application of the 2nd condition in Theorem \ref{thm:optPropertiesAxialSymmetricConvex} to both $\ell_1$ and $\ell_2$. The third condition follows from applying 2nd, 3rd, and 4th conditions in Theorem \ref{thm:optPropertiesAxialSymmetricConvex} to both $\ell_1$ and $\ell_2$. The fourth condition also follows from the 4th conditions in Theorem \ref{thm:optPropertiesAxialSymmetricConvex}.
\end{proof}

\section{Optimization Models for MVIR and MVAIR}
\label{sec:OptimizationModels}
Consider a compact convex set $C\in \mathbb{R}^d$ mathematically expressible in a finite number of convex inequalities. For example, one could consider $C$ to be a polytope 
defined by $C=\{x \in \mathbb{R}^d \;|\;Px\leq b \, , \; P\in \mathbb{R}^{n\times d}\, , \; b\in \mathbb{R}^n \}$ 
or intersection of $d$-ellipsoids defined as $C=\{x \in \mathbb{R}^d \;|\; x^T A_i x +2b_i^T x + c_i \leq 0\, , \; A_i \succ 0\,, \; i=1,...,n\}$, where $A\succ0$ denotes a positive definite matrix. The goal is to formulate the problems of finding the MVIR and the MVAIR inside $C$ as optimization problems. 

\subsection{MVIR as an Optimization Problem}
\label{subsec:MVIR-optModel}
To make the derivation of the optimization model for the MVIR problem more clear, first, let's consider the $d$-dimensional maximum volume inscribed parallelotope in $C$. Let $x^1,x^2,..., x^{d+1}$ be a set of $d+1$ affinely independent vertices of the parallelotope and put $U=[u^1\;u^2\;...\; u^d]$, where $u^i=x^{i+1}-x^1,\; i=1,...,d$. Note that columns of $U$ are linearly independent and form a basis for $\mathbb{R}^d$. See Figure \ref{fig:LIR-image} for a special case in 2D, where $C$ is a compact convex polygon and the desired parallelogram is a rectangle. 

The following definitions are required for constructing the optimization model.

\begin{Def}
A vector $x=(x_1,...,x_d)$ in $\mathbb{R}^d$ is lexicographically positive if its first non-zero coordinate is positive. 
\end{Def}

\begin{Def}
Two points $x,y\in \mathbb{R}^d$ are in lexicographic order if $y-x$ is lexicographically positive. 
\end{Def}

Let's label the $2^d$ vertices of the parallelotope in a lexicographic order with binary vectors so that $q^1=(0,...,0)^T\equiv x^1,\; q^2=(0,0,...,1)^T, ..., q^{(2^d)}=(1,...,1)^T\equiv x^1+\sum_{i=1}^d u^i$. Using this labeling, the $k$th vertex of the parallelotope can be shown by $x^1+ Uq^k$. Thus, the problem of finding the maximum volume inscribed parallelotope (MVIP) in $C$ can be formulated as the following optimization model:
\begin{eqnarray}
\label{mod:max-Parallelotope}
   \maximize_{x^1,\cdots,x^{d+1}} &
    \left| \det \left( \begin{array}{ccccc} x^1 & x^2 & \cdots & x^d & x^{d+1} \\ 1 & 1 & \cdots & 1 & 1 \end{array} \right) \right|  & \quad \st  \\
  & & \nonumber \\
    &  u^i \; = \; x^{i+1}-x^1  \; \;, & \quad i=1,...,d \nonumber \\
    &  x^1+\sum_{i=1}^d q_i^k u^i \; \in \;  C \; \;, &  \quad k=1,...,2^d \nonumber
\end{eqnarray}
where $|\cdot|$ is the absolute value and $\det(\cdot)$ denotes the determinant of a matrix. The objective function calculates the volume of the parallelotope. The first constrain is an auxiliary constraint to define $u^i$ vectors, while the second constraint ensures that all vertices of the parallelotope are inside the convex set $C$. This is, in general, a non-convex optimization problem with an exponential number of constraints and for general $d$ difficult to solve.

Problem (\ref{mod:max-Parallelotope}) can be solved by solving two optimization problems for the positive and negative values of the objective function after removing the absolute value and then choosing the one with the best outcome. The absolute value can also be removed using the epigraph form and rewriting the problem as
\begin{eqnarray}
   & \maximize_{z,x^1,\cdots,x^{d+1}} \qquad \qquad z   \qquad\qquad\qquad\qquad\qquad& \quad \st  \nonumber \\
    & & \nonumber \\
    & z + (-1)^j \det \left( \begin{array}{ccccc} x^1 & x^2 & \cdots & x^d & x^{d+1} \\ 1 & 1 & \cdots & 1 & 1 \end{array} \right) \; \leq \; 0 \; \; , & \quad j=1,2\nonumber \\
    &  u^i \; = \; x^{i+1}-x^1  \; \;, & \quad i=1,...,d \nonumber \\
    &  x^1+\sum_{i=1}^d q_i^k u^i \; \in \;  C \; \;, &  \quad k=1,...,2^d \nonumber
\end{eqnarray}
To keep the model closer to the concept and easier to follow, the model (\ref{mod:max-Parallelotope}) is used for further analysis. 

Now we can add various shape constraints to find different types of largest inscribed bodies. For example, if we require $u^i=u^j,\; \forall i\neq j$, we will obtain the largest inscribed rhombus and if we further add perpendicularity constraints we get the largest inscribed square. Also, we can modify the model by just defining $d+1$ vertices to obtain the largest triangles that we skip here for brevity. Since we are interested in a parallelotope that is a box (or hypercube), additional orthogonality constraints should be imposed. Hence, the problem of finding MVIR in a convex set $C\in \mathbb{R}^d$  can be formulated as
\begin{eqnarray}
\label{mod:MVIR-Set}
   \maximize_{x^1,\cdots,x^{d+1}} &
    \left| \det \left( \begin{array}{ccccc} x^1 & x^2 & \cdots & x^d & x^{d+1} \\ 1 & 1 & \cdots & 1 & 1 \end{array} \right) \right|  & \quad \st  \\
    & & \nonumber \\
     &  u^i  \; = \; x^{i+1}-x^1  \; \;, & \quad  i=1,...,d \nonumber \\
     &  (u^i)^T u^j \; = \; 0 \;\; , & \quad   1\le i < j \le d \nonumber \\
     & x^1+\sum_{i=1}^d q_i^k u^i \; \in \;  C \; \;, &  \quad k=1,...,2^d \nonumber
\end{eqnarray}

In the special case, where the convex set $C$ is a polytope defined by $C=\{x \in \mathbb{R}^d \;|\;Px\leq b\}$ with $P\in \mathbb{R}^{n\times d}$ and $b\in \mathbb{R}^n$, we have
\begin{eqnarray}
\label{mod:MVIR-Polytope}
    \maximize_{x^1,\cdots,x^{d+1}} &
    \left| \det \left( \begin{array}{ccccc} x^1 & x^2 & \cdots & x^d & x^{d+1} \\ 1 & 1 & \cdots & 1 & 1 \end{array} \right) \right|  & \quad \st  \\
        & &  \nonumber \\
     &  u^i  \; = \; x^{i+1}-x^1  \; \;, & \quad  i=1,...,d \nonumber \\
     &  (u^i)^T u^j \; = \; 0 \;\; , & \quad   1\le i < j \le d \nonumber \\
     & v^k \; = \; x^1+\sum_{i=1}^d q_i^k u^i \; \;, &  \quad k=1,...,2^d \nonumber \\
     & Pv^k \; \leq \; b \; \; , & \quad k=1,...,2^d \nonumber
\end{eqnarray}
where $v^k$ vectors are the vertices of the MVIR. 

Both Problems (\ref{mod:MVIR-Set}) and (\ref{mod:MVIR-Polytope}) are non-convex optimization problems with an exponential number of constraints and difficult to solve.

\subsection{MVAIR as an Optimization Problem}
\label{subsec:MVAIR-optModel}
In order to find the MVAIR in a compact convex set $C\in \mathbb{R}^d$, additional constraints are needed to impose axis-alignment. In fact, the MVAIR is a box $R=\left\{x\in \mathbb{R}^d \; | \; x^l \leq x \leq x^u \right\}$ of maximum volume inscribed in $C$, where $x^u$ and $x^l $ are some upper and lower bounds in $ \mathbb{R}^d$, respectively. Hence, it is enough to ensure that for each vertex $v^k$ of $R$, we have $x^l \leq v^k \leq x^u\,,\; k=1,...,2^d$.  

Therefore, we obtain 
\begin{eqnarray}
\label{mod:MVAIR-Set}
   \maximize_{x^1,\cdots,x^{d+1},x^l,x^u} &
    \left| \det \left( \begin{array}{ccccc} x^1 & x^2 & \cdots & x^d & x^{d+1} \\ 1 & 1 & \cdots & 1 & 1 \end{array} \right) \right|  & \quad \st  \\
    & & \nonumber \\
     &  u^i  \; = \; x^{i+1}-x^1  \; \;, & \quad  i=1,...,d \nonumber \\
     &  (u^i)^T u^j \; = \; 0 \;\; , & \quad   1\le i < j \le d \nonumber \\
     & v^k \; = \; x^1+\sum_{i=1}^d q_i^k u^i \; \;, &  \quad k=1,...,2^d \nonumber \\
     & v^k \; \in \;  C \; \;, &  \quad k=1,...,2^d \nonumber \\
     & x^l  \; \leq \; v^k \; \leq \;  x^u \;\,, & \quad k=1,...,2^d \nonumber
\end{eqnarray}
At optimality, $x^l$ and $x^u$ will coincide with two of the opposing extreme points (vertices) of $R$.
For the special case, where $C$ is a convex polytope, as defined in Problem (\ref{mod:MVIR-Polytope}), we have
\begin{eqnarray}
\label{mod:MVAIR-Polytope}
   \maximize_{x^1,\cdots,x^{d+1},x^l,x^u} &
    \left| \det \left( \begin{array}{ccccc} x^1 & x^2 & \cdots & x^d & x^{d+1} \\ 1 & 1 & \cdots & 1 & 1 \end{array} \right) \right|  & \quad \st  \\
    & & \nonumber \\
     &  u^i  \; = \; x^{i+1}-x^1  \; \;, & \quad  i=1,...,d \nonumber \\
     &  (u^i)^T u^j \; = \; 0 \;\; , & \quad   1\le i < j \le d \nonumber \\
     & v^k \; = \; x^1+\sum_{i=1}^d q_i^k u^i \; \;, &  \quad k=1,...,2^d \nonumber \\
     & Pv^k \; \leq \; b \; \; , & \quad k=1,...,2^d \nonumber \\
     & x^l  \; \leq \; v^k \; \leq \;  x^u \;\,, & \quad k=1,...,2^d \nonumber
\end{eqnarray}

Both Problems (\ref{mod:MVAIR-Set}) and (\ref{mod:MVAIR-Polytope}) are also non-convex optimization problems with exponential number of constraints and difficult to solve. However, there is a hidden convexity here and if we take advantage of the structure of the MVAIR problem, i.e., axis-alignment, we are able to model it in a more efficient way.  
Note that to have $R\subseteq C$, model   (\ref{mod:MVAIR-Polytope}) enforces all of its $2^d$ vertices to be inside $C$, hence giving an exponential number of constraints. For this case, 
we can use a more efficient formulation inspired by a problem in \cite{boyd2004convex}. 
This new formulation skips the quadratic number, $\mathcal{O}(d^2)$, of the nonlinear perpendicularity constraints $(u^i)^T u^j  =  0$ and avoids the exponential number, $\mathcal{O}(n2^d)$, of linear constraints $Pv^k \leq  b$. Instead, it deals with $\mathcal{O}(n)$ linear inequality constraints.

Since we have $x^l_j\leq x_j\leq x^u_j$, an upper bound for the left-hand side of each inequality $\sum_j p_{ij}x_j \leq b_i$ is obtained by $\sum_j (p_{ij}^+x^u_j-p_{ij}^-x^l_j)$, where $p_{ij}^+=\max\{p_{ij},0\}$  and $p_{ij}^-=\max\{-p_{ij},0\}$. This means we have $R\subseteq C$ if and only if $ \sum_{j=1}^d (p_{ij}^+x^u_j-p_{ij}^-x^l_j) \; \leq \; b_i \:,\; i=1,...,n$. Hence, the alternative formulation for this problem is as follows: \vspace{-10pt}
\begin{eqnarray*}
     \maximize_{x^l ,x^u} \quad  \prod_{j=1}^d (x^u_j-x^l_j)   & & \st   \\   
     \sum_{j=1}^d (p_{ij}^+x^u_j-p_{ij}^-x^l_j) & \leq & b_i \; , \quad i=1,...,n \\
      x^l_j & \leq & x^u_j\; , \quad j=1,...,d
\end{eqnarray*}
where the objective function calculates the volume of $R$ as the product of the length of $d$ linearly independent vectors corresponding to $d+1$ affinely independent vertices of $R$.  This reproduces the solution space of the Problem \ref{mod:MVAIR-Polytope} in a much simpler format and significantly reduces the number of constraints. This is made possible by taking advantage of the geometry of the axis-aligned rectangle. Note that Polytope $C$ is the intersection of $n$ halfspaces. Due to the axis-alignment of $R$, for each halfspace $C_i=\{x \in \mathbb{R}^d\:|\;p_i^T x \leq b_i\}$ it is sufficient to have $p_i^T v_i^c \leq b_i$, where $v_i^c$ is the closest vertex of $R$ to the hyperplane $H_i=\{x \in \mathbb{R}^d\:|\;p_i^T x = b_i\}$, i.e., $v_i^c = \argmin_{x\in X(R)} \dist(x,\proj(x,H_i))$, where $X(R)$ is the set of extreme points of $R$. Note that $v_i^c$ in not necessarily unique but the value $p_i^T v_i^c$, which is the same for all such vertices, can be bounded from above by $\sum_j (p_{ij}^+x^u_j-p_{ij}^-x^l_j)$. Therefore, instead of enumerating over $2^d$ vertices to be satisfied by $n$ halfspaces, we are effectively checking only one vertex for each halfspace.  Also, perpendicularity constraints are not needed in this formulation as they are implied by the objective function and the last constraint.

Therefore, the problem of finding the MVAIR in a polytope can be efficiently formulated as the following convex optimization problem \vspace{-5pt}
\begin{eqnarray}
\label{mod:MVAIR-Polytope-Efficient}
    \maximize_{x^l ,x^u} \quad f_0(x^u ,x^l) = \sum_{j=1}^d \log(x^u_j-x^l _j)  & & \st  \\
    \sum_{j=1}^d (p_{ij}^+x^u_j-p_{ij}^-x^l_j) & \leq & b_i \; , \quad i=1,...,n   \nonumber
\end{eqnarray}
with the implied constraint $x^u>x^l$, which  
can be solved effciently
and will be discussed in \cref{subsec:SolvingMVAIR}. Note that the number of constraints in this model is exactly the number of linear inequalities defining $C$.

Similarly, the Problem (\ref{mod:MVAIR-Set}) can be rewritten in a far more efficient way by incorporating the upper and lower bound points $x^u$ and $x^l$ in the convex inequalities defining $C$. However, the details of the analysis in this case depends on the definition of $C$. As an example, consider the convex set $C=\{x \in \mathbb{R}^d \;|\; x_1+\cdots+x_d\leq 1\,, \; x_1^2+\cdots + x_{d-1}^2 - x_d \leq 0\}$, i.e., the intersection of a halfspace and a $d$-dimensional paraboloid. To have $R\subseteq C$, we must have $v^k\in C\,, \; k=1,...,2^d$, which means $v^k_1+\cdots+v^k_d \leq 1$ and $\sum_{i=1}^{d-1} (v^k_i)^2 -v^k_d \leq 0$ for $k=1,...,2^d$. They all can be replaced by only ``two'' constraints $\sum_{i=1}^d x^u_i \leq 1$ and $\sum_{i=1}^{d-1} (x^{\max}_i)^2 - x^l_d \leq 0$, where $x^{\max}_i = \max\{|x^l_i|,|x^u_i|\},\; i=1,...,d$. These two constraints solely depend on the two points $x^l$ and $x^u$. Using this setting, there is also no need for the remaining constraints in model (\ref{mod:MVAIR-Set}). Note that the number of required constraints in this model is also exactly the number of inequalities defining $C$ (i.e., $\mathcal{O}(n)$) and depending on the structure of $C$, e.g., for asymmetric convex sets, we may also need an additional $\mathcal{O}(d)$ constraints for defining $x^u$ and $x^l$. The objective function would be the same as that of Problem (\ref{mod:MVAIR-Polytope-Efficient}). Therefore, for this convex set, we obtain
\begin{eqnarray}
\label{mod:MVAIR-Set-Efficeint}
    \maximize_{x^l ,x^u} \quad \sum_{j=1}^d \log(x^u_j-x^l _j)  & & \st  \nonumber \\
    \sum_{i=1}^d x^u_i & \leq & 1 \; ,   \nonumber \\
    \sum_{i=1}^{d-1} (x^{\max}_i)^2 - x^l_d & \leq & 0 \; , \nonumber \\
    - x^{\max}_i \; \leq \; x^u_i  & \leq &  x^{\max}_i \; , \quad i=1,...,d \nonumber \\
    - x^{\max}_i \; \leq \; x^l_i  & \leq &  x^{\max}_i \; , \quad i=1,...,d \nonumber 
\end{eqnarray}
which is a convex optimization problem and can be efficiently solved.

\section{Approximation Algorithms}
\label{sec:ApproxAlgos}
In this section we present exact and approximation algorithms for finding  
lasrgest inscribed rectangle in a compact convex set. 

\subsection{Solving the MVAIR}
\label{subsec:SolvingMVAIR}
The analysis of finding the MVAIR in a general compact convex set defined by a ``finite'' number of convex inequalities depends on having those specific inequalities as there are various possibilities. For this reason, here we introduce an algorithm for finding the MVAIR in a compact convex polytope and analyze its computational complexity. However, both the algorithm and the analysis apply to the general compact convex sets as well.

Having the MVAIR problem modeled as a convex optimization problem enables us to efficiently solve it via efficient convex programming algorithms such as interior-point methods. One of the most efficient interior-point methods for solving convex optimization problems such as MVAIR is the logarithmic barrier method. In addition to the efficiency, the choice of logarithmic barrier method here is also motivated by the fact that the objective function in this method (Eq. (\ref{mod:MVAIR-Polytope-Efficient-LogBarrier}) below) is a closed strictly convex \emph{self-concordant function} --- a class of functions for which the barrier method (with Newton minimization used as a subroutine) provides a rigorous worst-case bound on the number of iterations needed for finding their minimizer, which is useful for the analysis here. Moreover, the convergence analysis is independent of some of the common unknown parameters such as the upper bound on the condition number of the Hessian matrix and its Lipschitz constant, and is also affine invariant, thus insensitive to the choice of coordinates. The latter property is specifically useful for the MVAIR problem as it enables us to rotate or shift the input region in the coordinate system, without changing the worst-case analysis. This worst-case analysis for logarithmic barrier method, which is based on self-concordance properties, was first introduced by Nesterov in \cite{nesterov1988Polynomial_a,nesterov1988Polynomial_b} and was further developed by Nesterov and Nemirovski in a series of papers including \cite{nesterov1988convex,nesterov1989self,nesterov1991acceleration} and their seminal book \cite{nesterov1994interior}. Note that convex optimization problems can be solved via several efficient algorithms, some of which may provide better practical efficiency. The goal here is not to pinpoint the best algorithm but rather to provide a bound on the computational complexity of solving the MVAIR problem via the convex optimization models described in Section (\ref{subsec:MVAIR-optModel}).

Note that the Problem (\ref{mod:MVAIR-Polytope-Efficient}) can be rewritten as an unconstrained optimization problem
\begin{equation}
\label{mod:MVAIR-Polytope-Efficient-Barrier}
    \minimize_{x^l ,x^u} \quad -\sum_{j=1}^d \log(x^u_j-x^l _j)  + \sum_{i=1}^n I_{-} \left( \sum_{j=1}^d (p_{ij}^+x^u_j-p_{ij}^-x^l_j) - b_i  \right),
\end{equation}
where the indicator function is defined as 
\[
I_{-}(y) = \left\{\begin{matrix} 
0 & y \leq 0 \\
\infty & y > 0.
\end{matrix}\right.
\]
In Problem (\ref{mod:MVAIR-Polytope-Efficient-Barrier}) the constraints are implicitly incorporated in the objective function. 
The indicator barrier function is a non-smooth function. However, we can efficiently approximate it with a logarithmic barrier function, which is smooth. This approximation can be written as $\hat{I}_{-}(y) = -(1/t)\log(-y)$ with $\dom \hat{I}_{-} = -\mathbb{R}_{++}$, i.e., the set of strictly negative real numbers. Here $\tau>0$ is the barrier parameter that controls the accuracy of this approximation. As $\tau$ increases in each iteration with $\tau:=\mu \tau$, where $\mu>0$ is an increment parameter, the approximation becomes more accurate. Problem (\ref{mod:MVAIR-Polytope-Efficient-Barrier}) can now be approximated by 
\begin{equation}
\label{mod:MVAIR-Polytope-Efficient-LogBarrier}
 \minimize_{x^l ,x^u} \quad f(x^u,x^l)= - \tau \left( \sum_{j=1}^d \log(x^u_j-x^l_j) \right) - \left(\sum_{i=1}^{n} \log \left(b_i-\sum_{j=1}^d (p_{ij}^+x^u_j-p_{ij}^-x^l_j)\right) \right)\; ,
\end{equation} 
with $f:\;\mathbb{R}^{2d} \rightarrow \mathbb{R}$ and the feasible convex and compact domain  
$G=\{(x^u,x^l)\: \mid \: \sum_{j=1}^d (p_{ij}^+x^u_j-p_{ij}^-x^l_j) -b_i \leq 0, \; x^l - x^u \leq 0\}$.
For simplicity, let $x^{ul}$ to denote the solution pair $(x^u,x^l)$ and let $f_0(\cdot)$ to be the original objective function and $\phi(\cdot)$ to be the barrier term. Hence, we have 
$f(x^{ul}) = -\tau f_0(x^{ul}) -\phi(x^{ul})$. It is known that by sequentially updating $\tau$ with $\tau:=\mu \tau$ we converge to the optimal solution when $\tau \rightarrow \infty$, tracing a central path $\mathcal{P} = \{x^{ul^*}(\tau)\: : \: \tau \geq 0\}$ \cite{nesterov1994interior,nemirovski2008interior}.
The objective function $f$ is closed, smooth, continuously differentiable, and strictly convex. In addition, $f$ is a ``self-concordant'' function on $G$ for all real values of 
 $\tau\geq 0$. 
This is due to the invariance property of self-concordant functions under scaling 
and addition operations and the fact that each of the negative logarithm terms is a self-concordant function.

Let's begin the analysis with the pre-processing operations. The first observation is that we can fairly assume that the Problem (\ref{mod:MVAIR-Polytope-Efficient}) is strictly feasible as $C$ is a compact convex set with a non-empty interior. This means the Slater's condition holds. To find a strictly feasible solution as the starting point of the algorithm, which removes the necessity of an infeasible starting step and simplifies the analysis, we can first choose $d+1$ arbitrary but affinely independent points on the boundary of $C$. For a polytope these $d+1$ points are readily available by the given set of vertices of $C$, but for a general convex set $C$, it would depend on the structure of $C$. 
 Then, given $d+1$ points $p^1,...,p^{d+1}\in \partial C$, we have the simplex $S=\Conv(p^1,...,p^{d+1}) \subset C$. Let $H_1=\Conv(p^1,p^2,...,p^{d})$ and $H_2=\Conv(p^1,p^3,...,p^{d+1})$ 
 be two of the facets of $S$ that have at least $p^1$ in common. Let $y^1$ and $y^2$ be the centroid (or median) of the $d$ vertices forming $H_1$ and $H_2$, respectively. Each of these centroidal or median points could be found by taking the mean over the $d$ vertices forming the corresponding facet or by solving a $d$-dimensional 1-median problem with $d$ input points. Note that $y^i\in\relint(H_i),\; i=1,2$. We shall have $y^1\neq y^2$ since the points $p^1,...,p^{d+1}$ were affinely independent.  Finally, let the pair
\[
x^u_{init} = \left(\begin{matrix}
\max\{y^1_{1},y^2_{1}\} \\
\max\{y^1_{2},y^2_{2}\} \\
\vdots \\
\max\{y^1_{d},y^2_{d}\} 
\end{matrix}\right) \; , \qquad 
x^l_{init} = \left(\begin{matrix}
\min\{y^1_{1},y^2_{1}\} \\
\min\{y^1_{2},y^2_{2}\} \\
\vdots \\
\min\{y^1_{d},y^2_{d}\} 
\end{matrix}\right)\; ,
\]
be the initial strictly feasible solution. Note that $x^{ul}_{init} = (x^u_{init}, x^l_{init})$ could be a degenerate solution, i.e., we could have 
$\Vol(R_{init}) = 0$.

An alternative way for constructing $x^{ul}_{init}$ that works for all convex sets is as follows. Find  the minimum volume axis-aligned bounding box $B$ of the convex domain 
 $G=\{(x^u,x^l)\: \mid \: x^l \in C\,, \; x^u \in C\,, \; x^l - x^u \leq 0\}$ and then let $y^1$ and $y^2$ be the diagonally opposing vertices of $B$ with the minimum and the maximum coordinates, respectively. Set  $x^{ul}_{mid} = \frac{1}{2}(y^1 + y^2)$. We have $x^{ul}_{mid} \in G$, since $\Vol(G) \geq \Vol(B)/2$ due to the convexity of $G$. If $x^{ul}_{mid} \in \interior G$ then set $x^{ul}_{init} := x^{ul}_{mid}$. Otherwise, find the vector $\nu = e^k - \frac{(e^k)^T (y^2 - y^1)}{\norm{y^2 - y^1}^2} (y^2 - y^1)$, where the index $k$ corresponds to the smallest component of $(y^2 - y^1)$ and $e^k$ is the $k$th column of the $d\times d$ identity matrix. Then, find a sufficiently small $\delta_1 >0$ such that either $x^{ul}_{mid} + \delta_1 \nu$ or $x^{ul}_{mid} -\delta_1 \nu$ is in the interior of $G$, i.e., strictly feasible. To increase the depth of strict feasibility of the initial solution and thus its quality we can take one further step. Without loss of generality, assume the direction $-\nu$ gives the strictly feasible solution. Let $\delta_2^{max}$ be the maximum value of $\delta_2 > \delta_1$ such that $x^{ul}_{mid} -\delta_2 \nu \in G$. Let  $x^{ul}_{ray} = x^{ul}_{mid} -\delta_2^{max} \nu$ and set $x^{ul}_{init} := \frac{1}{2} (x^{ul}_{mid} + x^{ul}_{ray})$.

Algorithm \ref{alg:MVAIR_LogBarrier} describes our logarithmic barrier method. It has an outer iteration loop in which the barrier parameter is updated and an inner iteration loop (centering step) in which usually Newton's method, with a backtracking line search for finding a reasonable step size in each iteration, is used to reach the minimizer for any given $\tau$. Based on the analysis of the log-barrier method for self-concordant convex functions, the logarithmic barrier method spends $N^{(0)}$ iterations (Newton steps) in the first centering step to reach a point sufficiently close to the central path of Problem (\ref{mod:MVAIR-Polytope-Efficient-LogBarrier}). It then takes $N^{(CP)}$ iterations (Newton steps) in the path following step, during which the algorithm iteratively updates the parameter $\tau$ with $\tau:=\mu \tau$ and tries to follow the central path as $\tau \rightarrow \infty$, to get sufficiently close to the optimal solution. Note that the other end of this central path that is associated with $\tau \rightarrow + 0$ is the analytic center of $G$ with respect to the barrier function $\phi$. Therefore, the total number of iterations required to solve Problem (\ref{mod:MVAIR-Polytope-Efficient-LogBarrier}) using the logarithmic barrier method is $N= N^{(0)} + N^{(CP)}$.

\begin{algorithm}[tbp]
    \SetAlgoLined
    \BlankLine
    \KwIn{Given convex functions $f_i:\mathbb{R}^{2d}\rightarrow \mathbb{R},\; i=0,...,n$, convex function $\phi:\mathbb{R}^{2d}\rightarrow \mathbb{R}$ with $\phi(x,y)=\left(\sum_{i=1}^{n} \log \left(-f_i(x,y)\right) \right)$, initial barrier parameter $\tau:=\tau^{(0)} > 0$, increment parameter $\mu > 1$, tolerance $\varepsilon > 0$, and a strictly feasible initial solution $x^{ul}_{0}=(x^u_0,x^l_0)\in\mathbb{R}^{2d}$ within proximity $\kappa >0$ of 
    the central path $\mathcal{P}$.}
    \KwOut{A $(1-\varepsilon)$--approximation solution to the optimal solution.} 
    \BlankLine
    \While {$n/\tau \geq \varepsilon$}{
        \tcc{Centering Step}
        Compute $x^{ul^{*}}(\tau)$ by minimizing $f(x^{ul}) = -\tau f_0(x^{ul}) - \phi(x^{ul})$, starting at $x^{ul}(\tau)$\;
        \tcc{Path Following Step}
        Set $x^{ul}(\tau) := x^{ul^{*}}(\tau)$\;
        Set $\tau:=\mu \tau$\;
    }
    \Return{$x^{ul^{*}}(\tau)$ and $f_0(x^{ul^{*}}(\tau))$}\;
    \caption{
    	${\sf Logarithmic Barrier Algorithm}$;
        it solves a constrained convex optimization problem with sequence of unconstrained optimization problems, where the obtained minimizer of each iteration is used as the staring point of the next iteration. 
    }
    \label{alg:MVAIR_LogBarrier} 
\end{algorithm}

For the number of iterations in the initial centering step, we need to define the Minkowsky function of a convex domain.
\begin{Def}
The Minkowsky function of a convex domain $G\subset \mathbb{R}^{2d}$ with its pole at $x\in G$ is 
\[
\pi_x (y) = \inf\{ \eta \geq 0 \: \mid \: x+\eta^{-1}(y-x) \in G\}\,, \quad \forall y \in \mathbb{R}^{2d}.
\]
\end{Def}
Geometrically speaking, consider a ray $[x,y)$ and let $y'$ be the point this ray intersects $\partial G$. If $y'$ exists, then $\pi_x (y)$ is the length of the segment $[x,y]$ divided by the length of the segment $[x,y']$. If $G$ is unbounded and the ray $[x,y)$ is contained in $G$, then $\pi_x (y)=0$. In other words, $\pi_x (y)$ measures the distance between $x$ and $y$ relative to the distance of $x$ to the boundary of $G$ in the direction $y-x$. 

For the main path following scheme to work, we need to start from an initial point sufficiently close to the central path $\mathcal{P}$. We can first move from $x^{ul}_{init} \in \interior G$ to the beginning of the central path, i.e., the analytic center $x^{ul}_{ac} \in \mathcal{P}$, and from there, we can follow $\mathcal{P}$ to converge to the optimal solution. It is proven that, with tolerance $\kappa$, we can converge to the analytic center $x^{ul}_{ac}$ starting from the strictly feasible solution $x^{ul}_{init}$ in 
\[
N^{(0)} = \mathcal{O}(1)\: \sqrt{n} \: \log (\frac{n}{1-\pi_{x^{ul}_{ac}} (x^{ul}_{init})} )\; ,
\]
Newton iterations, where $\mathcal{O}(1)$ are constant factors depending solely on the path tolerance $\kappa$ and the penalty rate used in this initial centering step \cite{nemirovski2004lecture}. The tolerance (accuracy) in this step does not need to be very small. Also, note that $0\leq \pi_{x^{ul}_{ac}} (x^{ul}_{init}) < 1$, since $x^{ul}_{init} \in \interior G$. The smaller it is the better our initial solution $x^{ul}_{init}$, i.e., further away from the boundary and closer to the analytic center. Due to the construction of $x^{ul}_{init}$, we expect the ratio $\frac{1}{1-\pi_{x^{ul}_{ac}} (x^{ul}_{init})}>1$ to be fairly small for all instances of the problem. Since $N^{(0)}$ still depends on the unknown point $x^{ul}_{ac}$, we can bound it from above using a \emph{symmetry measure} proposed by Minkowski \cite{minkowski1911allegemeine}. 
\begin{Def}
The Minkowski symmetry measure of a convex set $G$ with respect to $x\in G$ is defined as 
\[
\textrm{sym}_{G}(x) = \max\{\eta \geq 0 \: : \: x+\eta (x-y) \in G, \; \forall y\in G\}
\]
\end{Def}
Note that we have $\textrm{sym}_{G}(x)=0,\; \forall x\in \partial G$. We also have $\textrm{sym}_{G}(x) \leq 1,\; \forall x\in G$, where the equality holds when $G$ is symmetric around $x$, which also means $G$ is symmetric. 

Let $x' = x^{ul}_{ac} + \frac{1}{\pi_{x^{ul}_{ac}} (x^{ul}_{init})}(x^{ul}_{init}-x^{ul}_{ac})$ be the point on the boundary of $G$ where the ray $[x^{ul}_{ac}, x^{ul}_{init})$ crosses the boundary. We have 
\begin{eqnarray*}
1-\pi_{x^{ul}_{ac}} (x^{ul}_{init}) & = & 1 - \frac{\dist(x^{ul}_{init},x^{ul}_{ac})}{\dist(x',x^{ul}_{ac})} 
\; = \; \frac{\dist(x',x^{ul}_{init})}{\dist(x',x^{ul}_{ac})} \\
& = & \max\{\tau \geq 0 \: : \: x^{ul}_{init}+\tau (x^{ul}_{init}-x^{ul}_{ac}) \in G\} \\
& \geq & \textrm{sym}_{G}(x^{ul}_{init}) > 0
\end{eqnarray*}
Therefore, we have 
\[
N^{(0)} = \mathcal{O}(1)\: \sqrt{n} \: \log (\frac{n}{1-\pi_{x^{ul}_{ac}} (x^{ul}_{init})} ) \leq \mathcal{O}(1)\: \sqrt{n} \: \log (\frac{n}{sym_{G}(x^{ul}_{init})} )  \,.
\] 
Note that the algorithm does not need to compute $\textrm{sym}_{G}(x^{ul}_{init})$. 

To start the main path following scheme, in Algorithm \ref{alg:MVAIR_LogBarrier}, we can let $x^{ul}_{0}$ to be the approximate solution $\widehat{x^{ul}_{ac}}$ that was achieved in the initialization phase for $x^{ul}_{ac}$ and set 
\begin{equation}
\label{eq:t0}
\tau^{(0)} = \max\{\tau \: \mid \: \lambda(\widehat{x^{ul}_{ac}}) \leq \kappa\}\; ,
\end{equation}
where $\tau^{(0)}$ is the initial choice of barrier parameter for the main path following step and $\lambda(x)$ is the Newton decrement defined as $\lambda(x)= \sqrt{\nabla f(x) \nabla^2 f(x)^{-1} \nabla f(x)}$ with $f$ being the objective function in (\ref{mod:MVAIR-Polytope-Efficient-LogBarrier}).
Let $\mu=1+1/\sqrt{n}$. This gives $\log \mu = \log (1+1/\sqrt{n}) \geq \log 2 / \sqrt{n}$ and $\mu-1-\log \mu \leq 1/(2n)$. For the number of iterations during the main path following phase, we have
\begin{eqnarray*}
N^{(CP)} 
& = & \ceil[\Bigg]{\frac{\log (\frac{n}{\tau^{(0)} \varepsilon})}{\log \mu}}   \left( \frac{f(x^{ul^{*}}(\tau))-f(x^{ul^{*}}(\mu \tau))}{\gamma} + \log_2 \log_2 (\frac{1}{\varepsilon}) \right) \\
& = & \ceil[\Bigg]{\frac{\log (\frac{n}{\tau^{(0)} \varepsilon})}{\log \mu}}   \left( \frac{n(\mu-1-\log \mu)}{\gamma} + c \right) \\
& \leq & \ceil[\bigg]{\frac{\sqrt{n}\log (\frac{n}{\tau^{(0)} \varepsilon})}{\log 2}} \left(\frac{1}{2\gamma} + c \right)  \\
& = & \ceil[\bigg]{\sqrt{n}\log_2 (\frac{n}{\tau^{(0)} \varepsilon})} \left(\frac{1}{2\gamma} + c \right)  \\
& \leq & \left(1+\sqrt{n}\log_2 (\frac{n}{\tau^{(0)} \varepsilon})\right) \left(\frac{1}{2\gamma} + c \right),
\end{eqnarray*}
where $n$ is the number of inequalities defining the polytope, $\gamma$ is a constant lower bound (given in Eq. (\ref{eq:gamma}) below) on the reduction amount in the objective function in each iteration during the damped Newton phase, $\varepsilon$ is the required accuracy for the optimal solution, $c=\log_2 \log_2 (1/\varepsilon)$, and $x^{ul^{*}}(\mu \tau)$ is the optimal solution of the centering step starting from $x^{ul^{*}}(\tau)$ after updating the barrier parameter in the outer loop. 
In the first equality, the first term is the number of iterations in the outer loop and the second term is the number of Newton steps in the inner loop per centering iteration. The ratio $n/\tau^{(0)}$ is the initial duality gap and $\varepsilon$ is, in fact, the final duality gap. The second equality is derived by extracting the dual function in the first fraction of the second term and then simplifying it using the duality gap when the barrier parameter is $\tau$. The third equality is derived by assuming a fixed value for $\mu$ as $\mu=1+1/\sqrt{n}$. 

The term $c=\log_2 \log_2 (1/\varepsilon)$ is an upper bound on the number of iterations during the quadratically convergent phase of Newton's method and has a very weak dependence on the inverse of $\varepsilon$ and can be effectively considered as constant; for $\varepsilon= 10^{-9}$, this is less than 5. Also, the parameter $\gamma$ depends, weakly, on the backtracking line search parameters $\alpha$ and $\beta$ and is equal to 
\begin{equation}
\label{eq:gamma}
\gamma = \frac{\alpha\beta(1-2\alpha)^2}{20-8\alpha},
\end{equation}
where $0<\alpha<0.5$ and $0<\beta<1$. For $\alpha=0.2$ and $\beta=0.9$, we have $\frac{1}{2\gamma} < 142$. Finally, note that $N^{(CP)}$ does not depend on the dimension $d$. 

Finally, the total number of Newton steps is 
\begin{eqnarray*}
N & = & N^{(0)} + N^{(CP)} \leq \mathcal{O}(1) \: \sqrt{n} \: \log_2 \left(\frac{n}{ (sym_{G}(x^{ul}_{init})) \tau^{(0)} \varepsilon }\right)\; , 
\end{eqnarray*}
where $\mathcal{O}(1)$ is an absolute constant. Following a more refined analysis, such as the original analysis of Nesterov and Nemirovski \cite{nesterov1994interior}, the bound could be tightened by finding smaller and more accurate constants. Furthermore, by carefully choosing the input parameters of the method, the condition in Eq. (\ref{eq:t0}) can be guaranteed and $\tau^{(0)}$ can be removed from the right hand side of the bound. Hence, the process is terminated with a $(1-\varepsilon)$-solution in $N = \mathcal{O}\left(\sqrt{n} \: \log (\frac{n}{\varepsilon})\right) $ iterations. Alternatively, if we can establish a strictly feasible dual solution, then the ratio $n/\tau^{(0)}$ can be replaced by the constant $\varepsilon_0$, the initial duality gap, leading to $\mathcal{O}\left(\sqrt{n} \: \log (\frac{\varepsilon_0}{\varepsilon})\right) $ running time. Note that this is a conservative upper bound and in practice the algorithm performs  
better and in many cases, depending on the structure of the problem, it works just in a few iterations independent of the size of the problem. In general, the observed average run time of path-following methods is $\mathcal{O}(\log n \log(\frac{\varepsilon_0}{\varepsilon}))$ \cite{bertsimas1997introduction}.

Each Newton step (inner iteration) is equivalent to solving a linear system of equations 
$H \Delta x^{ul} =-g$, where 
$H=\nabla^2 f(x^{ul})$ is the Hessian matrix, 
$g=\nabla f(x^{ul})$ is the gradient vector, and 
$\Delta x^{ul}$ is the Newton direction. This will cost $\mathcal{O}\left((2d)^3\right)$ arithmetic operations for solving the linear system 
plus the costs of computing (forming) $g$ and $H$. Computing $g$ and $H$ require at most $\left(2d\times(n+1)\right)=\mathcal{O}(dn)$ and $\left((2d)^2\times(n+1)\right)=\mathcal{O}(d^2 n)$ operations, respectively. Therefore, the computational complexity of  the log-barrier method for solving Problem (\ref{mod:MVAIR-Polytope-Efficient}) in each step is $\mathcal{O}(d^3+d^2n)$.  
Nevertheless, certain structures of the problem could be exploited to reduce this bound in practice.  

Therefore, the total computational complexity of solving Problem (\ref{mod:MVAIR-Polytope-Efficient}) using the log-barrier method summarized in Algorithm \ref{alg:MVAIR_LogBarrier} is $\mathcal{O}((d^3+d^2 n)\sqrt{n}\log \frac{n}{\varepsilon})$. For a polytope, since $C$ is bounded we have $n > d$ and thus $d^3$ is dominated by $d^2 n$ leading to $\mathcal{O}(d^2 n\sqrt{n}\log \frac{n}{\varepsilon})$ time. 

Since this analysis is not based on an assumption restricting it to polytopes, the result is the same for finding the MVAIR in general convex sets, 
which can be easily represented in a finite set of inequalities. For example, for $d$-dimensional ellipses we obtain $\mathcal{O}(d^3)$ running time since we just need one inequality to define an ellipsoidal convex set.

Finally, it should be mentioned that for a general convex set the centroid may not be easily computable or at least not as easy as that of polytope. In this case we may prefer or have to solve a 1-median problem. Since this problem could be formulated as a second-order cone program (SOCP), the preprocessing time for finding the initial strictly feasible solution takes at most $\mathcal{O}(d^3\sqrt{d}\log \frac{1}{\varepsilon_m})$ time using the primal-dual potential reduction algorithm of \cite{lobo1998applications}. Here, $\varepsilon_m$ is the accuracy for finding the exact median point, which could be considered a fairly large number (e.g., $10^{-1}$) as the exact median point is not needed for constructing $R_{init}$. So the pre-processing time the first way of finding $x^{ul}_{init}$ is essentially $\mathcal{O}(d^3\sqrt{d})$. 
The pre-processing time for the alternative way depends on the geometry of $C$ and the algorithm used for finding its minimum volume axis-aligned bounding box. Let $T_x (C)$ be the time that it takes to find an extreme point in a given direction in a convex set $C$. Then this box can be found in $\mathcal{O}(d T_x (C) )$. For example, for a convex polygon, this can be done in $\mathcal{O}(\log n)$ using binary search, if the $n$ vertices are given as an array in a c.w. or c.c.w. order. 
 It must be mentioned that this preprocessing for finding a strictly feasible starting point is not an essential part of the algorithm as the algorithm could have an infeasible starting point for the initial centering step as well. However, in that case, the analysis and the upper bound on the number of iterations $N$ are a bit different, although the bound would still grow with $\sqrt{n}\log \frac{n}{\varepsilon}$.

The following theorem summarizes the complexity analysis of solving MVAIR.  
\begin{thm}
\label{thm:MVAIR_Complexity}
The problem of finding the MVAIR in a compact convex set in $C\subset\mathbb{R}^d$ defined by $n$ convex inequalities can be solved to a $(1-\varepsilon)$--approximation solution by the logarithmic barrier algorithm in $\mathcal{O}((d^3+d^2 n) \sqrt{n}\log \frac{n}{\varepsilon})$ time. 
When $C$ is a polytope, this is $\mathcal{O}(d^2 n \sqrt{n}\log \frac{n}{\varepsilon})$. 
\end{thm}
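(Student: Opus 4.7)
The proof is essentially an assembly of the complexity components already developed in the preceding discussion, so the plan is to tie them together rigorously rather than to introduce new machinery.

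First, I would invoke the reformulation that reduces the problem to the convex program (\ref{mod:MVAIR-Polytope-Efficient}), noting that for a general convex set defined by $n$ convex inequalities one obtains an analogous model with at most $n$ smooth convex constraints in $2d$ variables by incorporating $x^u$, $x^l$, and $x^{\max}_i = \max\{|x^l_i|,|x^u_i|\}$ into the original inequalities, as illustrated in the paraboloid example. The objective $-\sum_j \log(x^u_j - x^l_j)$ and the feasibility barriers combine into a closed strictly convex self-concordant function $f$ on a compact domain $G$, so the standard guarantees of the logarithmic barrier method apply.

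Next, I would split the overall iteration count into the two phases already analyzed. For the initialization phase, I would cite the construction of a strictly feasible $x^{ul}_{\text{init}}$ (via the Fermat/median construction or the bounding-box construction), and then quote the bound $N^{(0)} = \mathcal{O}(\sqrt{n}\log(n / \text{sym}_G(x^{ul}_{\text{init}})))$ on Newton steps required to reach the analytic center within tolerance $\kappa$, using the Minkowski symmetry bound to replace the unknown $\pi_{x^{ul}_{ac}}(x^{ul}_{\text{init}})$. For the path-following phase, I would quote the bound $N^{(CP)} \leq (1 + \sqrt{n}\log_2(n/(\tau^{(0)}\varepsilon)))(1/(2\gamma) + c)$, which follows from standard self-concordant theory once the update rule $\mu = 1 + 1/\sqrt{n}$ is used. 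Adding these yields $N = \mathcal{O}(\sqrt{n}\log(n/\varepsilon))$ total Newton iterations, after absorbing the symmetry term and $\tau^{(0)}$ into the constants (or alternatively using a strictly feasible dual point so that $n/\tau^{(0)}$ is replaced by the initial duality gap $\varepsilon_0$).

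Then I would account for the per-iteration cost. Each Newton step solves a $2d \times 2d$ linear system $H \Delta x^{ul} = -g$, which costs $\mathcal{O}(d^3)$ using any standard factorization. Forming the gradient $g$ takes $\mathcal{O}(dn)$ and forming the Hessian $H$ takes $\mathcal{O}(d^2 n)$, since each of the $n+1$ logarithmic terms contributes a rank-one update whose assembly requires $\mathcal{O}(d^2)$ work. Therefore the per-iteration cost is $\mathcal{O}(d^3 + d^2 n)$.

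Finally, I would multiply: the total complexity is $\mathcal{O}((d^3 + d^2 n)\sqrt{n}\log(n/\varepsilon))$, and remark that the pre-processing cost of constructing $x^{ul}_{\text{init}}$ (at most $\mathcal{O}(d^3 \sqrt{d})$ for the Fermat-based SOCP, or $\mathcal{O}(d\, T_x(C))$ for the bounding-box variant) is dominated by this expression. The main subtlety, and what I expect to be the only non-routine obstacle, is making the initialization bound rigorous without explicit access to the analytic center $x^{ul}_{ac}$: this is handled cleanly by the symmetry-measure inequality $1 - \pi_{x^{ul}_{ac}}(x^{ul}_{\text{init}}) \geq \text{sym}_G(x^{ul}_{\text{init}}) > 0$, which absorbs the unknown geometry into a constant depending only on the input convex set. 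Everything else is bookkeeping on top of the self-concordant barrier framework of Nesterov and Nemirovski.
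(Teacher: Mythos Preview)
Your proposal is correct and follows essentially the same approach as the paper: the theorem in the paper is stated explicitly as a summary of the preceding complexity analysis, and your outline assembles precisely the same ingredients in the same order (self-concordant reformulation, the $N^{(0)}$ bound via the Minkowski symmetry measure, the $N^{(CP)}$ bound with $\mu = 1 + 1/\sqrt{n}$, the $\mathcal{O}(d^3 + d^2 n)$ per-step cost, and multiplication). There is no substantive difference in strategy or in the handling of the one delicate point, namely bounding the initialization phase without knowing $x^{ul}_{ac}$ by passing to $\mathrm{sym}_G(x^{ul}_{\text{init}})$.
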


\subsection{Solving the MVIR}
\label{subsec:SolvingMVIR}
As discussed in Section \ref{subsec:MVIR-optModel}, the Problem (\ref{mod:MVIR-Set}), i.e., the MVIR problem in the higher dimension, is a non-convex optimization problem with an exponential number of constraints and is difficult to solve even for the special case of Problem (\ref{mod:MVIR-Polytope}), where the convex set $C$ is a polytope. Solving the MVIR problem efficiently in higher dimensions would require further exploitation of the structure of the problem and the properties of the optimal solution that is considered as a future research direction for this study. In the rest of this section, we limit ourselves to the 2D version, which is the MAIR problem.

\subsubsection{A Parametric Approach for Finding the MAIR}
\label{subsubsec:Parametric-Approach-2D}
This section provides a parametrized optimization approach for the MAIR problem in a compact convex set $C \subset \mathbb{R}^2$. 
Consider the 2D version of the Problem (\ref{mod:MVIR-Set})
 and let $u=u^1,\;v=u^2,\;x=x^1,\;y=x^2,\;z=x^3$; see Figure \ref{fig:LIR-image} for an illustration for the special case when $C$ is a convex polygon. Then finding the MAIR in $C$ can be formulated as: 
\[
\begin{array}{lll}
(Q)   & \maximize & |\det ( u, v )| \\
        & \st  & u^T v = 0  \\
        &      & u=y-x,\, v= z-x \\
        &      & x,y,z,y+z-x \in C \subset \mathbb{R}^2.
\end{array}
\]

Clearly, the last two constraints in $(Q)$ can be rewritten as $(u,v) \in S \subseteq \mathbb{R}^4$, where $S=\{(u,v)\,|\,  u=y-x,\, v= z-x, \; \mbox{and}\; x,y,z,y+z-x \in C\}$ is a compact convex set.
We have 
\[
\begin{array}{lll}
(Q')     & \maximize & |u_1 v_2 - u_2 v_1| \\
         & \st  & u_1 v_1 + u_2 v_2 = 0 \\
         &      & (u,v) \in S \subset \mathbb{R}^4.
\end{array}
\]
Let $\theta$ to be the angle between the vector $u$ and the $x$-axis and let $t=\tan(\theta)$. We have $t=u_2/u_1$ and $-1/t=v_2/v_1$, and the above problem reduces to a parameterized model
\[
\begin{array}{lll}
(Q_t)    & \maximize & (1+t^2) |u_1 v_2| \\
         & \st  & u_2 - t u_1 =0 ,\, v_1 + t v_2 = 0,  \\
         &      & (u,v) \in S \subset \mathbb{R}^4.
\end{array}
\]
For any fixed $t$, $(Q_t)$ can be solved by sequentially solving four separate subproblems:
\[
\begin{array}{ll}
 \maximize & u_1 v_2 \\
 \st  & u_1 \ge 0,\, v_2 \ge 0, \\
      & u_2 - t u_1 =0 ,\, v_1 + t v_2 = 0,  \\
      & (u,v) \in S \subset \mathbb{R}^4;
\end{array}
\]
\[
\begin{array}{ll}
 \maximize & u_1 v_2 \\
 \st  & u_1 \le 0,\, v_2 \le 0, \\
      & u_2 - t u_1 =0 ,\, v_1 + t v_2 = 0,  \\
      & (u,v) \in S \subset \mathbb{R}^4;
\end{array}
\]
\[
\begin{array}{ll}
 \maximize & -u_1 v_2 \\
 \st  & u_1 \ge 0,\, v_2 \le 0, \\
      & u_2 - t u_1 =0 ,\, v_1 + t v_2 = 0,  \\
      & (u,v) \in S \subset \mathbb{R}^4;
\end{array}
\]
\[
\begin{array}{ll}
 \maximize & -u_1 v_2 \\
 \st  & u_1 \le 0,\, v_2 \ge 0, \\
      & u_2 - t u_1 =0 ,\, v_1 + t v_2 = 0,  \\
      & (u,v) \in S \subset \mathbb{R}^4.
\end{array}
\]

\[
\begin{array}{ll}
 \maximize & \log u_1 + \log v_2 \\
 \st  & u_2 - t u_1 =0 ,\, v_1 + t v_2 = 0,  \\
      & (u,v) \in S \subset \mathbb{R}^4;
\end{array}
\]
\[
\begin{array}{ll}
 \maximize & \log (-u_1) + \log (-v_2) \\
 \st  & u_2 - t u_1 =0 ,\, v_1 + t v_2 = 0,  \\
      & (u,v) \in S \subset \mathbb{R}^4;
\end{array}
\]
\[
\begin{array}{ll}
 \maximize & \log u_1 + \log (-v_2) \\
 \st  & u_2 - t u_1 =0 ,\, v_1 + t v_2 = 0,  \\
      & (u,v) \in S \subset \mathbb{R}^4;
\end{array}
\]
\[
\begin{array}{ll}
 \maximize & \log (-u_1) + \log v_2 \\
 \st  & u_2 - t u_1 =0 ,\, v_1 + t v_2 = 0,  \\
      & (u,v) \in S \subset \mathbb{R}^4 .
\end{array}
\]

Let the optimal value of $(Q_t)$ be $f(t)$. Finding the maximum area rectangle inside $S$ can be achieved by sequentially solving the parameterized problems and then identifying $t$ to maximize $f(t)$.

Observe that $x,y,z$ are chosen arbitrarily in $C$, and the role of $u=y-x$ and $v=z-x$ are symmetric. Therefore, one does not need to go through all four cases; it suffices to focus only on the first case, and the parameter $t$ can also be restricted to be nonnegative. In other words, we need only to consider
\begin{eqnarray}
\label{mod:MAIR_Parametric_GivenDirection}
    & \maximize & \log u_1 + \log v_2     \\
    & \st               & u_2 - t u_1 =0 ,\, v_1 + t v_2 = 0 \; ,  \nonumber \\
    &                   & (u,v) \in S \subset \mathbb{R}^4 \; , \nonumber
\end{eqnarray}

for any given $t\geq 0$. This is a convex optimization problem, for any fixed $t$, that has a \emph{unique} optimal solution (i.e., the MAIR with respect to the direction $t$), since the feasibility set is nonempty, compact, and convex, and the objective function is closed and strictly concave. Let $\psi(t)$ to denote the optimal value function of (\ref{mod:MAIR_Parametric_GivenDirection}). Then, $f(t) = (1+t^2) e^{\psi(t)}$, and the parametric search in $(Q_t)$ boils down to the one-dimensional optimization: $\maximize_{t\geq 0}  f(t)$.

To make the domain of $t$ bounded, we need one more step. 
Note that, $t\geq 0$ is equivalent to $\theta \in[0,\pi/2]$.

\begin{observ}
\label{observ:AngelsSet}
Any rectangle with $\pi/4<\theta < \pi/2$ has an identical counterpart with $\theta'=\theta-\pi/2$ and $\theta'\in(-\pi/4,0)$, which satisfies conditions $u(1)>0, \; v(2)>0$ and can be obtained by the linear transformation $z\rightarrow x',\; x\rightarrow y',\; y+z-x\rightarrow z',\; u'=-v,\; v'=u$. Between the identical rectangles, we have $|\theta-\theta'|=\pi/2$. Using the same transformation, we obtain that the rectangle with $\theta=\pi/2$ is identical to the rectangle with $\theta'=0$ and also the rectangles with $\theta=\pi/4$ and $\theta'=-\pi/4$ are identical. Therefore, it suffices to consider only the rectangles with $\theta\in[-\pi/4,\pi/4]$, or equivalently $t\in[-1,1]$.
\end{observ}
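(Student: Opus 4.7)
The plan is to verify this Observation by direct computation: the transformation simply relabels the corners of the same geometric rectangle, so inclusion in $C$ is automatic, and what needs to be checked is (i) the sign conditions on $(u',v')$, (ii) the angle relation $|\theta - \theta'| = \pi/2$, and (iii) the behavior at the endpoints $\theta = \pi/2$ and $\theta = \pi/4$.

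First, I would parametrize the rectangle. With $t = \tan\theta$ and $\theta \in [0,\pi/2]$, the perpendicularity $u^T v = 0$ together with the sign convention $u_1 \geq 0,\; v_2 \geq 0$ used in the first of the four subproblems allows us to write
\begin{equation*}
u = |u|(\cos\theta,\sin\theta), \qquad v = |v|(-\sin\theta,\cos\theta).
\end{equation*}
For $\theta \in (\pi/4,\pi/2)$ all four components satisfy the strict sign conditions assumed in the parametrization. The four corners of the rectangle are $x,\;y=x+u,\;z=x+v,\;y+z-x=x+u+v$.

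Next, I would apply the relabeling $x' = z,\; y' = x,\; z' = y+z-x$. A direct computation gives $u' = y'-x' = x-z = -v$ and $v' = z'-x' = y-x = u$, confirming the formulas stated. Substituting the explicit forms above:
\begin{equation*}
u' = |v|(\sin\theta,-\cos\theta), \qquad v' = |u|(\cos\theta,\sin\theta).
\end{equation*}
For $\theta \in (\pi/4,\pi/2)$ we have $u'_1 = |v|\sin\theta > 0$ and $v'_2 = |u|\sin\theta > 0$, so the sign conventions hold. The new angle satisfies $\tan\theta' = u'_2/u'_1 = -\cot\theta = \tan(\theta - \pi/2)$, hence $\theta' = \theta - \pi/2 \in (-\pi/4, 0)$ and $|\theta - \theta'| = \pi/2$. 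The set of four corners $\{x',y',z',y'+z'-x'\} = \{z,x,y+z-x,y\}$ is identical to the original set, so the inclusion of every corner in $C$ is trivially preserved, i.e., $(u',v') \in S$ iff $(u,v) \in S$.

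Finally, I would check the two boundary cases separately by plugging in. For $\theta = \pi/2$: $u = (0,|u|)$ and $v = (-|v|,0)$, so $u' = -v = (|v|,0)$ gives $\theta' = 0$. For $\theta = \pi/4$: $u = \tfrac{|u|}{\sqrt 2}(1,1)$ and $v = \tfrac{|v|}{\sqrt 2}(-1,1)$ give $u' = \tfrac{|v|}{\sqrt 2}(1,-1)$, hence $\tan\theta' = -1$ and $\theta' = -\pi/4$. Combining these cases with the generic case shows that every rectangle with $\theta \in (\pi/4, \pi/2]$ has an identical counterpart with $\theta' \in [-\pi/4, 0)$, whence the optimization can be restricted to $\theta \in [-\pi/4, \pi/4]$, i.e., $t \in [-1,1]$. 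There is no real obstacle here; the only mild subtlety is being careful that the relabeling preserves the corner set (and therefore feasibility) automatically, so that only the algebraic identities on the side vectors need to be verified.
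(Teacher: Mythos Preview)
Your verification is correct. Note that in the paper this statement is presented as an \emph{Observation} and is not accompanied by a separate proof; the transformation $z\to x',\ x\to y',\ y+z-x\to z'$ and the resulting identities $u'=-v,\ v'=u$ are simply asserted as self-evident. Your proposal supplies exactly the routine check the paper leaves implicit: writing $u=|u|(\cos\theta,\sin\theta)$, $v=|v|(-\sin\theta,\cos\theta)$, computing $u',v'$, reading off $\theta'=\theta-\pi/2$, and noting that the corner set $\{x',y',z',y'+z'-x'\}=\{z,x,y+z-x,y\}$ is a permutation of the original corners so feasibility in $C$ (and hence membership of $(u',v')$ in $S$) is automatic. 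There is nothing to add; this is the natural and essentially unique way to justify the observation.
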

Thus, the problem of finding the MAIR in a convex set $C \subset \mathbb{R}^2$ is
\begin{eqnarray}
\label{mod:MAIR_Parametric}
\maximize_{t} \; f(t) = (1+t^2) e^{\psi(t)} & & \st \\
-1 \; \leq \; t \; \leq \; 1\; , & &  \nonumber
\end{eqnarray}
where $\psi(t)$ is the optimal value of (\ref{mod:MAIR_Parametric_GivenDirection}). This problem has an optimal solution by the following proposition. 

\begin{prop}
\label{prop:Weirestrass}
The function $f(t)$ attains its maximum.
\end{prop}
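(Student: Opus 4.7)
The plan is to apply the Weierstrass extreme value theorem on the compact interval $[-1,1]$; for this it suffices to show that $f$ is upper semi-continuous. Rather than argue directly through $\psi$, which can a priori equal $-\infty$, I will work with the equivalent nonnegative form
\[
f(t) = (1+t^2)\, g(t), \qquad g(t) = \max\bigl\{u_1 v_2 \,:\, u_2 - t u_1 = 0,\; v_1 + t v_2 = 0,\; u_1,v_2 \geq 0,\; (u,v) \in S\bigr\},
\]
using the identity $|\det(u,v)| = |u_1 v_2 - u_2 v_1| = (1+t^2) u_1 v_2$ that holds on the feasible set. I first need to verify that $S$ is compact in $\mathbb{R}^4$: since $C$ is compact and convex, the set $\{(x,y,z) \in C^3 : y+z-x \in C\}$ is compact (closed subset of the compact $C^3$), and $S$ is its continuous image under $(x,y,z) \mapsto (y-x,z-x)$. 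Also $(0,0) \in S$ (take $x=y=z \in C$), so the feasible set defining $g(t)$ is nonempty for every $t \in [-1,1]$, and $g(t) \geq 0$.

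Next I would show that $g$ is upper semi-continuous. Fix $t \in [-1,1]$ and let $t_k \to t$ in $[-1,1]$; by compactness of the feasible set and continuity of the objective, for each $k$ there exists a maximizer $(u^k,v^k)$ with $g(t_k) = u_1^k v_2^k$. Compactness of $S$ lets me extract a subsequence with $(u^k,v^k) \to (u^*,v^*) \in S$. The equality and sign constraints pass to the limit by joint continuity in $(t,u,v)$, giving $u_2^* = t u_1^*$, $v_1^* = -t v_2^*$, and $u_1^*,v_2^* \geq 0$; hence $(u^*,v^*)$ is feasible for $g(t)$. Therefore
\[
g(t) \;\geq\; u_1^* v_2^* \;=\; \lim_{k\to\infty} u_1^k v_2^k \;=\; \lim_{k\to\infty} g(t_k) \;=\; \limsup_{k\to\infty} g(t_k),
\]
so $g$ is upper semi-continuous on $[-1,1]$. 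Since $t \mapsto 1+t^2$ is continuous and strictly positive and $g \geq 0$, the product $f = (1+t^2)g$ is upper semi-continuous as well, and therefore attains its supremum on the compact set $[-1,1]$.

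I do not anticipate any serious obstacle: the argument is a standard compactness-continuity chase, and the only point requiring care is the passage to the multiplicative formulation to avoid the $-\infty$ values of the logarithmic reformulation $\psi$. One could further promote upper semi-continuity to full continuity of $g$ via a Berge-type lower hemicontinuity argument (perturbing an optimizer $(u,v)$ at $t$ to $(u_1,t_k u_1,-t_k v_2,v_2)$ and, if necessary, shrinking slightly toward $0 \in S$ using the convexity of $S$), but this is not required for the statement at hand.
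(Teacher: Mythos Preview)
Your proof is correct and follows the same high-level strategy as the paper: establish that $f$ is upper semi-continuous on the compact interval $[-1,1]$ and invoke Weierstrass. The route differs in how upper semi-continuity is obtained. The paper works with the logarithmic reformulation $\psi(t)$ and appeals to sensitivity results from parametric optimization (closedness of the objective, Slater's condition, LICQ at the optimizer) to conclude that $\psi$, and hence $f(t)=(1+t^2)e^{\psi(t)}$, is upper semi-continuous. You instead pass to the multiplicative form $g(t)=\max\{u_1 v_2:\dots\}$ and give a direct, self-contained compactness argument: extract limits of maximizers along $t_k\to t$, check feasibility in the limit, and compare values. This is essentially the upper semi-continuity half of Berge's maximum theorem proved by hand.

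What your approach buys is that it is more elementary (no external sensitivity references), and it sidesteps the possibility $\psi(t)=-\infty$ altogether by working with the nonnegative $g$. What the paper's approach buys is brevity once one is willing to cite the parametric-programming literature, and it also records auxiliary facts (uniqueness of the maximizer via strict concavity) that are not needed for the bare statement but are of independent interest. One minor stylistic point: in your limit argument, first pass to a subsequence along which $g(t_k)\to\limsup_k g(t_k)$ before extracting the convergent subsequence in $S$; as written, the equality $\lim_k g(t_k)=\limsup_k g(t_k)$ presumes the full sequence converges.
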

\begin{proof}
Since (\ref{mod:MAIR_Parametric_GivenDirection}) is a convex optimization problem for any given $t$, its set of maximizers is also convex. 
Moreover, the feasibility set of (\ref{mod:MAIR_Parametric_GivenDirection}) is compact, its objective function is closed, and Slater's condition holds. Therefore, the set of maximizers is nonempty, closed, and bounded as well; in fact, it is a singleton due to the strict concavity of the objective function. Adding the fact that the linear independence constraint qualification (LICQ) holds for the optimal solution $(u^*(\bar{t}),v^*(\bar{t}))$ for any given $\bar{t}$, 
 we obtain that the optimal value function $\psi(t)$ is upper semi-continuous on its domain $-1\leq t \leq 1$ \cite{fiacco1990sensitivity,still2018lectures}. 
Hence, the function $f(t)$ is upper semi-continuous on $-1\leq t \leq 1$, since $e^x$ is continuous on $\mathbb{R}$ 
and the function $(1+t^2)$ is continuous on $-1\leq t \leq 1$. 
Moreover, the domain of $t$ is a compact set, i.e., the interval $[-1,1]$. Therefore, by the  
Weierstrass extreme value theorem, $f(t)$ attains its supremum.
\end{proof}

However, finding this maximum could be difficult. This is because the function $f(t)$ is not explicitly formulated, since it depends on $\psi(t)$, and it has some undesirable properties. Even though it is an upper semi-continuous and univariate function, it is non-smooth, non-differentiable, non-unimodal, and non-concave and in some cases, it could be a very ill-behaved function 
(see Figures \ref{fig:6a} and \ref{fig:Random2a} in Section \ref{sec:EvaluationAndExamples} for an example). Therefore, it is very difficult to design an algorithm that guarantees to find the optimal solution. Figure \ref{fig:sensitivity} visualizes this difficulty by showing the sensitivity of the area of the inscribed rectangles to the direction $t$. Nevertheless, being restricted to a one-dimensional search enables us to develop a good approximation algorithm.

Such an algorithm will not only solve the MAIR problem but also provides, by fixing the direction, a much more general algorithm for the 2-dimensional case of the MVAIR, i.e., the MAAIR problem in a convex set $C \subset \mathbb{R}^2$. It is a more general algorithm than the interior point algorithm presented in Section \ref{subsec:SolvingMVAIR}, in the sense that it finds the largest rectangle aligned to not only the regular axes but also to any rotated axes in any direction without performing the rotation. 

\begin{figure}[t]
  \centering
  \includegraphics[width=0.4\textwidth]{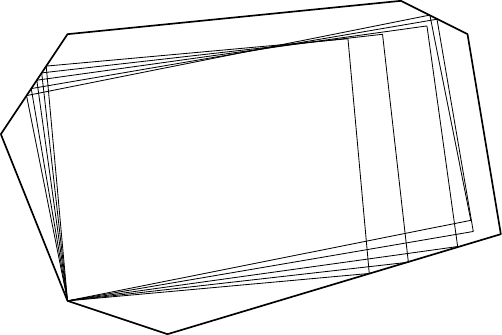}
  \caption{Sensitivity of the inscribed rectangles to the direction assuming a fixed vertex-corner. }
  \label{fig:sensitivity}
\end{figure}

\subsubsection{Finding the MAAIR in any Given Direction (MAAIR-$\mbox{F}_{\mbox{dir}}$)}
\label{subsubsec:Parametric-MAAIR}
In this section we provide an optimization approach for finding the largest inscribed rectangle in any given direction, i.e. aligned to any rotated axes, in a convex set. The following theorem states the main result of this section.
\begin{thm}
For a compact convex set, $C \subset \mathbb{R}^2$ defined by $n$ convex inequalities, a $(1-\varepsilon)$--approximation to the MAAIR regarding the perpendicular axes in any given direction can be found in 
 $\mathcal{O}(n \sqrt{n} \log \frac{n}{\varepsilon})$ time. 
\end{thm}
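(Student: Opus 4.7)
The plan is to reduce the direction-specified MAAIR problem to the standard axis-aligned MAAIR problem by a rotation of coordinates, and then appeal directly to \cref{thm:MVAIR_Complexity} with $d=2$, followed by the rounding procedure mentioned in the remark after that theorem to convert the $\varepsilon$-approximation into an exact optimum.

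First, I would perform a preprocessing step: given the direction along which the rectangle's sides must be aligned, apply the corresponding rotation $R_\theta$ to each of the $n$ convex inequalities defining $C$. Because rotation is a bijective linear isometry, it preserves both convexity and the number of defining inequalities, producing a rotated convex set $C' = R_\theta C$ whose MAAIR (with respect to the standard axes) corresponds one-to-one with the MAAIR of $C$ in the desired direction and has the same area. This rotation of the description of $C$ takes $\mathcal{O}(n)$ time and leaves the problem invariants (dimension $d=2$, number of inequalities $n$, and encoding length $L$, up to a constant) unchanged; this is exactly the affine-invariance property of the log-barrier method noted in \cref{subsec:SolvingMVAIR}.

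Next, I would invoke the logarithmic barrier algorithm from \cref{subsec:SolvingMVAIR} on $C'$ with $d=2$. By \cref{thm:MVAIR_Complexity}, this produces an $\varepsilon$-approximation to the MAAIR of $C'$ in $\mathcal{O}((d^3+d^2 n)\sqrt{n}\log(n/\varepsilon)) = \mathcal{O}(n\sqrt{n}\log(n/\varepsilon))$ time when specialized to $d=2$ (since for a bounded polytope we have $n>d$). To obtain the exact MAAIR rather than an $\varepsilon$-approximation, I would set the accuracy parameter to $\varepsilon \leq 2^{-2L}$ and then apply the polynomial-time rounding procedure mentioned in the remark following \cref{thm:MVAIR_Complexity}. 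With this choice, $\log(1/\varepsilon) = \mathcal{O}(L)$, so $\log(n/\varepsilon) = \mathcal{O}(\log n + 2L)$, yielding a total running time of $\mathcal{O}(n\sqrt{n}(\log n + 2L))$ for the barrier phase. Finally, applying $R_\theta^{-1}$ to the four corners of the rectangle found in $C'$ returns the answer in the original frame in $\mathcal{O}(1)$ time.

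The steps are largely mechanical once one has \cref{thm:MVAIR_Complexity} in hand, so the only real subtlety is ensuring that none of the complexity parameters degrade under the rotation: the dimension $d=2$ is unchanged, the count of convex inequalities is preserved exactly, and $L$ changes only by a constant factor because the rotation matrix $R_\theta$ is fixed (its rational approximation to precision $2^{-\mathcal{O}(L)}$ suffices to keep the bit-length of the rotated coefficients within $\mathcal{O}(L)$). The potentially delicate point is the $\varepsilon$-to-exact conversion; this is why the theorem statement carries the $2L$ additive term, and it is handled uniformly by the rounding procedure cited in the remark. Combining the $\mathcal{O}(n)$ preprocessing, the $\mathcal{O}(n\sqrt{n}(\log n + 2L))$ barrier phase, the polynomial rounding, and the $\mathcal{O}(1)$ post-processing, the total time is $\mathcal{O}(n\sqrt{n}(\log n + 2L))$, as claimed.
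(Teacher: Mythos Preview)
Your proof is correct but takes a genuinely different route from the paper. The paper does \emph{not} rotate the set; instead it works directly in the original coordinates and encodes the prescribed direction through the parameter $t$ by adding the two linear equality constraints $u_2 - t u_1 = 0$ and $v_1 + t v_2 = 0$ to the parametric model (\ref{mod:MAIR_Parametric_GivenDirection}). Expanding this yields the 10-variable convex program (\ref{model-11}) with the $6\times 10$ equality matrix $A$ and the $4n\times 10$ inequality matrix $\tilde{P}$ (one block of $n$ inequalities for each of the four corners), to which the self-concordant log-barrier analysis of \cref{subsec:SolvingMVAIR} is applied; since the dimension is constant, the per-iteration cost is $\mathcal{O}(n)$ and the iteration bound $\mathcal{O}(\sqrt{n}\log(n/\varepsilon))$ with $\varepsilon = 2^{-2L}$ gives the stated total.

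Your rotate-then-solve reduction is cleaner conceptually and reuses \cref{thm:MVAIR_Complexity} as a black box on the efficient $2d$-variable formulation (\ref{mod:MVAIR-Polytope-Efficient}). The paper's direct-parameter approach buys two things: (i) the direction $t$ enters only through two scalar equalities, so when $t$ is swept in the outer MAIR search of \cref{subsubsec:Approx_MAIR} one need not re-encode $C$ at every step; and (ii) it sidesteps the question of how rotation interacts with the \emph{representation} of general (non-linear) convex inequalities --- the paper's own remark after the theorem flags exactly this, noting that for general convex sets ``such change of coordinates could be computationally very expensive.'' Your argument that composing each $g_i$ with $R_\theta^{-1}$ preserves convexity and the inequality count is correct, but the claim that the encoding length stays $\mathcal{O}(L)$ after substituting a rational approximation of $R_\theta$ is the one place where you are being slightly informal; for polytopes it is immediate, while for arbitrary convex inequalities it depends on how they are encoded. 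The paper's formulation avoids this issue entirely.
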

\begin{proof}
Let's assume, without loss of generality, that $C$ is a convex polygon. The problem (\ref{mod:MAIR_Parametric_GivenDirection}) can be expanded as
\begin{eqnarray}
\label{model-10}
    &\textrm{maximize} & \log u_1 + \log v_2    \\
    &\st   &   \nonumber \\
    & & \begin{array}{rcl}
            u_2 - t u_1 & = & 0  \nonumber \\
            v_1 + t v_2 & = & 0  \nonumber \\
            u-y+x & = & 0  \nonumber \\
            v-z+x & = & 0  \nonumber \\
            Px & \leq & b   \nonumber \\
            Py & \leq & b   \nonumber \\
            Pz & \leq & b   \nonumber \\
            P(y+z-x) & \leq & b \nonumber
    \end{array}
\end{eqnarray}
where $P\in \mathbb{R}^{n\times 2}$ and $b\in \mathbb{R}^n$ are the given characterizations of the convex polygon $C$. Putting $t=0$ gives the MAAIR with respect to the regular (non-rotated) axes.
 The assumption of $C$ being a polygon is not restrictive since problem (\ref{model-10}) can be defined similarly for any other closed and bounded convex set that can be defined with a ``finite'' number of convex inequalities. 

Suppose $t$ is given. We can rewrite (\ref{model-10}) in the matrix form. Define 
$s=(u^T v^T x^T y^T z^T)^{T}$, which is a $10\times 1$ vector. Then we have
\begin{eqnarray}
\label{model-11}
    &\textrm{minimize} & -\left(\log e_1^T s + \log e_4^T s \right)    \\
    &\st   &   \nonumber \\
    & & \begin{array}{rcl}
            As & = & 0  \nonumber \\
            \tilde{P}s & \leq & \tilde{b}  \nonumber
    \end{array}
\end{eqnarray}
where $e_i$ is the $i$th column of a 10-by-10 identity matrix,  
$\tilde{b}=(b_1,b_1,b_1,b_1,b_2,b_2,b_2,b_2,...,b_n,b_n,b_n,b_n)^{T}$, i.e. a $4n\times 1$ vector, $A$ is
\begin{equation*}
    A=\left[
    \begin{array}{cccccccccc}
        -t & 1 & 0 & 0 & 0 & 0 & 0 & 0 & 0 & 0 \\
        0 & 0 & 1 & t & 0 & 0 & 0 & 0 & 0 & 0 \\
        1 & 0 & 0 & 0 & 1 & 0 & -1 & 0 & 0 & 0 \\
        0 & 1 & 0 & 0 & 0 & 1 & 0 & -1 & 0 & 0 \\
        0 & 0 & 1 & 0 & 1 & 0 & 0 & 0 & -1 & 0 \\
        0 & 0 & 0 & 1 & 0 & 1 & 0 & 0 & 0 & -1
    \end{array}
    \right]_{(6\times 10)}
\end{equation*}
and $\tilde{P}$ is
\begin{equation*}
    \tilde{P}=\left[
    \begin{array}{cccccccccc}
        0 & 0 & 0 & 0 & p_{11} & p_{12} & 0 & 0 & 0 & 0 \\
        0 & 0 & 0 & 0 & 0 & 0 & p_{11} & p_{12} & 0 & 0 \\
        0 & 0 & 0 & 0 & 0 & 0 & 0 & 0 & p_{11} & p_{12} \\
        0 & 0 & 0 & 0 & -p_{11} & -p_{12} & p_{11} & p_{12} & p_{11} & p_{12} \\
        \cdots & & & & \cdots & & & & & \\
        \vdots & & & & \ddots & & & & & \\
        0 & 0 & 0 & 0 & p_{n1} & p_{n2} & 0 & 0 & 0 & 0 \\
        0 & 0 & 0 & 0 & 0 & 0 & p_{n1} & p_{n2} & 0 & 0 \\
        0 & 0 & 0 & 0 & 0 & 0 & 0 & 0 & p_{n1} & p_{n2} \\
        0 & 0 & 0 & 0 & -p_{n1} & -p_{n2} & p_{n1} & p_{n2} & p_{n1} & p_{n2}
    \end{array}
    \right]_{(4n\times 10)}
\end{equation*}

This gives us the log-barrier function as
\[
f(s)= -\tau \,\left(\log e_1^T s + \log e_4^T s \right) - \left(\sum_{i=1}^6 \log(-a_i^T s) + \sum_{i=1}^6 \log(a_i^T s)\right) - \left(\sum_{i=1}^{4n} \log(\tilde{b}_i-\tilde{p}_i^T s)\right)\,,
\]
where $a_i$ is the $i$th column of $A$ and $\tilde{p}_i$ is the $i$th column of $ \tilde{P}$.

The function $f:\mathbb{R}^{10} \rightarrow \mathbb{R}$ is convex with fixed dimension, hence this is a convex optimization problem with 
fixed dimension. Therefore, by Theorem \ref{thm:MVAIR_Complexity} and the analysis of Section \ref{subsec:SolvingMVAIR}, for any given direction $-1\leq t \leq 1$ including the traditional axis-aligned case ($t=0$), this can be solved to  
a $(1-\varepsilon)$--approximation solution by the logarithmic barrier algorithm in 
$\mathcal{O}(n \sqrt{n} \log \frac{n}{\varepsilon})$ time.
\end{proof}

\begin{rem}
This result is more general than the existing results such as the algorithm of Alt \etal \cite{alt1995computing}, as it deals with general convex sets, compared to the existing results that are limited to convex polygons. Moreover, it can find the MAAIR aligned to any rotated axes without the rotation preprocessing . For a convex polygon with $n$ vertices, one could find the rotated coordinates of all vertices in linear time and then perform one of the existing axis-aligned algorithms. However, for general sets, such change of coordinates could be computationally very expensive, and yet after such rotation, one would need an algorithm capable of finding MAAIR in general convex sets, for which to the best of our knowledge this paper proposes the first such algorithm.
\end{rem}
\begin{rem}
It is important to observe that this computational complexity depends only on $n$, the number of convex inequalities defining the set $C$. For example, for ellipses  
it will be $\mathcal{O}(1)$ as only one inequality defines an ellipsoidal convex set. This is in contrast with the computational geometric algorithms that usually approximate a convex set with a convex polygon defined by a large number of vertices. 
This means, although this algorithm may underperform over polygons it can easily outperform the polygonal approximation approaches for a wide range of convex sets.
\end{rem}

\subsubsection{An Approximation Algorithm for Finding the MAIR}
\label{subsubsec:Approx_MAIR} 
We can use this fast  
algorithm, that finds 
a $(1-\varepsilon)$--approximation to
the largest rectangle in any given direction inscribed in a compact convex set $C \subset \mathbb{R}^2$, as a subroutine to obtain an approximation algorithm for the case where we want to find the largest inscribed rectangle among all directions, i.e., the MAIR.

Consider a compact and convex set $C\in \mathbb{R}^2$. We seek to solve the Problem (\ref{mod:MAIR_Parametric}), which has a univariate objective function $f(t) = (1+t^2) e^{\psi(t)}$, where $\psi(t)$ is the optimal value of (\ref{mod:MAIR_Parametric_GivenDirection}).
For simplicity, in this section, we use the notation $|\cdot|$ as a measure of both area and length. Let $R_{opt}$ to be the optimal solution, i.e., the MAIR, and $R_{apx}$ to be the $(1-\varepsilon)$--approximation solution, i.e. $|R_{apx}|\geq (1-\varepsilon)|R_{opt}|$, that we seek to find. The basic intuition behind our algorithm is that the direction of a $(1-\varepsilon)$--approximation solution should be very close to the direction of the optimal rectangle. Suppose the optimal rectangle happens at direction angle $\theta^*$. We want to know how much the area of the rectangle changes if we change the direction slightly. So we want to find a lower bound for the area of an approximation rectangle with direction angle $\theta_{apx}\in[\theta^*-\alpha,\theta^*+\alpha]$, for some small $\alpha>0$.

\begin{lem}
\label{lem:aspect-ratio-UB} 
Let $C\subset \mathbb{R}^2$ be a compact convex set and $R_{opt} \subset C$ be the MAIR. The aspect ratio $\rho = \AR(R_{opt})$ is bounded from above. When $C$ is a polygon an upper bound that only depends on $C$ can be obtained in linear time and an upper bound that depends on a rectangular outer approximation of $C$, i.e., an enclosing rectangle, can be obtained in logarithmic time. 
\end{lem}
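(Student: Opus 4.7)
The plan is to derive an explicit upper bound on $\rho = \AR(R_{opt})$ from just two geometric invariants of $C$ --- its diameter and its area --- and then to estimate those invariants (or suitable proxies) within the two claimed time budgets. First I would let $w\ge h>0$ denote the side lengths of $R_{opt}$, so that $\rho = w/h \ge 1$. Since $R_{opt}\subset C$, the diagonal $\sqrt{w^{2}+h^{2}}$ cannot exceed $\diam(C)$, and by Radziszewski's lower bound \cite{radziszewski1952probleme} we have $wh = \Area(R_{opt}) \ge \Area(C)/2$. Substituting $w=\rho h$ and eliminating $h^{2}$ between these two inequalities yields the quadratic
\[
\rho^{2} - 2\,\AR_{cvx}(C)\,\rho + 1 \;\le\; 0,
\]
whose nonnegative solution is $\rho \le \AR_{cvx}(C) + \sqrt{\AR_{cvx}(C)^{2}-1} \le 2\,\AR_{cvx}(C)$. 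The discriminant is nonnegative because the isodiametric inequality forces $\AR_{cvx}(C) \ge 4/\pi > 1$, so the bound is always well-defined.

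For the linear-time statement, when $C$ is a polygon with $n$ vertices I would compute $\diam(C)$ in $\mathcal{O}(n)$ time by rotating calipers and $\Area(C)$ in $\mathcal{O}(n)$ by the shoelace formula; plugging the two numbers into the closed form above gives an upper bound on $\AR(R_{opt})$ that depends only on $C$ and is produced in linear total time.

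For the logarithmic-time statement, I would replace $\diam(C)$ and $\Area(C)$ by proxies extracted from a fast outer approximation. Assuming the polygon's vertices are stored as a sorted array or a binary search tree, the axis-aligned bounding rectangle $B \supset C$ can be built in $\mathcal{O}(\log n)$ time from four extreme-point binary searches, immediately giving $\diam(R_{opt}) \le \diam(C) \le \diam(B)$. To obtain a matching lower bound on $\Area(R_{opt})$ within the same budget, I would invoke the exact $\mathcal{O}(\log n)$ MAAIR routine of Alt \etal\ \cite{alt1995computing} on $C$ with respect to the axes of $B$, producing an inscribed rectangle $R_{in}$ for which $\Area(R_{opt}) \ge \Area(R_{in})$. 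Substituting $\diam(B)$ for $\diam(C)$ and $\Area(R_{in})$ for $\Area(C)/2$ in the closed form then yields the advertised $\mathcal{O}(\log n)$-computable bound that depends on the outer approximation~$B$.

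I expect the main obstacle to be the logarithmic-time lower bound on $\Area(R_{opt})$: the outer rectangle $B$ on its own is not enough, because $C$ can be arbitrarily thin inside its bounding box and the ratio $\Area(C)/\Area(B)$ admits no universal positive lower bound. The way around this is to pair $B$ with a logarithmic-time inner witness --- the axis-aligned MAAIR of $C$ aligned to $B$'s axes --- which is exactly what Alt \etal's subroutine delivers, so the combined procedure fits within the $\mathcal{O}(\log n)$ budget while still resting on a genuine outer-approximation bound on $\diam$.
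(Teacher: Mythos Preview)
Your derivation is correct and in fact yields a sharper constant than the paper. The paper bounds only the long side by $w\le\diam(C)$, combines this with Lassak's $wh\ge|C|/2$ through an auxiliary enclosing rectangle $R_{diam}$, and arrives at $\rho\le 4\,\AR_{cvx}(C)$; your use of the full diagonal inequality $w^{2}+h^{2}\le\diam(C)^{2}$ together with the same area lower bound produces the quadratic $\rho^{2}-2\,\AR_{cvx}(C)\,\rho+1\le0$ and hence $\rho\le 2\,\AR_{cvx}(C)$, a genuine improvement. The linear-time polygon computation (shoelace plus rotating calipers) is identical to the paper's.

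The logarithmic-time part is where you diverge. The paper stays with the \emph{outer} object throughout: from the axis-aligned bounding box it locates the two touching points $p,q$ on the short sides, builds the enclosing rectangle $R'$ aligned to $\overline{pq}$, and then invokes Lemma~5 of Ahn \etal\ to control both $\diam(C)$ and $|C|$ by $w'$ and $|R'|$, giving $\rho\le 16\sqrt{2}\,\AR(R')$. Your route instead pairs the outer box $B$ (for the diameter proxy) with an \emph{inner} witness $R_{in}$ obtained from Alt \etal's $\mathcal{O}(\log n)$ MAAIR routine (for the area proxy). Both approaches meet the $\mathcal{O}(\log n)$ budget, but yours does not literally deliver ``an upper bound that depends on an outer approximation of $C$'': the quantity $\diam(B)^{2}/\Area(R_{in})$ depends on an inner rectangle as well. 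If the intent of the lemma is merely a log-time computable bound, your construction is fine and arguably cleaner; if the phrasing is meant strictly, you would need to replace the $R_{in}$ step by an outer-only surrogate such as the paper's Ahn-\etal\ estimate. A minor citation point: for general compact convex $C$ the $\tfrac12$-area lower bound is due to Lassak rather than Radziszewski, whose result is stated for polygons.
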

\begin{proof}
Let $w\geq h >0$ be the side lengths of $R_{opt}$. Also, let $R_{diam}$ be the minimum area rectangle enclosing $C$ such that its longer side is parallel to a $diam(C)$ with the same length. Let $h_{diam} \leq \diam(C)$ be the length of the side of $R_{diam}$ that is perpendicular to $diam(C)$, i.e., the width of $C$ when seen from the same direction angle of $diam(C)$. See Figure \ref{fig:AR-bound-proof} for an illustration. By the fact that $C \subset R_{diam}$ and the convexity of $C$ we have
\[
\frac{\diam(C)h_{diam}}{2} = \frac{|R_{diam}|}{2} \leq |C| \leq |R_{diam}| = \diam(C)h_{diam}
\]
Using Lassak's bound \cite{lassak1993approximation}, we have $wh = |R_{opt}| \geq |C| /2$. Therefore,
\[
\diam(C)h \geq wh =  |R_{opt}| \geq |C| /2 \geq \frac{|R_{diam}|}{4} = \frac{\diam(C)h_{diam}}{4} \geq  \frac{|C|}{4}\; , 
\]
which gives $h \geq h_{diam}/4 \geq \frac{|C|}{4\diam(C)}$. Hence
\begin{equation}
\label{eq:ARbound}
\rho = \AR(R_{opt}) = \frac{w}{h} \leq \frac{\diam(C)}{h_{diam}/4} \leq \frac{4 (\diam(C))^2}{|C|} = 4 \, \AR\,_{cvx} (C)\, ,
\end{equation}
where $\AR\,_{cvx} (\cdot)$ is the aspect ratio of a convex set as defined in Section \ref{subseq:NotationalConventions}.

If the geometry of $C$ is such that $|C|$ and $\diam(C)$ can be computed in constant time, then the upper bound is readily available. Otherwise, when $C$ is a polygon with $n$ vertices, we can find $|C|$ with $\mathcal{O}(n)$ effort by for example triangulation. Also, $\diam(C)$ can be found in $\mathcal{O}(n)$ using the idea of antipodal pairs and parallel support lines introduced by Shamos \cite{shamos1978computational} or the idea of rotating calipers introduced in \cite{toussaint1983solving}. 

For the general convex set $C$ we take one further step to bound the right hand side of (\ref{eq:ARbound}) with something else to make the computation of the upper bound simpler. Let $R_b$ be the minimum area axis-aligned bounding (enclosing) rectangle of $C$ as shown in Figure \ref{fig:AR-bound-proof}. This can be obtained in $\mathcal{O}(T_x (C) )$ time, where $T_x (\cdot)$ is the time that it takes to find an extreme point in a given direction in a convex set and would be determined according to the input. See \cite{ahn2006inscribing} for more discussion on this time. Let $p$ and $q$ be the points where $C$ touches the shorter sides of $R_b$. Let $R'$ denote the minimum area rectangle enclosing $C$ that has a side parallel to the line segment $\overline{pq}$. Let $w' \geq |\overline{pq}|$ be the length of this side and $h'$ the length of its other side. From Lemma 5 of Ahn \etal \cite{ahn2006inscribing}, we have $\diam(C) \leq \sqrt{2}\, w'$ and $|C| \geq |R'|/(2\sqrt{2})$. Thus we have
\begin{equation}
\label{eq:ARbound_loose}
\rho = \AR(R_{opt}) \leq \frac{4 (\diam(C))^2}{|C|} \leq \frac{16\sqrt{2} \,(w')^2}{|R'|} \leq 16 \sqrt{2}\, \AR(R')
\end{equation}
For a convex polygon, this weaker bound can be obtained in $\mathcal{O}(\log n)$, since $T_x (C)=\mathcal{O}(\log n)$ in this case. 
\begin{figure}[t]
  \centering
  \includegraphics[width=0.45\textwidth]{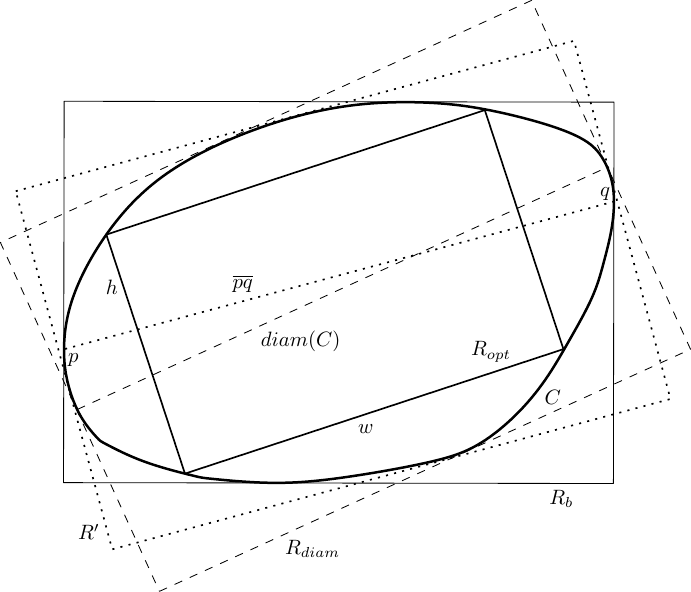}
  \caption{An illustration of the proof of Lemma \ref{lem:aspect-ratio-UB} for the upper bound on the aspect ratio of the MAIR in a convex set $C$. The dashed rectangle $R_{diam}$ is the minimum area rectangle enclosing $C$ induced by the diameter of $C$ and the dotted rectangle $R'$ is the minimum area rectangle enclosing $C$ induced by the line segment $\overline{pq}$ that connects the touching points of $C$ and the shorter sides of its smallest axis-aligned bounding box $R_b$.}
  \label{fig:AR-bound-proof}
\end{figure}
\end{proof}

\begin{rem}
The right hand side of (\ref{eq:ARbound}) is minimized when $C$ is a circle giving an upper bound of $16/\pi < 5.1$, while in that case we have $\AR(R_{opt})=1$. Note that for the purpose of our algorithm we just need to show that $\rho$ is bounded from above and the quality of this upper bound is not our goal here, but we expect that it could be improved.
\end{rem}

\begin{lem}
\label{lem:lower-bound}
Let the optimal rectangle $R_{opt}$ in convex set $C$ to have the aspect ratio $\rho$ and the direction angle $\theta^*$ with the $x$-axis. Also, let $\overline{\rho}$ be the upper bound on $\rho$ from Lemma \ref{lem:aspect-ratio-UB}. Then, $R_{apx}$, the largest inscribed rectangle with direction angle $\theta_{apx}\in[\theta^*-\alpha,\theta^*+\alpha]$, for some small $\alpha>0$, has area  $|R_{apx}|\geq(1-2\overline{\rho}\alpha)|R_{opt}|$.
\end{lem}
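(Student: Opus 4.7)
The plan is to exploit the fact that $R_{opt} \subset C$, and hence any rectangle contained in $R_{opt}$ is automatically contained in $C$. So I will \emph{not} try to manipulate $C$ directly; instead I will construct, for every direction $\theta_{apx}\in[\theta^*-\alpha,\theta^*+\alpha]$, a specific rectangle $R^\star$ inscribed in $R_{opt}$ whose orientation is exactly $\theta_{apx}$, and then bound $|R^\star|/|R_{opt}|$ from below. Since $R_{apx}$ is by definition the largest inscribed rectangle in $C$ with that direction, the inequality $|R_{apx}| \ge |R^\star|$ is immediate.

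Concretely, I would first rotate coordinates so that $R_{opt}$ is axis-aligned with width $w$ and height $h$, $w\ge h$ (so $\rho = w/h$), centered at the origin. Write $\beta = \theta_{apx} - \theta^*$, so $|\beta|\le \alpha$. I then consider the rectangle $R^\star$ centered at the origin, tilted at angle $\beta$, with side lengths $w^\star\ge h^\star$ chosen as large as possible subject to containment in $R_{opt}$. The containment conditions reduce to the two linear inequalities
\begin{equation*}
w^\star\cos\beta + h^\star\sin|\beta| \;\le\; w, \qquad w^\star\sin|\beta| + h^\star\cos\beta \;\le\; h,
\end{equation*}
and setting both to equality (valid whenever $\tan|\beta|<h/w=1/\rho$, which will hold for small $\alpha$) and solving gives
\begin{equation*}
w^\star h^\star \;=\; \frac{wh-(w^2+h^2)\sin\beta\cos\beta}{\cos^2(2\beta)}.
\end{equation*}

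Next I bound this from below. Since $\cos^2(2\beta)\le 1$ and the numerator is nonnegative for $|\beta|$ small, we get $|R^\star|/|R_{opt}| \ge 1 - \tfrac{w^2+h^2}{wh}\sin|\beta|\cos|\beta| \ge 1 - (\rho+1/\rho)|\beta|$. Because $\rho\ge 1$ we have $\rho + 1/\rho \le 2\rho \le 2\overline{\rho}$, and $|\beta|\le\alpha$, hence
\begin{equation*}
|R^\star| \;\ge\; (1 - 2\overline{\rho}\,\alpha)\,|R_{opt}|.
\end{equation*}
Since $R^\star \subset R_{opt} \subset C$ and $R^\star$ has direction angle $\theta_{apx}$, the maximality of $R_{apx}$ among rectangles with that direction gives $|R_{apx}|\ge|R^\star|$, which is the desired bound.

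The only thing to watch is the range of validity of the construction: the formulas require $\tan|\beta|<1/\rho$ so that $h^\star>0$, i.e. the tilted rectangle does not degenerate. This is harmless in the intended regime, since the lemma is only useful when $\alpha$ is small enough that $2\overline{\rho}\,\alpha<1$, and that already forces $\alpha<1/(2\overline{\rho})\le 1/(2\rho)$, which in turn implies $\tan\alpha<1/\rho$. So the main (and only) technical point is the trigonometric computation above; everything else is bookkeeping.
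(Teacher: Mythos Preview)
Your argument is correct and in fact cleaner than the paper's own proof. Both proofs follow the same high-level strategy---inscribe a rectangle of direction $\theta^*+\beta$ inside $R_{opt}$ itself and bound its area---but the constructions differ. The paper anchors a corner of its auxiliary rectangle $R'$ at a corner $a$ of $R_{opt}$, then compares $R'$ to a second, axis-aligned rectangle $R''$ via an ad hoc triangle-area argument ($|T_1|\ge|T_2|$), and finally bounds $|R''|$ using $|\overline{ab}|\le\rho|\overline{bc}|$ together with $\tan\alpha\simeq\alpha$. You instead center the auxiliary rectangle at the center of $R_{opt}$, reduce the containment to a $2\times 2$ linear system, and read off the exact area formula
\[
w^\star h^\star \;=\; \frac{wh-(w^2+h^2)\sin\beta\cos\beta}{\cos^2(2\beta)},
\]
from which the bound $1-(\rho+1/\rho)\sin|\beta|\cos|\beta|\ge 1-2\overline{\rho}\alpha$ follows by elementary inequalities. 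Your route avoids the auxiliary rectangle $R''$, the triangle comparison, and the small-angle approximation (you get a genuine inequality $\sin|\beta|\cos|\beta|\le|\beta|$ rather than $\tan\alpha\simeq\alpha$), at the cost of a short determinant computation. The validity check you give ($2\overline{\rho}\alpha<1\Rightarrow\tan\alpha<1/\rho$) is also sharper than what the paper states. Either approach suffices; yours is the more self-contained of the two.
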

\begin{proof}
First note that it suffices to consider only the extreme points in $[\theta^*-\alpha,\theta^*+\alpha]$. This is because for $\theta^*$ we have $R_{apx}=R_{opt}$ and for any other non-extreme point $\theta'\in[\theta^*-\alpha,\theta^*+\alpha]$ there exists an $\alpha'$ with $0<\alpha'<\alpha$ such that $\theta'=\theta^*-\alpha'$ or $\theta'=\theta^*+\alpha'$, i.e. $\theta'$ is one of the extreme points of $[\theta^*-\alpha',\theta^*+\alpha']$. Hence, we will have $|R_{apx}|\geq(1-2\rho\alpha')|R_{opt}|\geq(1-2\rho\alpha)|R_{opt}|$.  
Also, it suffices to consider only one of the extreme cases, say $\theta^*+\alpha$, as the analysis for the other extreme case is symmetric. Now assume $R_{apx}$ has direction angle $\theta^*+\alpha$, where $R_{apx}$ is the largest inscribed rectangle for this direction and $\theta^*$ could be any angle in $[-\pi/4,\pi/4]$ which is the domain of $\theta$ from Observation \ref{observ:AngelsSet}. 

Consider Figure \ref{fig:lower-bound-proof} as an illustration for the proof.  Let rectangle $\square abcd$, in c.c.w. order, to represent $R_{opt}$ and assume, w.l.o.g., that $|\overline{ab}|\geq |\overline{bc}|$ 
and that $a$ is its lower left corner.
Draw a line from $a$ that makes the angle $\theta^*+\alpha$ with the $x$-axis. Let $g$ be the intersection of this line with $\overline{bc}$. Let $R'=\square efgh$ be the largest rectangle with direction $\theta^*+\alpha$ inscribed in $R_{opt}$ that has $g$ as a corner. Notte that corner $f$ lies on the line segment $\overline{ag}$ in the interior of $R_{opt}$. It is clear that $R' \subset R_{opt} \subset C$ and $|R_{apx}|\geq |R'|$, since $R_{apx}$ is the largest rectangle with  the direction angle $\theta^*+\alpha$ inscribed in $C \supseteq R_{opt}$.

Now, draw line segments $\overline{gi}$ and $\overline{ej}$ parallel to $\overline{ab}$ and let $R''$ to denote the rectangle $\square eigj$. For small enough $\alpha$ the line segments $\overline{ej}$ and $\overline{gi}$ do not cross $\overline{ag}$ and $\overline{eh}$, respectively. Let $k$ be the intersection of $\overline{ej}$ and $\overline{gh}$. Also, let $T_1,\,T_2$ be the right triangles $\triangle ehk$ and $\triangle gjk$, respectively. From the construction of $R'$ and $R''$ we have $\max\{|R'|-|R''|,|R''|-|R'|\}=2\max\{|T_1|-|T_2|,|T_2|-|T_1|\}$. 

To show which rectangle has a larger area, first note that we have $\alpha=\angle{bag}=\angle{iga}=\angle{cgh}=\angle{dhe}=\angle{keh}=\angle{fea}$, and $|\overline{ch}| \geq |\overline{jk}|$ and observe that $AR(R')\geq AR(R_{opt})$ and $|\overline{eh}|\geq |\overline{gh}|$.   
This gives 
\[
|\overline{cj}|=|\overline{de}|=|\overline{eh}|\sin \alpha \geq |\overline{gh}|\sin \alpha = |\overline{ch}| \; ,
\]
and
\[
|\overline{cj}|=|\overline{de}|=|\overline{eh}|\sin \alpha \leq |\overline{ag}|\sin \alpha = |\overline{bg}|\, .
\] 
Thus we have
$$ |T_2| = \frac{(|\overline{bc}|-|\overline{bg}|-|\overline{cj}|)\times |\overline{jk}|}{2} \leq \frac{(|\overline{bc}|-2|\overline{cj}|)\times|\overline{ch}|}{2}=\frac{|\overline{bc}|\times|\overline{ch}|-2|\overline{cj}|\times|\overline{ch}|}{2}, $$
and
\begin{eqnarray*}
 |T_1|  = 
  \frac{|\overline{ek}|\times |\overline{cj}|}{2}  & \geq &\frac{|\overline{eh}|\times |\overline{cj}|}{2} \\
 & = & \frac{(|\overline{cd}|-|\overline{ch}|)\times|\overline{cj}|}{2\cos \alpha} \\
 & \geq & \frac{|\overline{cd}|\times|\overline{cj}|-|\overline{cj}|\times|\overline{ch}|}{2} \\
  & \geq & \frac{(|\overline{bc}|-|\overline{cj}|)\times|\overline{ch}|-|\overline{cj}|\times|\overline{ch}|}{2} \\
 & = & \frac{|\overline{bc}|\times|\overline{ch}|-2|\overline{cj}|\times|\overline{ch}|}{2}.
 \end{eqnarray*}

Hence, $|T_1|\geq|T_2|$ and therefore $|R'|\geq |R''|$. Using 
$|\overline{ab}|\leq\rho |\overline{bc}|$, we obtain a lower bound on the area of $R''$ as
\begin{eqnarray*}
|R''|=|\overline{ej}|\times|\overline{gj}| & =& |\overline{ab}|\times(|\overline{bc}|-|\overline{bg}|-|\overline{cj}|) \\
& \geq & |\overline{ab}|\times(|\overline{bc}|-2|\overline{bg}|) \\
& = & |\overline{ab}|\times|\overline{bc}|-2|\overline{ab}|\times |\overline{bg}| \\
& = & |\overline{ab}|\times|\overline{bc}|-2|\overline{ab}|^2\tan\alpha \\
& \geq & (|\overline{ab}|\times|\overline{bc}|)-2\rho\tan\alpha(|\overline{ab}|\times|\overline{bc}|) \\
& \simeq & (1-2\rho\alpha)|R_{opt}|.
\end{eqnarray*}
Note that for sufficiently small $\alpha$ we have $\alpha\sim \tan\alpha$. Therefore, we obtain
\[
|R_{apx}|\geq|R'| \geq |R''|\geq (1-2\rho\alpha)|R_{opt}| \geq (1-2\overline{\rho}\alpha)|R_{opt}|\; ,
\]
which concludes the proof. 
\begin{figure}[t]
  \centering
  \includegraphics[width=0.5\textwidth]{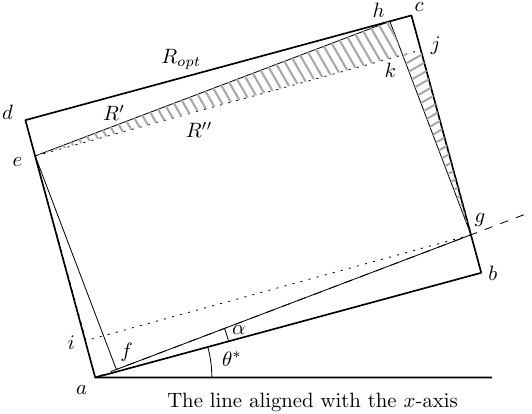}
  \caption{An illustration of the proof of the lower bound in Lemma \ref{lem:lower-bound}. The rectangle $R'$ is the largest rectangle with direction $\theta^*+\alpha$ inscribed in $R_{opt}$ and the rectangle $R''$ is induced by $R'$.}
  \label{fig:lower-bound-proof}
\end{figure}
\end{proof}

It remains to search for a direction angle that is close enough to $\theta^*$, i.e., that falls within the interval $[\theta^*-\alpha,\theta^*+\alpha]$.
\begin{thm}
\label{thm:main-theorem}
Given a compact convex set $C \subset \mathbb{R}^2$, a $(1-\varepsilon)$--approximation solution for the problem of finding the maximum area inscribed rectangle (MAIR) in $C$ can be obtained in 
$\mathcal{O}(\varepsilon^{-1} n \sqrt{n} \log \frac{n}{\varepsilon})$ time.
\end{thm}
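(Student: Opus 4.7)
The plan is to combine the exact algorithm for the MAAIR in a fixed direction (running in $\mathcal{O}(n \sqrt{n}(\log n + 2L))$ time) with a discretization of the parameter space $\theta \in [-\pi/4, \pi/4]$ induced by the sensitivity estimate of \cref{lem:lower-bound}. First, I would invoke \cref{lem:aspect-ratio-UB} to obtain a constant-time-computable (or $\mathcal{O}(\log n)$-time-computable) upper bound $\overline{\rho}$ on $\AR(R_{opt})$. This bound determines the required angular resolution: set $\alpha := \varepsilon/(2\overline{\rho})$, so that any direction within $\alpha$ of the optimal direction $\theta^*$ yields, by \cref{lem:lower-bound}, an inscribed rectangle of area at least $(1 - 2\overline{\rho}\alpha)|R_{opt}| = (1-\varepsilon)|R_{opt}|$.

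Next, I would construct a uniform grid of sample angles $\theta_k = -\pi/4 + 2k\alpha$ for $k = 0, 1, \ldots, K$ with $K = \lceil \pi/(4\alpha) \rceil$. Since $\alpha = \Theta(\varepsilon/\overline{\rho})$ and $\overline{\rho}$ is bounded by a constant depending only on $C$ (at worst $16\sqrt{2}\,\AR(R')$ as in Equation (\ref{eq:ARbound_loose})), we have $K = \mathcal{O}(\varepsilon^{-1})$. For each sampled direction $\theta_k$ (equivalently $t_k = \tan\theta_k \in [-1,1]$), I would run the exact algorithm from \cref{subsubsec:Exact-MAAIR} to compute the largest rectangle inscribed in $C$ whose sides are aligned with the axes rotated by $\theta_k$, at a cost of $\mathcal{O}(n\sqrt{n}(\log n + 2L))$ per direction. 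The algorithm returns the rectangle of maximum area among all $K+1$ candidates.

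For correctness, observe that the unknown optimal direction $\theta^*$ lies within $\alpha$ of some grid point $\theta_{k^*}$ by construction of the grid spacing $2\alpha$. By \cref{lem:lower-bound} applied with this $\theta_{k^*}$, the largest rectangle in direction $\theta_{k^*}$ (which is exactly what our subroutine returns for that index) has area at least $(1-2\overline{\rho}\alpha)|R_{opt}| = (1-\varepsilon)|R_{opt}|$. Hence the maximum over all $K+1$ candidates is an $\varepsilon$-approximation of the MAIR.

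The total running time is the preprocessing cost for $\overline{\rho}$, which is dominated, plus $K+1 = \mathcal{O}(\varepsilon^{-1})$ calls to the exact subroutine at $\mathcal{O}(n\sqrt{n}(\log n + 2L))$ each, giving the claimed $\mathcal{O}(\varepsilon^{-1} n\sqrt{n}(\log n + 2L))$ bound. The main subtlety, and the only point requiring care, is ensuring that the aspect-ratio bound $\overline{\rho}$ is truly independent of $n$ (or at worst polylogarithmic) so that it does not contaminate the $\varepsilon^{-1}$ factor; this is precisely what \cref{lem:aspect-ratio-UB} guarantees, since the bound depends only on geometric quantities of $C$ and is treated as a constant for the purposes of the complexity analysis.
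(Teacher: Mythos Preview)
Your proposal is correct and follows essentially the same approach as the paper's own proof: compute the aspect-ratio bound $\overline{\rho}$ via \cref{lem:aspect-ratio-UB}, discretize $[-\pi/4,\pi/4]$ with angular resolution $\alpha=\varepsilon/(2\overline{\rho})$, run the exact MAAIR subroutine of \cref{subsubsec:Exact-MAAIR} at each sampled direction, and return the best. The only cosmetic difference is that you space grid points $2\alpha$ apart (so $\theta^*$ is within $\alpha$ of a sample), whereas the paper partitions into intervals of length $\alpha$ and samples one point per interval; both yield $\mathcal{O}(\overline{\rho}/\varepsilon)$ calls and the same final bound.
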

\begin{proof}
For any given $\varepsilon>0$ take $\alpha>0$ small enough such that $\varepsilon/2=2\overline{\rho}\alpha$. We divide the interval $[-\frac{\pi}{4},\frac{\pi}{4}]$ into $\frac{\pi/2}{\alpha}=\frac{\pi/2}{\varepsilon/(4\overline{\rho})}=\frac{2\overline{\rho}\pi}{\varepsilon}$ equal pieces and sample one direction point from each piece. One could simply take these equi-distanced partition points as the sample.  
Finding $\overline{\rho}$ requires an $\mathcal{O}(T_x (C) )$ pre-processing time; for convex polygons this can be computed in $\mathcal{O}(\log n)$, as stated in Lemma \ref{lem:aspect-ratio-UB}.
Since the Problem (\ref{model-11}) is solvable for any fixed $t$, then $\psi(t)$ is defined for all $t\in[-1,1]$. We solve the subroutine of Section \ref{subsubsec:Parametric-MAAIR} for each sampled direction with precision $\varepsilon/2$, and choose the maximum value of $f(t)$ over all these samples of $t$. The result will have an area of at least $(1-\varepsilon/2)|R_{apx}| \geq (1-\varepsilon/2)^2|R_{opt}| \geq (1-\varepsilon)|R_{opt}|$, where the first $\geq$ is due to Lemma \ref{lem:lower-bound}. Considering the complexity of the subroutine, the computational complexity of this algorithm is $\mathcal{O}(\frac{2\overline{\rho}\pi}{\varepsilon} n \sqrt{n} \log \frac{n}{\varepsilon/2}) =\mathcal{O}(\varepsilon^{-1} n \sqrt{n} \log \frac{n}{\varepsilon})$.
\end{proof}

\subsubsection{A Family of Approximation Algorithms for Finding the MAIR in a Convex Set}
\label{subsubsec:Family-Approx-MAIR}
In this section we will improve the complexity of this algorithm by integrating other subroutines. For the special case where $C$ is a convex polygon, the parametrized optimization approach is not restricted to use the  
subroutine presented in Section \ref{subsubsec:Parametric-MAAIR} and any of the existing efficient algorithms from the literature that can find the MAAIR in a convex polygon could be used as a subroutine. The following theorem states the results for the case when the best-known algorithm for finding the MAAIR in convex polygons is used as the subroutine in the parametrized optimization algorithm. 
\begin{thm}
\label{thm:Approx-MAIR-Poly-Alt}
Given a compact convex polygon $C$, a $(1-\varepsilon)$--approximation solution for the problem of finding the maximum area inscribed rectangle (MAIR) in $C$ can be obtained in $\mathcal{O}(\varepsilon^{-1}\log n)$ time with an $\mathcal{O}(\varepsilon^{-1}n)$ pre-processing time.
\end{thm}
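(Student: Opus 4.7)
The plan is to mimic the structure of the proof of Theorem~\ref{thm:main-theorem}, i.e., reduce the MAIR problem to a one-dimensional search over directions, but swap the generic interior-point subroutine for the specialized logarithmic-time MAAIR algorithm of Alt \etal~\cite{alt1995computing}. Concretely, first I would invoke Lemma~\ref{lem:aspect-ratio-UB} to obtain an upper bound $\overline{\rho}$ on the aspect ratio of the MAIR inside $P$. For a convex polygon the tighter bound in \eqref{eq:ARbound} can be computed in $\mathcal{O}(n)$ time (linear-time diameter and area), which is absorbed into the pre-processing budget.

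Next, set $\alpha = \varepsilon/(2\overline{\rho})$ and sample the direction parameter uniformly from $[-\pi/4, \pi/4]$ in steps of $2\alpha$, yielding $N = \Theta(\overline{\rho}/\varepsilon) = \Theta(\varepsilon^{-1})$ sample directions $\theta_1,\dots,\theta_N$. For each $\theta_i$, the plan is to produce a copy of $P$ whose edges have been rotated by $-\theta_i$ so that the desired rotated rectangle becomes axis-aligned; this rotation costs $\mathcal{O}(n)$ per direction and is therefore charged to pre-processing, giving an $\mathcal{O}(\varepsilon^{-1} n)$ bound overall (this is precisely the overhead highlighted in the discussion of Knauer \etal's algorithm in~\cref{subsubsec:MAAIR-MAIR-MVAIR-MVIR}).

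After pre-processing, I would run the exact $\mathcal{O}(\log n)$ algorithm of Alt \etal~\cite{alt1995computing} on each rotated polygon to compute the exact MAAIR in that direction, and return the best of the $N$ rectangles. The correctness argument is identical to that of Theorem~\ref{thm:main-theorem}: by construction there is some sampled angle $\theta_i$ within $\alpha$ of the optimal direction $\theta^*$, and Lemma~\ref{lem:lower-bound} then guarantees that the largest rectangle with direction $\theta_i$ has area at least $(1-2\overline{\rho}\alpha)|R_{opt}| = (1-\varepsilon)|R_{opt}|$; since Alt \etal's algorithm returns this rectangle exactly, the overall output is an $\varepsilon$-approximation. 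Summing the costs gives $\mathcal{O}(\varepsilon^{-1} n)$ pre-processing and $\mathcal{O}(\varepsilon^{-1} \log n)$ query time.

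The main subtlety, and the step I expect to be the principal obstacle to defend cleanly, is the accounting of the rotations: one must argue that each call of Alt \etal's subroutine really does run in $\mathcal{O}(\log n)$ and that the $\mathcal{O}(n)$ rotation cost per direction cannot be amortized any further without changing the algorithm. A secondary minor point is to verify that the MAAIR returned in the rotated frame, when rotated back, is still an inscribed rectangle of $P$ of the same area --- which follows immediately because rotations are area- and containment-preserving isometries, but deserves to be stated explicitly to parallel the argument used in Theorem~\ref{thm:main-theorem}.
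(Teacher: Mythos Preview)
Your proposal is correct and follows essentially the same route as the paper: replace the interior-point MAAIR subroutine in Theorem~\ref{thm:main-theorem} by Alt \etal's $\mathcal{O}(\log n)$ algorithm, charge the $\mathcal{O}(n)$ per-direction rotation to pre-processing, and invoke Lemmas~\ref{lem:aspect-ratio-UB} and~\ref{lem:lower-bound} for correctness. The only cosmetic difference is that the paper uses the $\mathcal{O}(\log n)$ (looser) bound of Lemma~\ref{lem:aspect-ratio-UB} for $\overline{\rho}$ rather than the $\mathcal{O}(n)$ tighter one you chose, but either is dominated by the $\mathcal{O}(\varepsilon^{-1}n)$ pre-processing budget.
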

\begin{proof}
In the algorithm described in Theorem \ref{thm:main-theorem}, replace the  
subroutine of Section \ref{subsubsec:Parametric-MAAIR} with the algorithm from Alt \etal \cite{alt1995computing} that takes $\mathcal{O}(\log n)$ to find the MAAIR in a convex polygon for any of the $\mathcal{O}(\varepsilon^{-1})$ directions. By Lemma \ref{lem:aspect-ratio-UB}, finding $\overline{\rho}$ requires an $\mathcal{O}(\log n)$ pre-processing time. Also, the rotation of the axes and finding the new coordinates for $n$ vertices of $C$ takes an $\mathcal{O}(n)$ pre-processing time for each of the $\mathcal{O}(\varepsilon^{-1})$ directions.
\end{proof}

This result when combined with a polygonal approximation of a convex set such as the one proposed by Ahn \etal \cite{ahn2006inscribing} can provide a faster algorithm for the case of convex sets as well. The following definitions are due to \cite{ahn2006inscribing}.

\begin{Def}
Let $\mathcal{U}$ be the set of unit vectors in $\mathbb{R}^2$. For each $u\in \mathcal{U}$ and each compact convex set $C$ , the directional width of $C$ in direction $u$, denoted by $\dwidth(u,C)$, is the minimum width of a slab that contains C and is orthogonal to $u$, or in other words:
\[
\dwidth(u,C) = \max_{x\in C} \langle u, x \rangle - \min_{x\in C} \langle u, x \rangle,
\]
where $\langle \cdot,\cdot \rangle$ denotes the inner product.
\end{Def}

\begin{Def}
For a compact convex set $C$ and a parameter $\varepsilon \in (0, 1)$, the convex set $C_{\varepsilon} \subset C$ is an $\varepsilon$-kernel for $C$ if an only if
\[
 \forall u\in \mathcal{U}, \quad (1-\varepsilon)\dwidth(u,C) \leq \dwidth(u,C_{\varepsilon}).
 \]
\end{Def}

Ahn \etal \cite{ahn2006inscribing} present an $\mathcal{O}(\varepsilon^{-1/2}T_C)$ algorithm for computing an $\varepsilon$-kernel of $C$, i.e., a polygonal inner approximation of $C$,  with $\mathcal{O}(\varepsilon^{-1/2})$ vertices, where $T_C$ is the time needed to perform the following two queries on $C$:
\begin{itemize}
\item given a query line $\ell$, find $C\cap\ell$,
\item given a query direction $u$, find an extreme point in $C$ in direction $u$.
\end{itemize}
For instance, when $C$ is a convex $n$-gon given as an array of its vertices in counter-clockwise order, then these two types of queries can be answered in $T_C = \mathcal{O}(\log n)$ time by binary search.
Their only assumption is that the convex set $C$ is given in a data structure that allows these two queries in time $T_C$. 

\begin{cor}
\label{cor:Approx-MAIR-Set-Ahn}
Given a compact convex set $C$, a $(1-\varepsilon)$--approximation solution for the problem of finding the maximum area inscribed rectangle (MAIR) in $C$ can be obtained in $\mathcal{O}(\varepsilon^{-1/2}T_C+\varepsilon^{-1}\log\varepsilon^{-1/2})$ time.
\end{cor}

\begin{proof}
By Lemma 8 of \cite{cabello2016finding}, an $\varepsilon$-kernel for $C$ contains a rectangle $R_k$ with area at least $(1-32\varepsilon)|R_{opt}|$. We first compute an ($\varepsilon/64$)-kernel for $C$ with $\mathcal{O}((\varepsilon/64)^{-1/2})=\mathcal{O}(\varepsilon^{-1/2})$ vertices in time $\mathcal{O}((\varepsilon/64)^{-1/2}T_C)=\mathcal{O}(\varepsilon^{-1/2}T_C)$. This gives $|R_k|\geq (1-\varepsilon/2)|R_{opt}|$.
We then apply the result of Theorem \ref{thm:Approx-MAIR-Poly-Alt} with precision $\varepsilon/2$ to the polygon $C_{\varepsilon/64}$ in time $\mathcal{O}((\varepsilon/2)^{-1}\log (\varepsilon/2)^{-1/2})=\mathcal{O}(\varepsilon^{-1}\log \varepsilon^{-1/2})$. This gives $|R_{apx}|\geq (1-\varepsilon/2)|R_k|\geq (1-\varepsilon/2)^2|R_{opt}|\geq (1-\varepsilon)|R_{opt}|$.
\end{proof}

\begin{cor}
\label{cor:Approx-MAIR-Poly-Ahn}
Given a compact convex polygon $C$, a $(1-\varepsilon)$--approximation solution for the problem of finding the maximum area inscribed rectangle (MAIR) in $C$ can be obtained in $\mathcal{O}(\varepsilon^{-1/2}\log n+\varepsilon^{-1}\log\varepsilon^{-1/2})$ time.
\end{cor}

\begin{proof}
In this case we have $T_C=\mathcal{O}(\log n)$.
\end{proof}

\section{Evaluation and Illustrative Examples}
\label{sec:EvaluationAndExamples}
In this paper we considered both problems of finding MVAIR (MAAIR) and MVIR (MAIR) inside a convex set. Unlike most of the literature, that considers this set to be a polygon (a set of linear inequalities), this paper relaxes this constraint and allows the set to be any geometric convex body that is expressible in a finite number of convex inequalities. Such convex sets include polytopes (polygons), ellipsoids, and the intersection of convex sets such as the intersection of ellipses and halfspaces or the bounded intersection of the epigraph of convex parabolas with ellipses and halfspaces. The MVIR problem is formulated as a non-convex optimization problem, while a convex optimization problem is developed for the MVAIR problem, both models in higher dimensions. The models can also be easily generalized to the cases of finding other inscribed geometric shapes. The proposed algorithm finds a $(1-\varepsilon)$--approximation to the MVAIR in a convex set in $\mathcal{O}((d^3+d^2 n) \sqrt{n}\log \frac{n}{\varepsilon})$ time, where $d$ is the dimension, $n$ is the number of inequalities defining the convex set.  
For finding MAIR in a 2D convex set, a parametric approach is developed that helps us to find  
a $(1-\varepsilon)$--approximation to the
MAAIR in any given direction of axes in  
$\mathcal{O}(n \sqrt{n}\log \frac{n}{\varepsilon})$ time, which can be used as a subroutine to compute a $(1-\varepsilon)$--approximation to MAIR in  
$\mathcal{O}(\varepsilon^{-1} n \sqrt{n} \log \frac{n}{\varepsilon})$ time.
Using a geometric approach we improve this to $\mathcal{O}(\varepsilon^{-1/2}T_C+\varepsilon^{-1}\log\varepsilon^{-1/2})$. For the special case of convex polygons our geometric algorithms with time complexities of $\mathcal{O}(\varepsilon^{-1}\log n)$ and $\mathcal{O}(\varepsilon^{-1/2}\log n+\varepsilon^{-1}\log\varepsilon^{-1/2})$ work faster than Cabello \etal's algorithm \cite{cabello2016finding}, which is the previous best known result.

To our knowledge, except Amenta's model \cite{amenta1994bounded} for the MVAIR problem and the model of Cabello \etal \cite{cabello2016finding} for the MAIR problem where $C$ is restricted to be a convex polygon, no other optimization-based algorithm is published so far for the problems discussed in this paper. Furthermore, except for Amenta's model for MVAIR \cite{amenta1994bounded} and a brief discussion on MVIR in convex polytopes in Cabello \etal \cite{cabello2016finding}, we are not aware of any other model or algorithm for the higher dimensional problems. Our optimization model for the MVAIR problem is much more efficient than Amenta's model as it reduces the $\mathcal{O}(n2^d)$ number of constraints to $\mathcal{O}(n)$ constraints. Similar randomized algorithms, such as \cite{sharir1992combinatorial,matouvsek1996subexponential}, could be used to solve it in \emph{expected} linear (in $n$) time as suggested in \cite{amenta1994bounded}.

 While none of the existing algorithms can solve both general and axis-aligned problems, our parametrized optimization approach can do so in 2D. Moreover, except for Cabello \etal's algorithm \cite{cabello2016finding}, no other algorithm is published so far for either of these problems capable of dealing with a broader spectrum of geometric convex sets other than polygons. Also, while there has been no algorithm in the literature capable of finding MAAIR in a given direction without rotating axes, our optimization method computes the MAAIR in any given direction in both convex polygons and general convex sets.  
Our optimization algorithm is unique regarding the dependence of its performance on the number of inequalities defining the convex set; although for a convex polygon, this is $n$, for an ellipse, this is just one. For the general convex sets, it is difficult to compare its performance to geometric algorithms such as our algorithm presented in Corollary \ref{cor:Approx-MAIR-Set-Ahn} and that of Cabello \etal \cite{cabello2016finding}, due to the dependance of their performance to the unknown factor $T_C$ that depends on the geometry of $C$. For the special case of convex polygons our geometric algorithms outperform the optimization approach due to their simplicity. Nevertheless, the importance of the presented optimization approach is in the uniform framework that it provides for solving a variety of geometric shape approximation problems. For rectangular approximation, as illustrated in Tables \ref{tab:ComputationalComplexity2D} and \ref{tab:ComputationalComplexityHD}, this approach is capable of solving 8 out 10 existing problems. We conjecture that it could be also applied to the remaining two problems, i.e., MVIR problem for polytopes and general convex sets. Tables \ref{tab:ComputationalComplexity2D} and \ref{tab:ComputationalComplexityHD} 
summarize our results across all discussed problems and their comparison with the best existing algorithms. Our optimization approach is capable of solving a wide range of problems and our geometric algorithms, from Theorem \ref{thm:Approx-MAIR-Poly-Alt} and Corollaries \ref{cor:Approx-MAIR-Set-Ahn} and \ref{cor:Approx-MAIR-Poly-Ahn}, improve the existing results. Figure \ref{fig:ComplexityComparison} illustrates the comparison of the computational complexity of our geometric algorithms with that of Cabello \etal \cite{cabello2016finding} for the MAIR problem in convex polygons.

\begin{figure}[ht]
  \centering
  \includegraphics[width=0.45\textwidth]{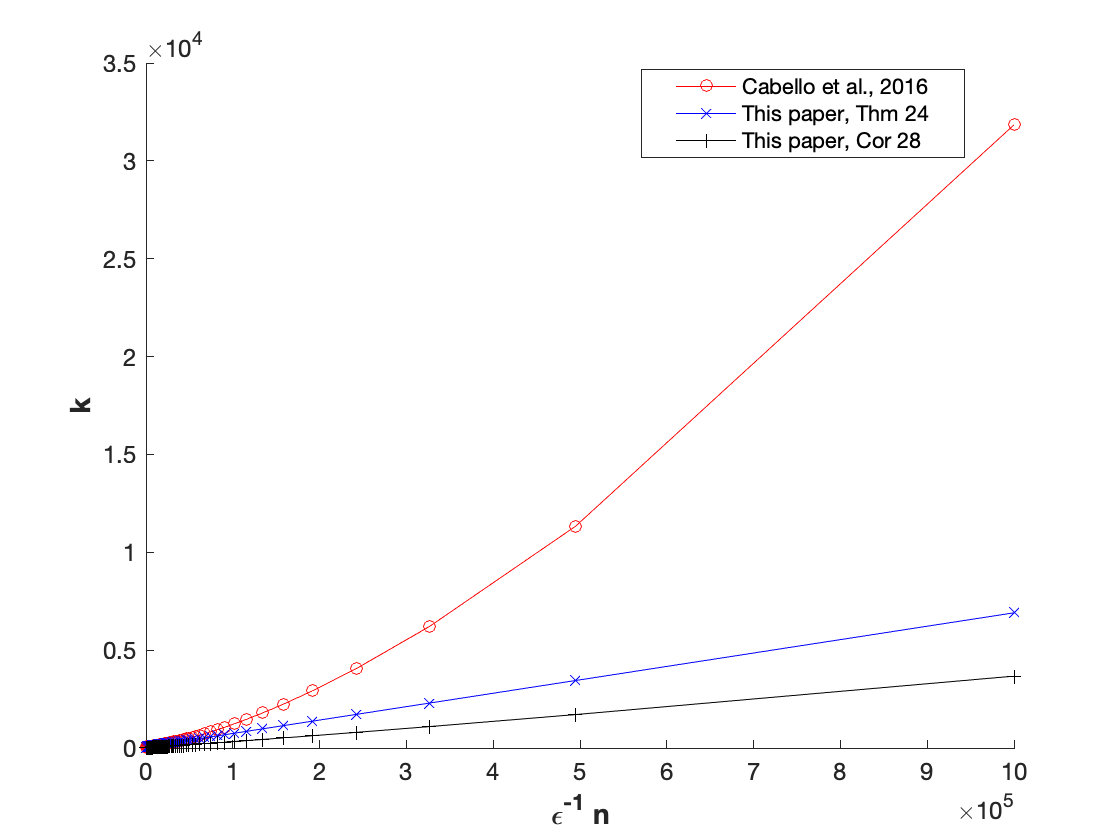}
  \vspace{3pt}
  \caption{A comparison of the computational complexity, for $n\in\{10,20,...,1000\}$ and $\varepsilon\in\{0.1,0.099,...,0.001\}$, between algorithms presented by Theorem \ref{thm:Approx-MAIR-Poly-Alt} and Corollary \ref{cor:Approx-MAIR-Poly-Ahn} and the algorithm of Cabello \etal \cite{cabello2016finding}. The value $k$ in the $y$-axis shows the order of magnitude of the number of operations required by each of the algorithms to solve the MAIR problem in a convex polygon.}
  \label{fig:ComplexityComparison}
\end{figure}

\begin{sidewaystable}
  \centering
\scriptsize
  \caption{Comparison of the computational complexity of the most competitive algorithms in $\mathbb{R}^2$}
    \begin{tabular}{ccccccccc}
    \toprule
    \multirow{3}[6]{*}{} & \multirow{3}[6]{*}{\bfseries Approach} & \multicolumn{7}{c}{\bfseries Computational Complexity} \\
  \cmidrule{3-9}         &       & \multicolumn{3}{c}{\bfseries 2D-Covex Polygon} & 	$\quad$		& \multicolumn{3}{c}{\bfseries 2D-Convex Set}  \\
     \cmidrule{3-9}
          &       & \textbf{MAAIR} & \textbf{MAAIR}-$\bm{\mbox{F}_{\mbox{dir}}}$ & \textbf{MAIR} &		 & \textbf{MAAIR} &  \textbf{MAAIR}-$\bm{\mbox{F}_{\mbox{dir}}}$ & \textbf{MAIR}   \\
        \cmidrule{3-5} \cmidrule{7-9} 
          Alt \etal \cite{alt1995computing}    & CG \footnote{\tiny \;Computational Geometric (CG) algorithm(s)}    &   $\mathcal{O}(\log n)$    &    $\mathcal{O}(n)$   &   --    &  	&   --   &    --   &   --  \\ 
        \cmidrule{3-5} \cmidrule{7-9}
         Knauer \etal \cite{knauer2012largest}  & CG    &   --    &   --    &  $\mathcal{O}(\varepsilon^{-2}\log n)$   &		&   --    &    --   &   --  \\ 
     \cmidrule{3-5} \cmidrule{7-9}
    \multirow{2}[4]{*}{ Cabello \etal \cite{cabello2016finding} } & \multirow{2}[4]{*}{CG-OPT \footnote{\tiny \;Computational Geometric (CG) and Optimization (OPT) algorithm(s) with more focus on CG}} & \multirow{2}[4]{*}{--} & \multirow{2}[4]{*}{--} &    $\mathcal{O}(n^3)$,  &		 & \multirow{2}[4]{*}{--} & \multirow{2}[4]{*}{--} &   \multirow{2}[4]{*}{$\mathcal{O}(\varepsilon^{-3/2}+\varepsilon^{-1/2}T_C)$}  \\
         &       &       &       &   $\mathcal{O}(\varepsilon^{-3/2}+\varepsilon^{-1/2}\log n)$  &		  &       &       &      \\ 
     \cmidrule{3-5} \cmidrule{7-9}
    \multirow{2}[4]{*}{ This work} & \multirow{2}[4]{*}{OPT-CG \footnote{\tiny \;Optimization (OPT) and Computational Geometric (CG) algorithm(s) with more focus on OPT}} & \multirow{2}[4]{*}{$\mathcal{O}(n\sqrt{n}\log \frac{n}{\varepsilon})$} & \multirow{2}[4]{*}{$\mathcal{O}(n\sqrt{n}\log \frac{n}{\varepsilon})$} &   $\mathcal{O}(\varepsilon^{-1}n\sqrt{n}\log \frac{n}{\varepsilon})$,  &	  & \multirow{2}[4]{*}{$\mathcal{O}(n\sqrt{n}\log \frac{n}{\varepsilon})$} & \multirow{2}[4]{*}{$\mathcal{O}(n\sqrt{n}\log \frac{n}{\varepsilon})$} &    $\mathcal{O}(\varepsilon^{-1}n\sqrt{n}\log \frac{n}{\varepsilon})$,   \\
	 &       &       &       &    $\mathcal{O}(\varepsilon^{-1}\log n)$   &    &		   &       &       \\
	  &       &       &       &    $\mathcal{O}(\varepsilon^{-1/2}\log n+\varepsilon^{-1}\log \varepsilon^{-1/2})$   &    &		   &       &  $\mathcal{O}(\varepsilon^{-1/2}T_C+\varepsilon^{-1}\log \varepsilon^{-1/2})$       \\
    \bottomrule
    \end{tabular}
  \label{tab:ComputationalComplexity2D}

\vspace{5\baselineskip}
  \centering
\scriptsize
  \caption{Comparison of the computational complexity of the most competitive algorithms in $\mathbb{R}^d$}
    \begin{tabular}{ccccccc}
    \toprule
    \multirow{3}[6]{*}{} & \multirow{3}[6]{*}{\bfseries Approach} & \multicolumn{5}{c}{\bfseries Computational Complexity} \\
  \cmidrule{3-7}         &       &  \multicolumn{2}{c}{\bfseries HD-Convex Polytope} &	$\qquad$	& \multicolumn{2}{c}{\bfseries HD-Convex Set} \\
 \cmidrule{3-7}
          &       &  \textbf{MVAIR} & \textbf{MVIR}  &		& \textbf{MVAIR} & \textbf{MVIR} \\
          \cmidrule{3-4} \cmidrule{6-7}
                Alt \etal \cite{alt1995computing}    & CG    &     --     &   --    &		&   --    & --  \\
           \cmidrule{3-4} \cmidrule{6-7}
           Knauer \etal \cite{knauer2012largest}     & CG    &    --    &   --    &		&   --    & --  \\
       \cmidrule{3-4} \cmidrule{6-7}
       Cabello \etal \cite{cabello2016finding}    & CG-OPT    &     --     &   --    &		&   --    & --  \\
       \cmidrule{3-4} \cmidrule{6-7}
      This work    & OPT-CG   &     $\mathcal{O}(d^{2}n\sqrt{n}\log \frac{n}{\varepsilon})$     &   --    &		&   $\mathcal{O}((d^3+d^{2}n)\sqrt{n}\log \frac{n}{\varepsilon})$    & --  \\
    \bottomrule
    \end{tabular}
  \label{tab:ComputationalComplexityHD}
\end{sidewaystable}

It should be also noted that our optimization approach is unique in dealing with general compact convex sets, such as polygons, ellipses, and the bounded intersection of convex sets, directly without approximating them with a polygonal $\varepsilon$-kernel first. To illustrate this capability and to provide further insight on the behavior of the objective function of the parametrized optimization model \ref{mod:MAIR_Parametric}, discussed in Section \ref{subsubsec:Parametric-Approach-2D}, we end this section by providing some examples of the optimization results for the MAAIR, MAAIR-$\mbox{F}_{\mbox{dir}}$, and MAIR problems in a variety of compact convex sets. Results for two given polygons and two random polygons are shown in Figure \ref{fig:polygon6&4} and Figure \ref{fig:polygonRandom1&2}, respectively. They also show the ill behavior of the objective function $f(t)$ of the Model \ref{mod:MAIR_Parametric}. Examples of axis-aligned rectangles for the regular case and the case with given directions (equivalent of rotated axes) are shown in Figure \ref{fig:LIAR}. Examples for ellipses and the intersection of some convex sets are shown in Figures \ref{fig:Ellipse} and \ref{fig:intersection}, respectively.

\begin{figure}[htb]
\begin{centering}
\subfloat[\label{fig:6a}]{ \begin{centering}\includegraphics[width=0.35\columnwidth]{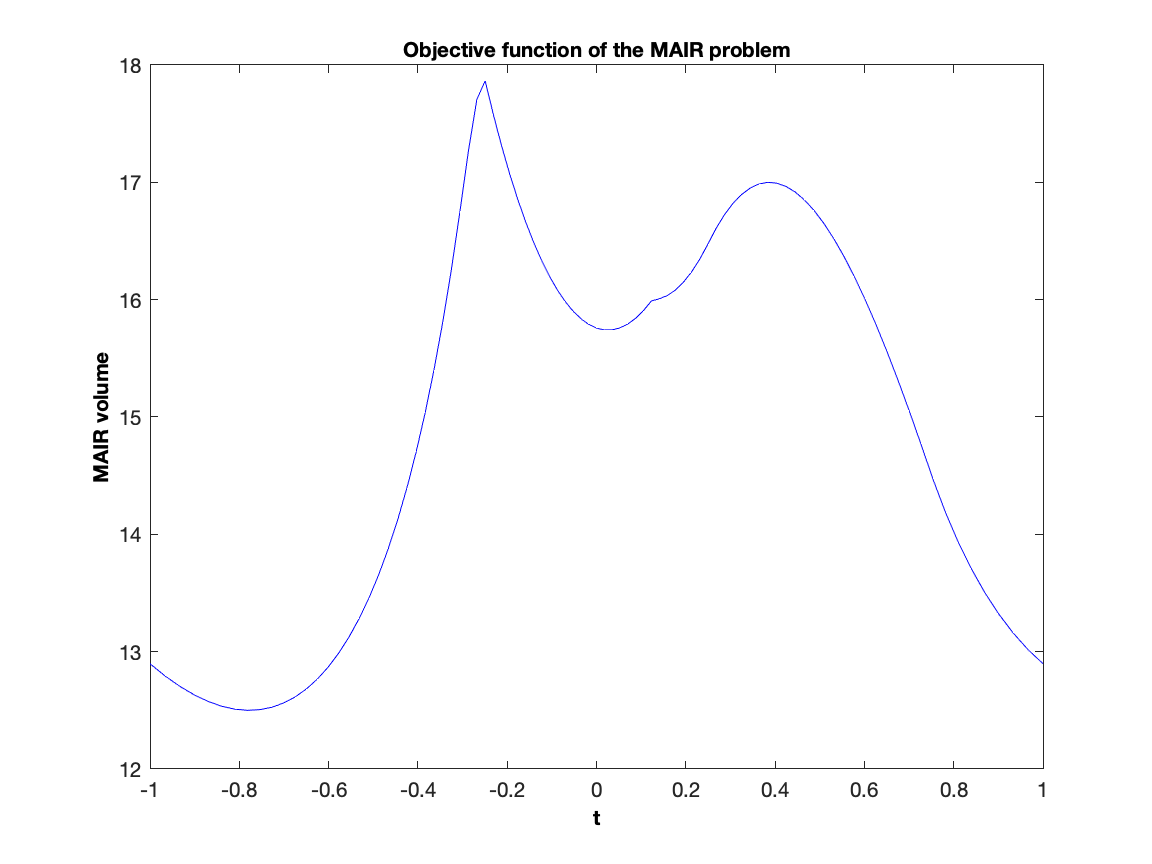}\par\end{centering} }
\subfloat[\label{fig:6b}]{ \begin{centering}\includegraphics[width=0.35\columnwidth]{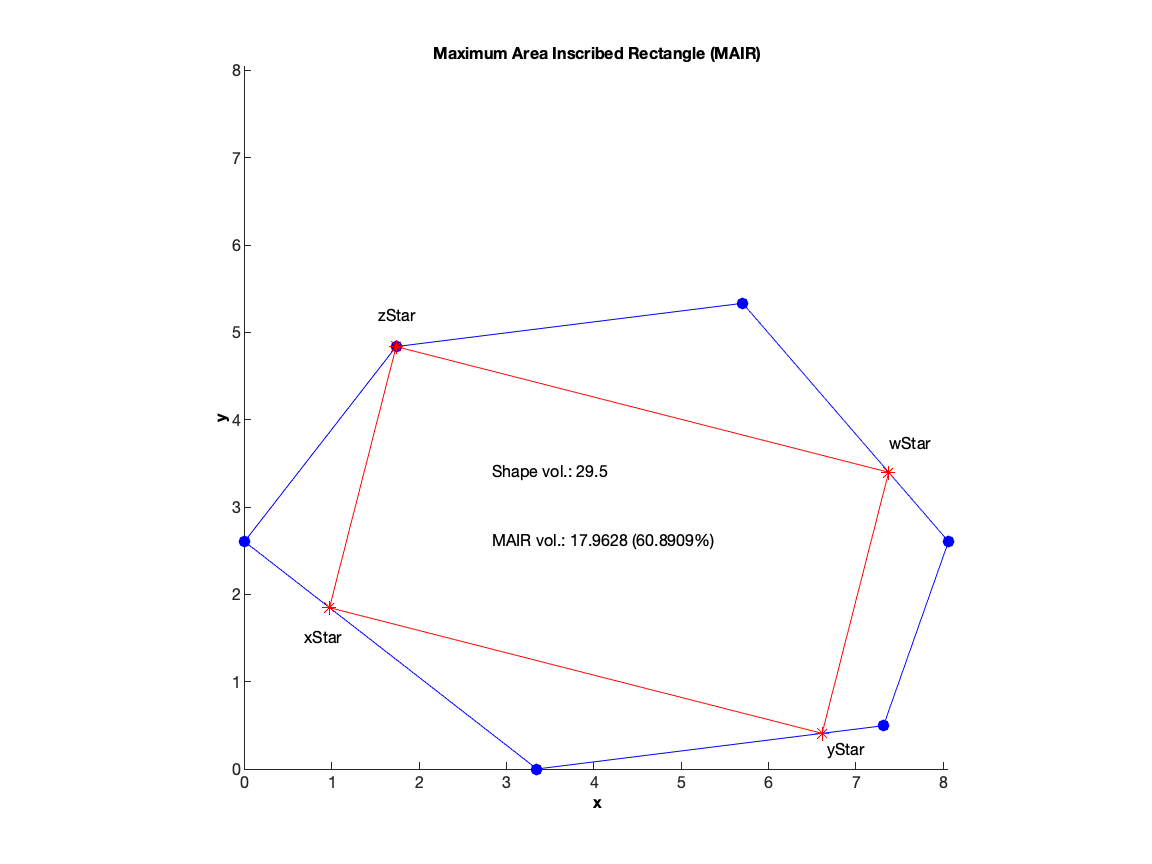}\par\end{centering} }
\par\end{centering}

\begin{centering}
\subfloat[\label{fig:4a}]{ \begin{centering}\includegraphics[width=0.35\columnwidth]{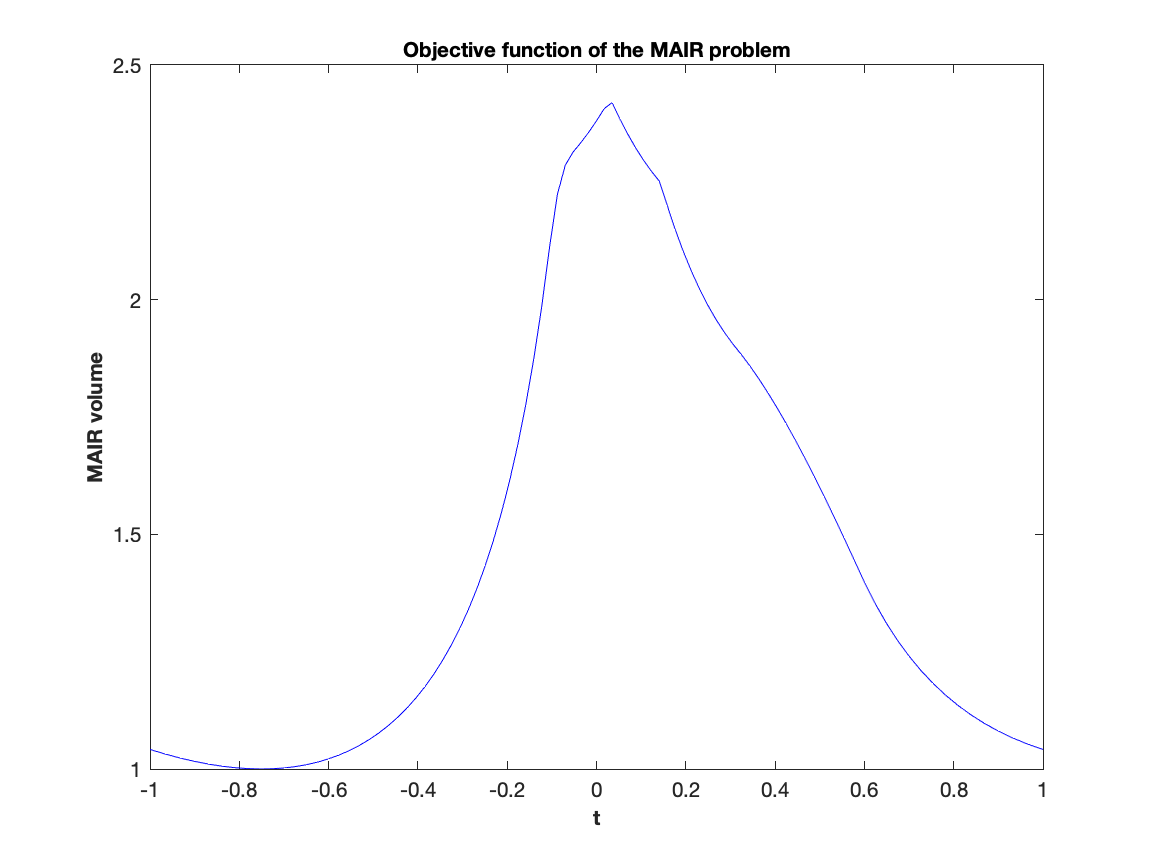}\par\end{centering} }
\subfloat[\label{fig:4b}]{ \begin{centering}\includegraphics[width=0.35\columnwidth]{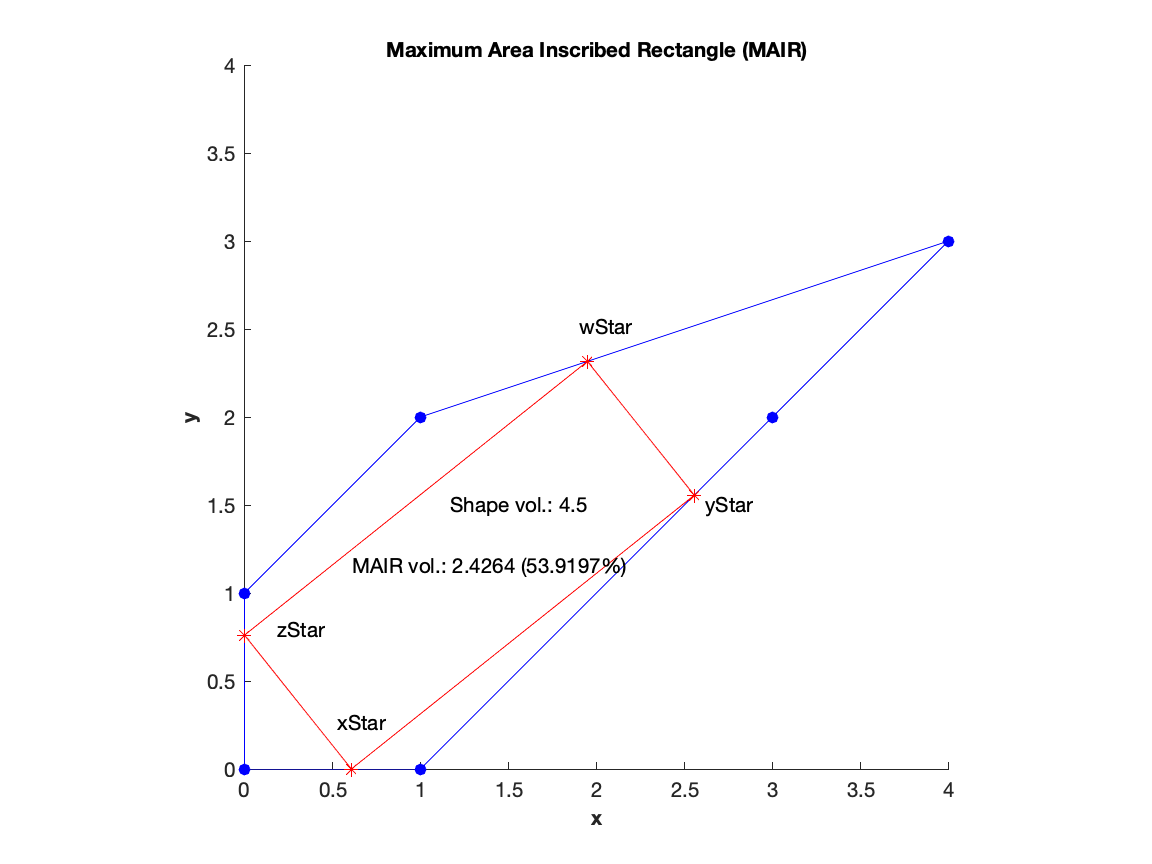}\par\end{centering} }
\par\end{centering}

\caption{\label{fig:polygon6&4}The largest inscribed rectangle in two given polygons. The objective functions are shown in (\ref{fig:6a}) and (\ref{fig:4a}). It can be seen that $f(-1)=f(1)$ but $f(t)$ is not necessarily symmetric or even unimodal over $-1\leq t \leq t$. The largest inscribed rectangles are obtained in (\ref{fig:6b}) and (\ref{fig:4b}), using the algorithm described in Section \ref{subsubsec:Approx_MAIR}. ``Shape vol.'' shows the area (volume in general) of the polygon and ``MAIR vol.'' show the area of the largest inscribed rectangle and the percentage of this area to the area of the polygon. Figure (\ref{fig:6b}) shows the MAIR with one vertex-corner and three edge-corners, while the MAIR in Figure (\ref{fig:4b}) has four edge-corners. }
\end{figure}

\begin{figure}[h]
\begin{centering}
\subfloat[\label{fig:Random1a}]{ \begin{centering}\includegraphics[width=0.35\columnwidth]{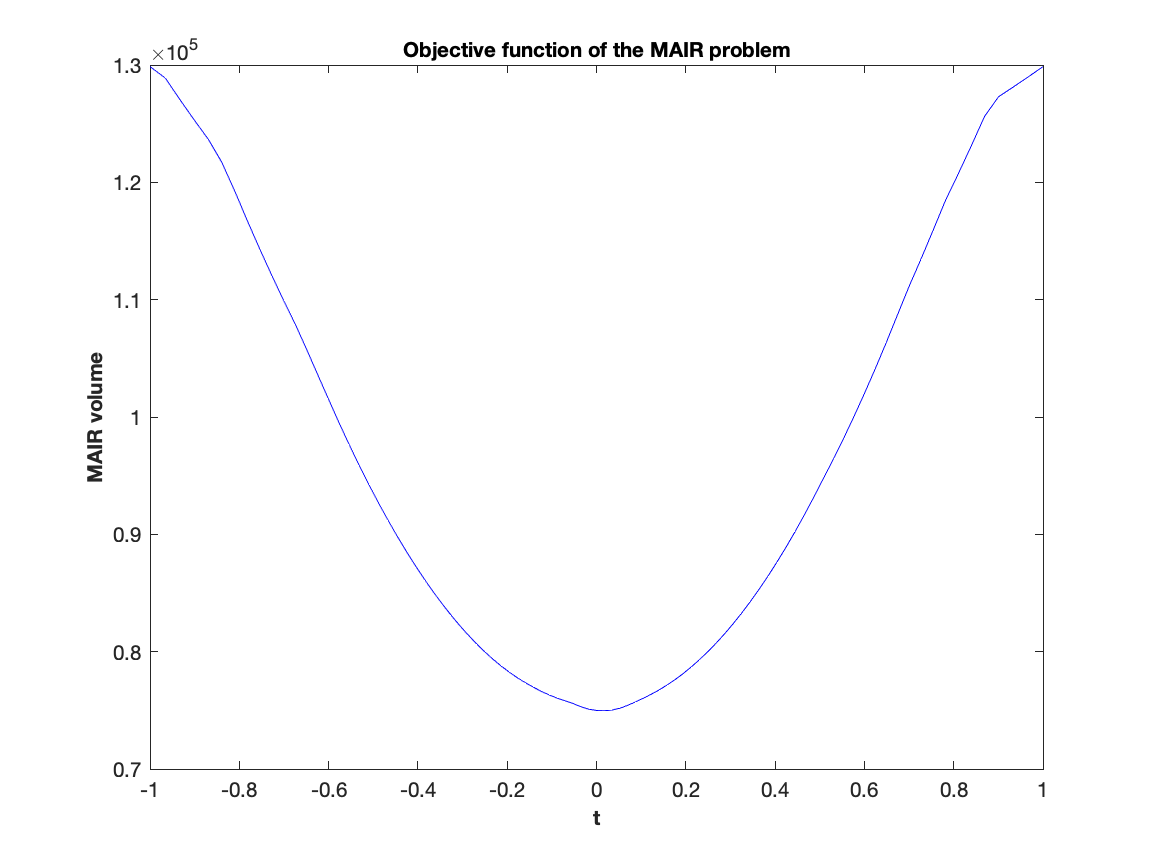}\par\end{centering} }
\subfloat[\label{fig:Random1b}]{ \begin{centering}\includegraphics[width=0.35\columnwidth]{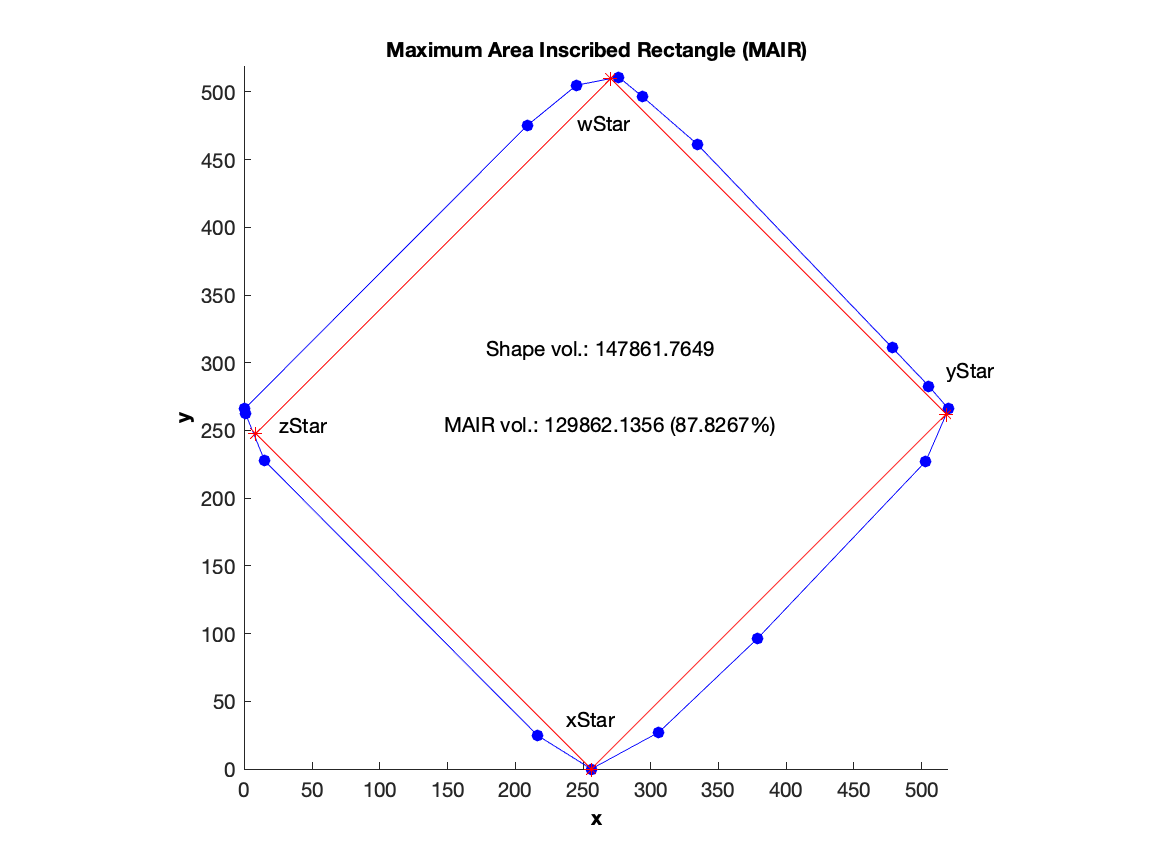}\par\end{centering} }
\par\end{centering}

\begin{centering}
\subfloat[\label{fig:Random2a}]{ \begin{centering}\includegraphics[width=0.35\columnwidth]{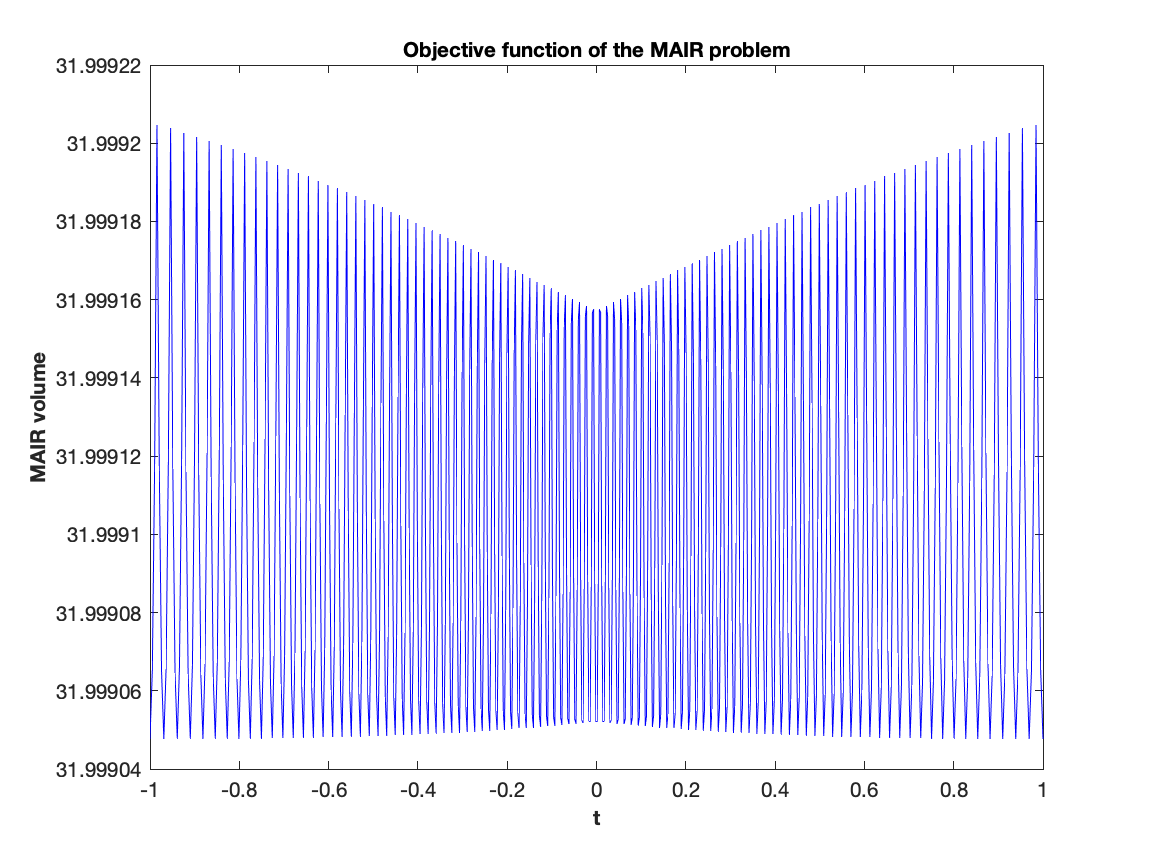}\par\end{centering} }
\subfloat[\label{fig:Random2b}]{ \begin{centering}\includegraphics[width=0.35\columnwidth]{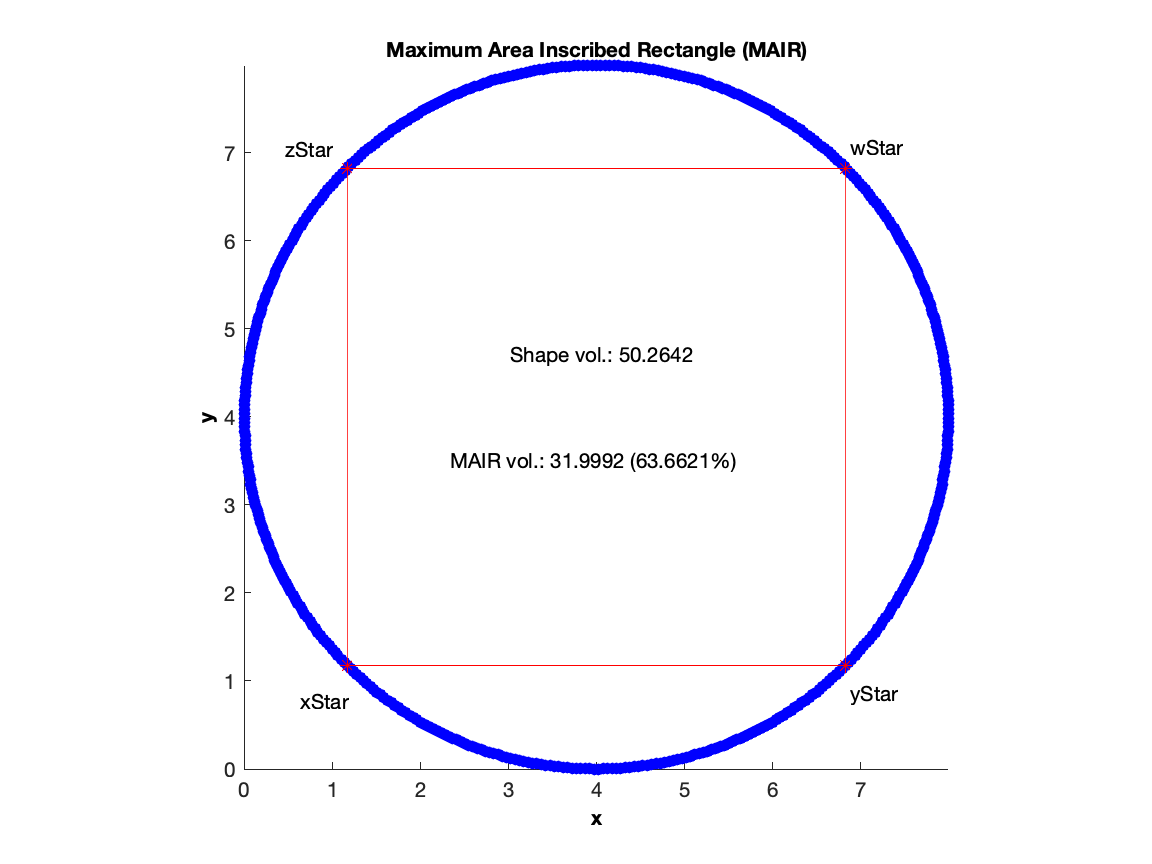}\par\end{centering} }
\par\end{centering}

\caption{\label{fig:polygonRandom1&2}The largest inscribed rectangle in a random polygon with 14 vertices and a regular 500-gon randomly generated on a circle. The objective functions are shown in (\ref{fig:Random1a}) and (\ref{fig:Random2a}), which shows the contrast between a well-behaved unimodal quasiconvex and almost symmetric function and an ill-behaved non-smooth function.  The largest inscribed rectangles are obtained in (\ref{fig:Random1b}) and (\ref{fig:Random2b}). It should be noted that the largest inscribed rectangle inside a circle is a square and the fraction of the area at optimality is $2/\pi \simeq 0.63662$.}
\end{figure}

\begin{figure}[h]
\begin{centering}
\subfloat[\label{fig:LIAR1a}]{ \begin{centering}\includegraphics[width=0.35\columnwidth]{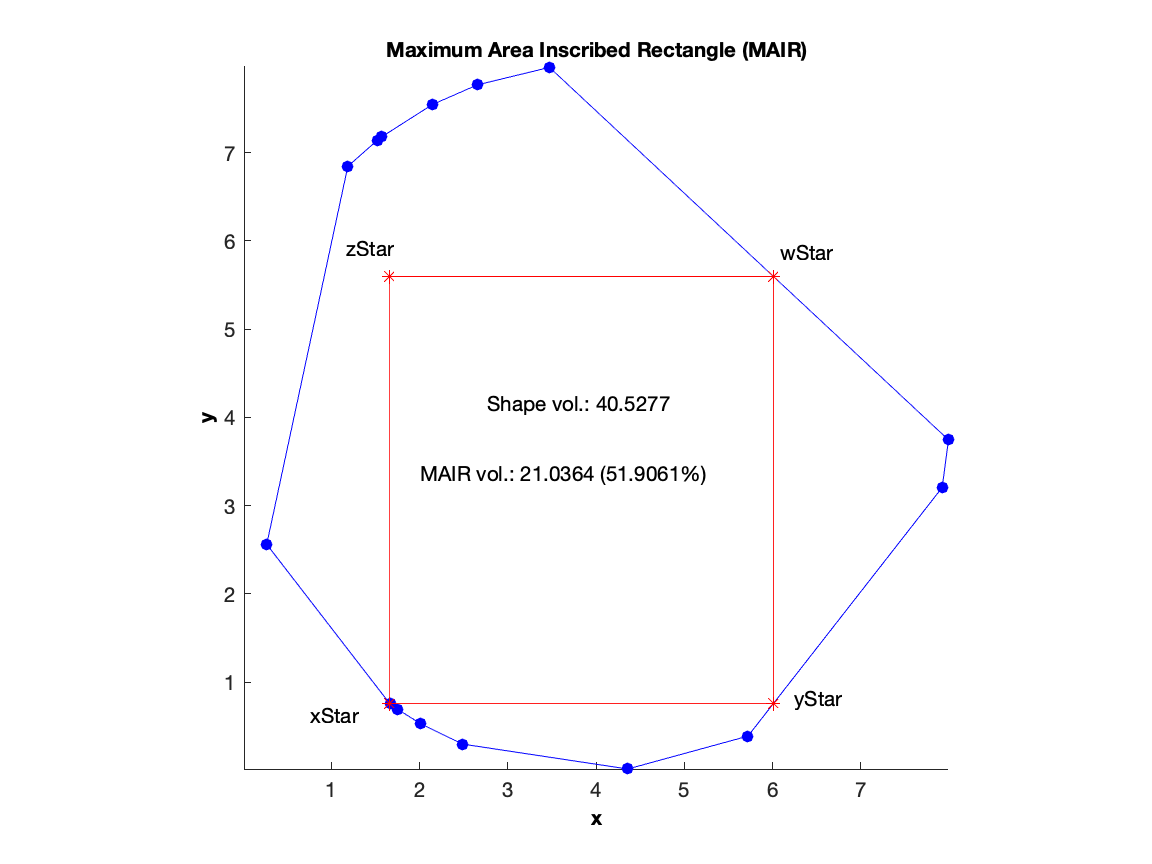}\par\end{centering} }
\subfloat[\label{fig:LIAR1b}]{ \begin{centering}\includegraphics[width=0.35\columnwidth]{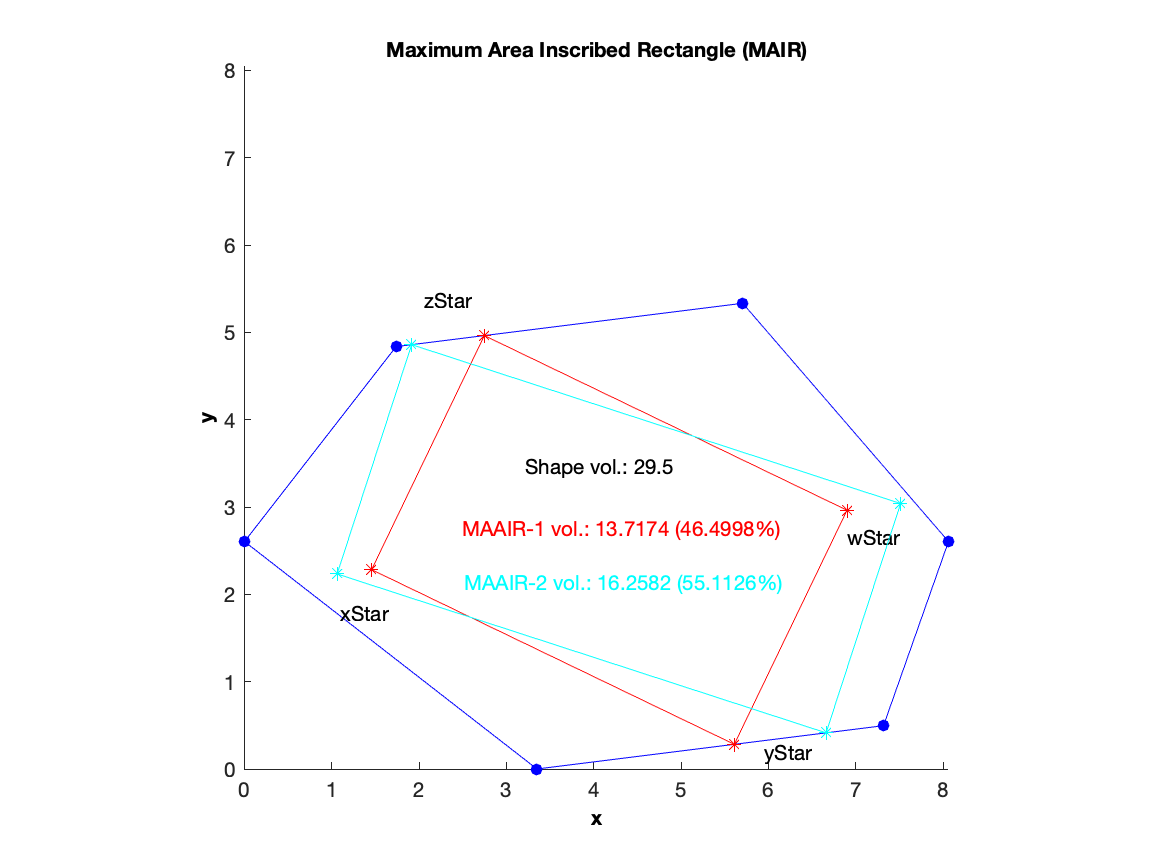}\par\end{centering} }

\par\end{centering}

\caption{\label{fig:LIAR}The maximum area axis-aligned inscribe rectangles for the regular axes in a random 15-gon generated on a circle in (\ref{fig:LIAR1a}) and for two given directions in a given polygon in (\ref{fig:LIAR1b}). Note in Figure (\ref{fig:LIAR1b}) that the conditions of Theorem \ref{thm:optPropertiesPolygon} may not hold for the maximum area rectangles for given directions as they may not be optimal regarding all directions.}
\end{figure}

\begin{figure}[h]
\begin{centering}
\subfloat[\label{fig:Ellipse-a}]{ \begin{centering}\includegraphics[width=0.35\columnwidth]{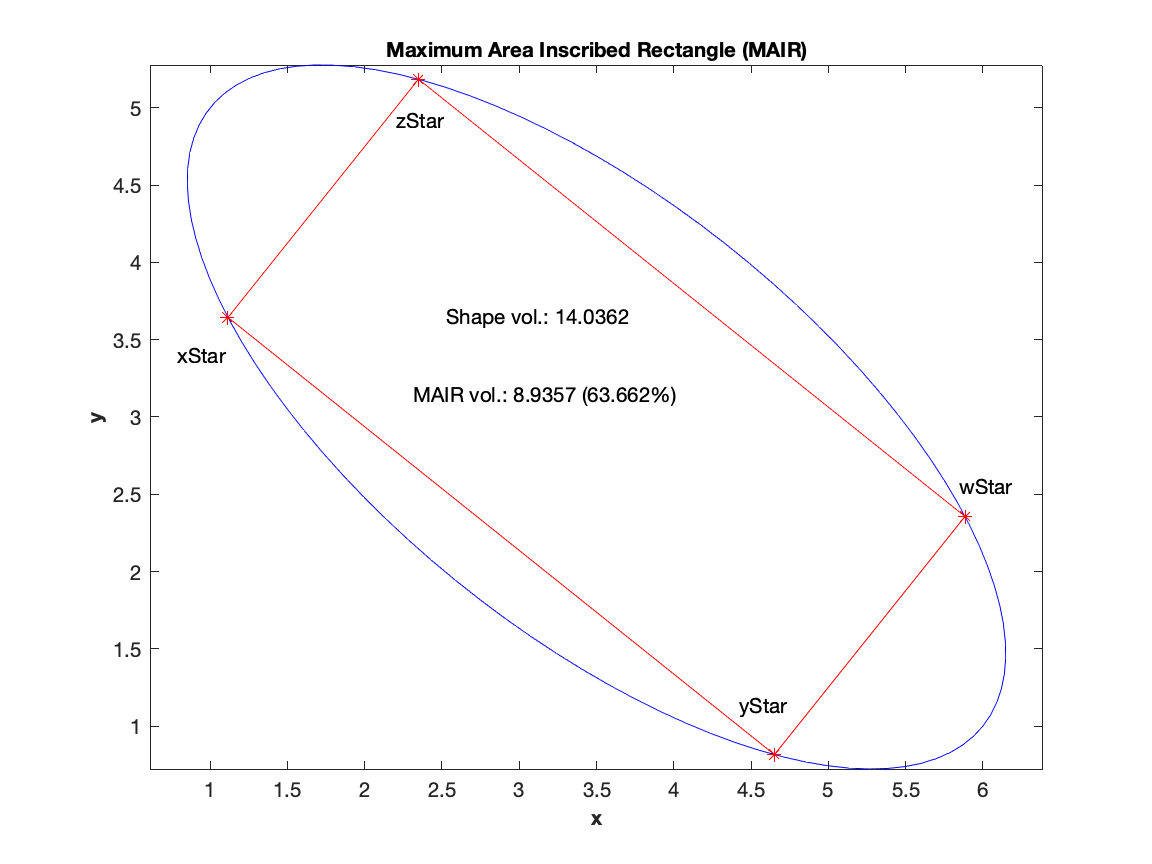}\par\end{centering} }
\subfloat[\label{fig:Ellipse-b}]{ \begin{centering}\includegraphics[width=0.35\columnwidth]{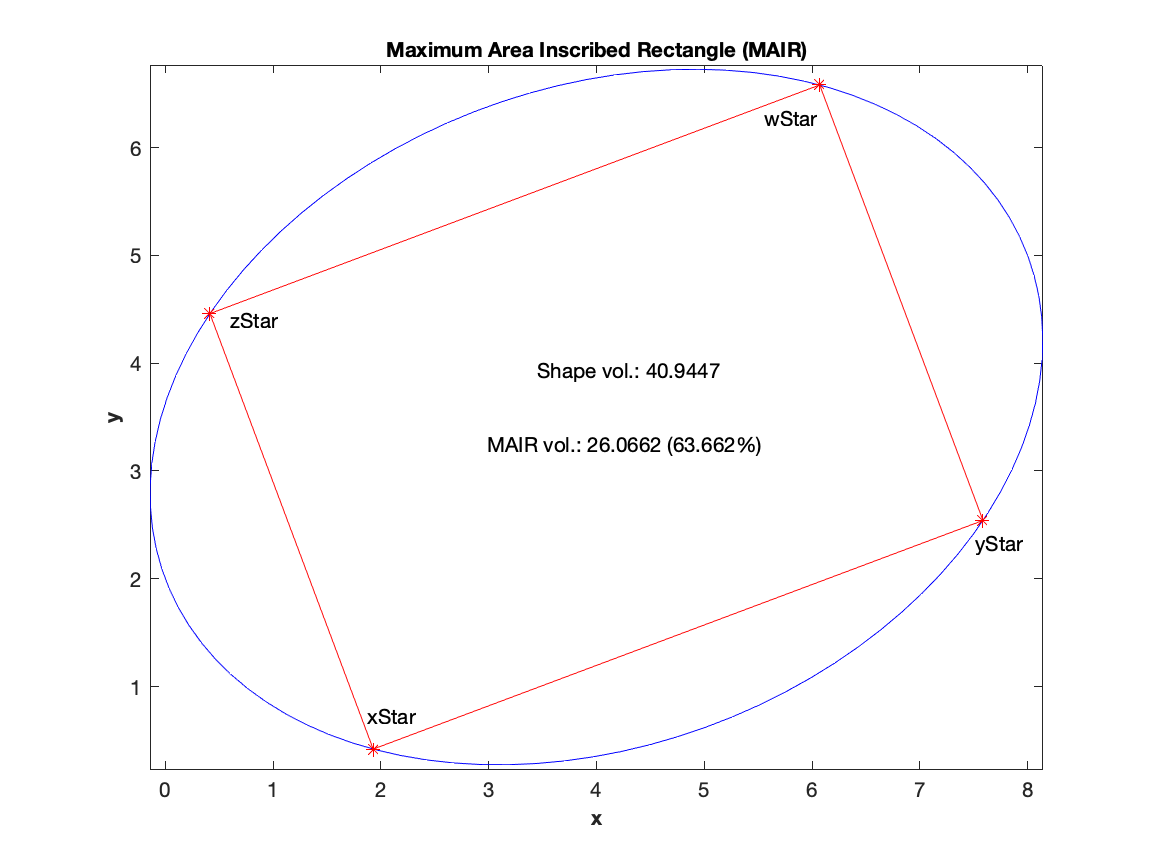}\par\end{centering} }
\par\end{centering}

\caption{\label{fig:Ellipse}The largest inscribed rectangles in two given ellipses. Notably, the fraction of the area of the MAIR to the area of the ellipse is the same as  that of circles ($2/\pi \simeq 0.63662$), which can be verified by the elementary calculus.}
\end{figure}

\begin{figure}[h]
\begin{centering}
\subfloat[\label{fig:intersection-a}]{ \begin{centering}\includegraphics[width=0.35\columnwidth]{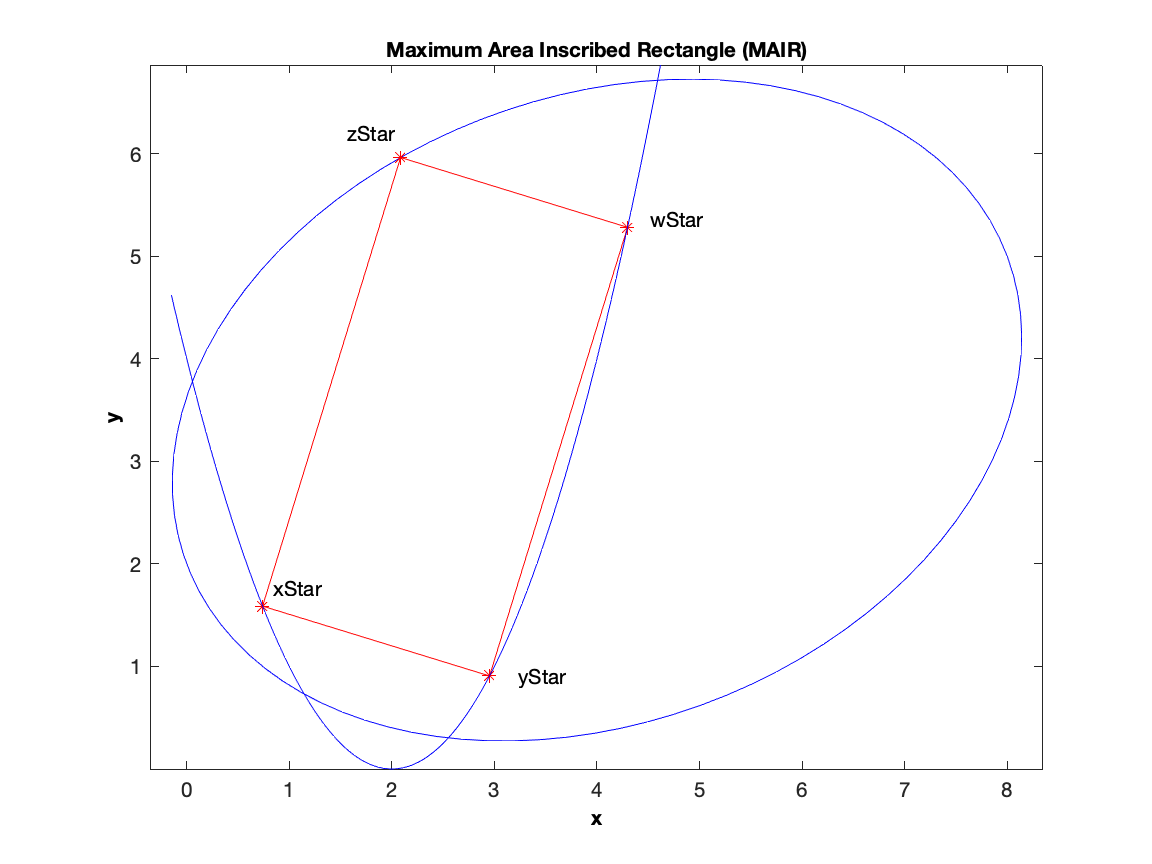}\par\end{centering} }
\subfloat[\label{fig:intersection-b}]{ \begin{centering}\includegraphics[width=0.35\columnwidth]{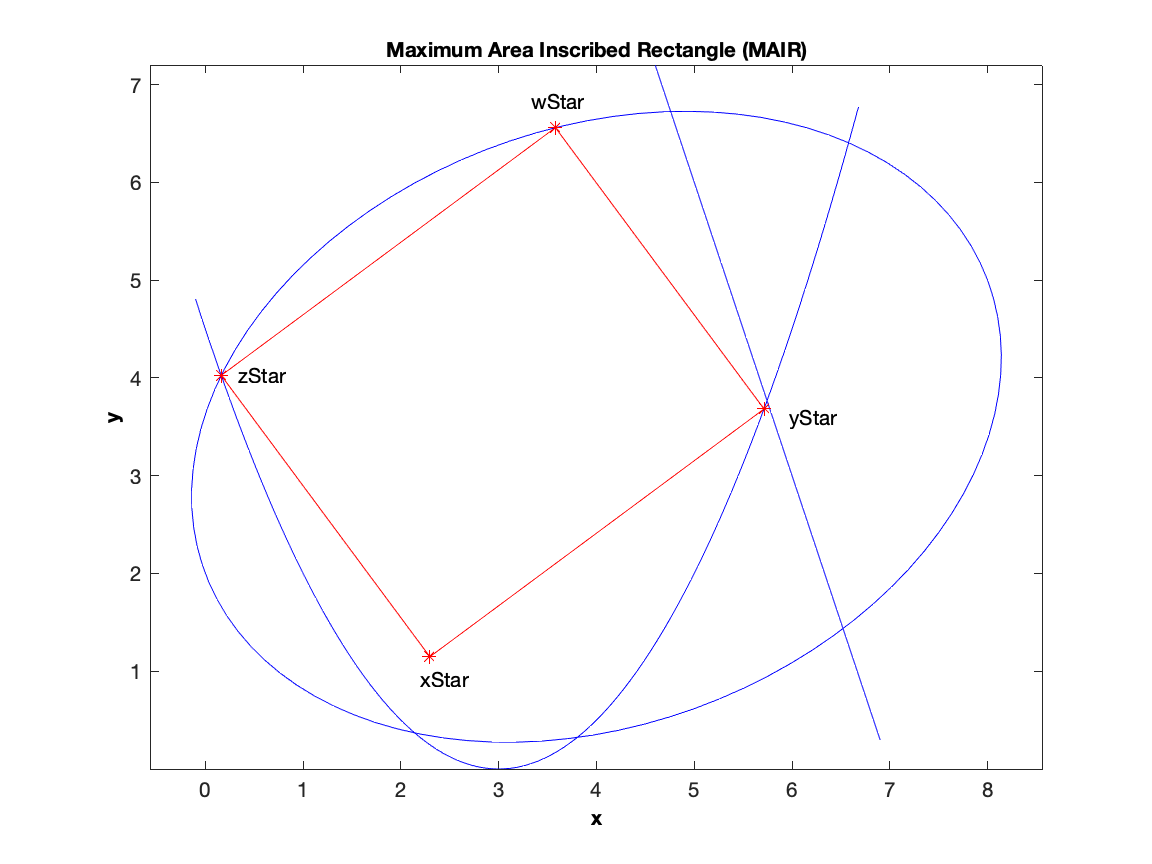}\par\end{centering} }
\par\end{centering}

\caption{\label{fig:intersection}The largest inscribed rectangle in the intersection of an ellipse and a parabola is presented in (\ref{fig:intersection-a}) and for the intersection of an ellipse, a parabola, and a half-space is shown in (\ref{fig:intersection-b}). We see that the conditions of the Theorem \ref{thm:optPropertiesPolygon} hold for these particular non-polygonal examples.}
\end{figure}

\section{Conclusions}
We have presented several optimization models for the problem of finding the maximum volume (axis-aligned) inscribed rectangle in a convex set defined by a finite number of convex inequalities. We presented efficient  
(1-$\varepsilon$)--approximation algorithms for the MVAIR, MAAIR, and MAIR problems. The running times of these algorithms only depend on the dimension and the number of convex inequalities that define the convex set and therefore is agnostic to the geometric structure of the convex set. 
We have also analyzed the optimal properties of the MAIR in convex polygons, centrally symmetric convex sets, and axially symmetric convex sets.
One future research direction is to explore the optimal properties of the MVIR problem in the higher dimension and  
to develop efficient algorithms for finding the MVIR. Another potential direction for future research would be the consideration of inscribing other geometric shapes in a convex set or approximating a convex set that contains ``holes'' with multiple inscribed rectangles.

\section*{Acknowledgment} The author  
is indebted to Shuzhong Zhang and Gilad Lerman for their various helpful comments and key observations which helped the derivation of some of the results.


\bibliographystyle{myIEEEtran}
\bibliography{IEEEabrv,LIR_Refs.bib}

\begin{thebibliography}{10}
\providecommand{\url}[1]{#1}
\csname url@rmstyle\endcsname
\providecommand{\newblock}{\relax}
\providecommand{\bibinfo}[2]{#2}
\providecommand\BIBentrySTDinterwordspacing{\spaceskip=0pt\relax}
\providecommand\BIBentryALTinterwordstretchfactor{4}
\providecommand\BIBentryALTinterwordspacing{\spaceskip=\fontdimen2\font plus
\BIBentryALTinterwordstretchfactor\fontdimen3\font minus
  \fontdimen4\font\relax}
\providecommand\BIBforeignlanguage[2]{{%
\expandafter\ifx\csname l@#1\endcsname\relax
\typeout{** WARNING: IEEEtran.bst: No hyphenation pattern has been}%
\typeout{** loaded for the language `#1'. Using the pattern for}%
\typeout{** the default language instead.}%
\else
\language=\csname l@#1\endcsname
\fi
#2}}

\bibitem{john1948extremum}
F.~John, ``Extremum problems with inequalities as subsidiary conditions,
  studies and essays presented to r. courant on his 60th birthday, january 8,
  1948,'' 1948.

\bibitem{john2014extremum}
F.~John, ``Extremum problems with inequalities as subsidiary conditions,'' in
  \emph{Traces and emergence of nonlinear programming}.\hskip 1em plus 0.5em
  minus 0.4em\relax Springer, 2014, pp. 197--215.

\bibitem{henk2012lowner}
M.~Henk, ``L{\"o}wner-{J}ohn {E}llipsoids,'' \emph{Documenta Math}, pp.
  95--106, 2012.

\bibitem{polya1951isoperimetric}
G.~P{\'o}lya and G.~Szeg{\"o}, \emph{Isoperimetric inequalities in mathematical
  physics}.\hskip 1em plus 0.5em minus 0.4em\relax Princeton University Press,
  1951.

\bibitem{milenkovic1991automatic}
V.~Milenkovic, K.~Daniels, and Z.~Li, ``Automatic marker making,'' in
  \emph{Proceedings of the Third Canadian Conference on Computational Geometry
  (CCCG)}.\hskip 1em plus 0.5em minus 0.4em\relax Simon Fraser University,
  1991, pp. 243--246.

\bibitem{milenkovic1992placement}
V.~J. Milenkovic, K.~Daniels, and Z.~Li, ``Placement and compaction of
  nonconvex polygons for clothing manufacture,'' in \emph{Proceedings of the
  Fourth Canadian Conference on Computational Geometry (CCCG)}.\hskip 1em plus
  0.5em minus 0.4em\relax Citeseer, 1992.

\bibitem{daniels1993finding}
K.~L. Daniels, V.~Milenkovic, and D.~Roth, ``Finding the maximum area
  axis-parallel rectangle in a polygon.'' in \emph{Proceedings of the Fifth
  Canadian Conference on Computational Geometry (CCCG)}.\hskip 1em plus 0.5em
  minus 0.4em\relax Citeseer, 1993, pp. 322--327.

\bibitem{ahn2006inscribing}
H.-K. Ahn, P.~Brass, O.~Cheong, H.-S. Na, C.-S. Shin, and A.~Vigneron,
  ``Inscribing an axially symmetric polygon and other approximation algorithms
  for planar convex sets,'' \emph{Computational Geometry}, vol.~33, no.~3, pp.
  152--164, 2006.

\bibitem{cabello2016finding}
S.~Cabello, O.~Cheong, C.~Knauer, and L.~Schlipf, ``Finding largest rectangles
  in convex polygons,'' \emph{Computational Geometry}, vol.~51, pp. 67--74,
  2016.

\bibitem{depano1987finding}
A.~DePano, Y.~Ke, and J.~O'Rourke, ``Finding largest inscribed equilateral
  triangles and squares,'' in \emph{Proceedings of the 25th Allerton Conference
  on Communication, Control, and Computing}, 1987, pp. 869--878.

\bibitem{alt1990approximation}
H.~Alt, J.~Bl{\"o}mer, and H.~Wagener, ``Approximation of convex polygons,'' in
  \emph{International Colloquium on Automata, Languages, and
  Programming}.\hskip 1em plus 0.5em minus 0.4em\relax Springer, 1990, pp.
  703--716.

\bibitem{zhu1997approximating}
B.~Zhu, ``Approximating convex polyhedra with axis-parallel boxes,''
  \emph{International Journal of Computational Geometry \& Applications},
  vol.~7, no.~03, pp. 253--267, 1997.

\bibitem{jin2017finding-parallelogram}
K.~Jin, ``Finding all maximal area parallelograms in a convex polygon,''
  \emph{arXiv preprint arXiv:1711.00181}, 2017.

\bibitem{jin2018maximal-parallelogram}
K.~Jin, ``Maximal parallelograms in convex polygons and a novel geometric
  structure,'' \emph{arXiv preprint arXiv:1512.03897}, 2018.

\bibitem{keikha2017maximum}
V.~Keikha, M.~L{\"o}ffler, A.~Mohades, J.~Urhausen, and I.~van~der Hoog,
  ``Maximum-area triangle in a convex polygon, revisited,'' \emph{arXiv
  preprint arXiv:1705.11035}, 2017.

\bibitem{dobkin1979general}
D.~P. Dobkin and L.~Snyder, ``On a general method for maximizing and minimizing
  among certain geometric problems,'' in \emph{The Twentieth Annual Symposium
  on Foundations of Computer Science (SFCS 1979)}.\hskip 1em plus 0.5em minus
  0.4em\relax IEEE, 1979, pp. 9--17.

\bibitem{chandran1992parallel}
S.~Chandran and D.~M. Mount, ``A parallel algorithm for enclosed and enclosing
  triangles,'' \emph{International Journal of Computational Geometry \&
  Applications}, vol.~2, no.~02, pp. 191--214, 1992.

\bibitem{kallus2017linear}
Y.~Kallus, ``A linear-time algorithm for the maximum-area inscribed triangle in
  a convex polygon,'' \emph{arXiv preprint arXiv:1706.03049}, 2017.

\bibitem{jin2017maximal-triangle}
K.~Jin, ``Maximal area triangles in a convex polygon,'' \emph{arXiv preprint
  arXiv:1707.04071}, 2018.

\bibitem{chaudhuri2003largest}
J.~Chaudhuri, S.~C. Nandy, and S.~Das, ``Largest empty rectangle among a point
  set,'' \emph{Journal of algorithms}, vol.~46, no.~1, pp. 54--78, 2003.

\bibitem{marzeh2019algorithm}
Z.~Marzeh, M.~Tahmasbi, and N.~Mirehi, ``Algorithm for finding the largest
  inscribed rectangle in polygon,'' \emph{Journal of Algorithms and
  Computation}, vol.~51, no.~1, pp. 29--41, 2019.

\bibitem{molano2021finding}
R.~Molano, D.~Caballero, P.~G. Rodr{\'\i}guez, M.~D.~M. {\'A}vila, J.~P.
  Torres, M.~L. Dur{\'a}n, J.~C. Sancho, and A.~Caro, ``Finding the largest
  volume parallelepipedon of arbitrary orientation in a solid,'' \emph{IEEE
  Access}, vol.~9, pp. 103\,600--103\,609, 2021.

\bibitem{radziszewski1952probleme}
K.~Radziszewski, ``Sur un probleme extr{\'e}mal relatif aux figures inscrites
  et circonscrites aux figures convexes,'' \emph{Ann. Univ. Mariae
  Curie-Sklodowska, Sect. A}, vol.~6, pp. 5--18, 1952.

\bibitem{hadwiger1955volumschatzung}
H.~Hadwiger, ``Volumsch{\"a}tzung f{\"u}r die einen {E}ik{\"o}rper
  {\"u}berdeckenden und unterdeckenden {P}arallelotope.'' \emph{Elemente der
  Mathematik}, vol.~10, pp. 122--124, 1955.

\bibitem{kosinski1957proof}
A.~Kosi{\'n}ski, ``A proof of an {A}uerbach-{B}anach-{M}azur-{U}lam theorem on
  convex bodies,'' in \emph{Colloquium Mathematicae}, vol.~4, no.~2.\hskip 1em
  plus 0.5em minus 0.4em\relax Institute of Mathematics Polish Academy of
  Sciences, 1957, pp. 216--218.

\bibitem{grunbaum1963measures}
B.~Gr{\"u}nbaum, ``Measures of symmetry for convex sets,'' in \emph{Convexity:
  Proceedings of the Seventh Symposium in Pure Mathematics of the American
  Mathematical Society}, vol.~7.\hskip 1em plus 0.5em minus 0.4em\relax
  American Mathematical Soc., 1963, pp. 233--270.

\bibitem{lassak1993approximation}
M.~Lassak, ``Approximation of convex bodies by rectangles,'' \emph{Geometriae
  Dedicata}, vol.~47, no.~1, pp. 111--117, 1993.

\bibitem{schwarzkopf1998approximation}
O.~Schwarzkopf, U.~Fuchs, G.~Rote, and E.~Welzl, ``Approximation of convex
  figures by pairs of rectangles,'' \emph{Computational Geometry}, vol.~10,
  no.~2, pp. 77--87, 1998.

\bibitem{brinkhuis2016inner}
J.~Brinkhuis, ``Inner and outer approximation of convex sets using alignment,''
  \emph{Optimization Letters}, vol.~10, no.~7, pp. 1403--1416, 2016.

\bibitem{amenta1994bounded}
N.~Amenta, ``Bounded boxes, {H}ausdorff distance, and a new proof of an
  interesting {H}elly-type theorem,'' in \emph{Proceedings of the Tenth Annual
  Symposium on Computational Geometry}.\hskip 1em plus 0.5em minus 0.4em\relax
  ACM, 1994, pp. 340--347.

\bibitem{daniels1997finding}
K.~Daniels, V.~Milenkovic, and D.~Roth, ``Finding the largest area
  axis-parallel rectangle in a polygon,'' \emph{Computational Geometry},
  vol.~7, no.~1, pp. 125--148, 1997.

\bibitem{ackermann1928hilbertschen}
W.~Ackermann, ``Zum hilbertschen aufbau der reellen zahlen,''
  \emph{Mathematische Annalen}, vol.~99, no.~1, pp. 118--133, 1928.

\bibitem{fischer1994computing}
P.~Fischer and K.-U. H{\"o}ffgen, ``Computing a maximum axis-aligned rectangle
  in a convex polygon,'' \emph{Information Processing Letters}, vol.~51, no.~4,
  pp. 189--193, 1994.

\bibitem{alt1995computing}
H.~Alt, D.~Hsu, and J.~Snoeyink, ``Computing the largest inscribed isothetic
  rectangle.'' in \emph{Proceedings of the Seventh Canadian Conference on
  Computational Geometry (CCCG)}, 1995, pp. 67--72.

\bibitem{hall2006finding}
O.~Hall-Holt, M.~J. Kate, P.~Kumar, and J.~S. Mitchell, ``Finding large sticks
  and potatoes in polygons,'' in \emph{Proceedings of the Seventeenth Annual
  ACM-SIAM Symposium on Discrete Algorithms}, vol. 122.\hskip 1em plus 0.5em
  minus 0.4em\relax SIAM, 2006, p. 474.

\bibitem{knauer2012largest}
C.~Knauer, L.~Schlipf, J.~M. Schmidt, and H.~R. Tiwary, ``Largest inscribed
  rectangles in convex polygons,'' \emph{Journal of Discrete Algorithms},
  vol.~13, pp. 78--85, 2012.

\bibitem{choi2021maximum}
Y.~Choi, S.~Lee, and H.-K. Ahn, ``Maximum-area and maximum-perimeter rectangles
  in polygons,'' \emph{Computational Geometry}, vol.~94, p. 101710, 2021.

\bibitem{schlipf2014stabbing}
L.~M. Schlipf, ``Stabbing and covering geometric objects in the plane,'' Ph.D.
  dissertation, 2014.

\bibitem{boyd2004convex}
S.~Boyd and L.~Vandenberghe, \emph{Convex optimization}.\hskip 1em plus 0.5em
  minus 0.4em\relax Cambridge university press, 2004.

\bibitem{nesterov1988Polynomial_a}
Y.~E. Nesterov, ``Polynomial time methods in linear and quadratic
  programming,'' \emph{Izvestija AN SSR, Tekhnitcheskaya kibernetika}, no.~3,
  pp. 324--326, 1988, (In Russian).

\bibitem{nesterov1988Polynomial_b}
Y.~E. Nesterov, ``Polynomial time iterative methods in linear and quadratic
  programming,'' \emph{Voprosy kibernetiki, Moscow}, pp. 102--125, 1988, (In
  Russian).

\bibitem{nesterov1988convex}
Y.~Nesterov and A.~Nemirovsky, ``Polynomial-time barrier methods in convex
  programming,'' \emph{Ekonomika i Matem. Metody}, vol.~24, no.~7, pp.
  1084--1091, 1988, (In Russian; English trans. Matekon: Translations of
  Russian and East European Math. Economics.).

\bibitem{nesterov1989self}
Y.~E. Nesterov and A.~Nemirovsky, ``Self-concordant functions and
  polynomial-time methods in convex programming,'' \emph{Report, Central
  Economic and Mathematic Institute, USSR Acad. Sci}, 1989.

\bibitem{nesterov1991acceleration}
Y.~Nesterov and A.~Nemirovsky, ``Acceleration and parallelization of the
  path-following interior point method for a linearly constrained convex
  quadratic problem,'' \emph{SIAM Journal on Optimization}, vol.~1, no.~4, pp.
  548--564, 1991.

\bibitem{nesterov1994interior}
Y.~Nesterov and A.~Nemirovskii, \emph{Interior-point polynomial algorithms in
  convex programming}.\hskip 1em plus 0.5em minus 0.4em\relax SIAM, 1994.

\bibitem{nemirovski2008interior}
A.~S. Nemirovski and M.~J. Todd, ``Interior-point methods for optimization,''
  \emph{Acta Numerica}, vol.~17, no.~1, pp. 191--234, 2008.

\bibitem{nemirovski2004lecture}
A.~Nemirovski, ``Lecture notes: Interior point polynomial time methods in
  convex programming,'' \emph{Spring Semester}, 2004.

\bibitem{minkowski1911allegemeine}
H.~Minkowski, ``Allegemeine lehz{\"a}tze {\"u}ber konvexe polyeder,''
  \emph{Ges. Abh., Leipzog-Berlin}, vol.~1, pp. 103--121, 1911.

\bibitem{bertsimas1997introduction}
D.~Bertsimas and J.~N. Tsitsiklis, \emph{Introduction to linear
  optimization}.\hskip 1em plus 0.5em minus 0.4em\relax Athena Scientific
  Belmont, MA, 1997, vol.~6.

\bibitem{lobo1998applications}
M.~S. Lobo, L.~Vandenberghe, S.~Boyd, and H.~Lebret, ``Applications of
  second-order cone programming,'' \emph{Linear algebra and its applications},
  vol. 284, no. 1-3, pp. 193--228, 1998.

\bibitem{fiacco1990sensitivity}
A.~V. Fiacco and Y.~Ishizuka, ``Sensitivity and stability analysis for
  nonlinear programming,'' \emph{Annals of Operations Research}, vol.~27,
  no.~1, pp. 215--235, 1990.

\bibitem{still2018lectures}
G.~Still, ``Lectures on parametric optimization: An introduction.''

\bibitem{shamos1978computational}
M.~I. Shamos, ``Computational geometry.'' \emph{Ph. D. thesis, Yale
  University}, 1978.

\bibitem{toussaint1983solving}
G.~T. Toussaint, ``Solving geometric problems with the rotating calipers,'' in
  \emph{Proc. IEEE Melecon}, vol.~83, 1983, p. A10.

\bibitem{sharir1992combinatorial}
M.~Sharir and E.~Welzl, ``A combinatorial bound for linear programming and
  related problems,'' in \emph{Annual Symposium on Theoretical Aspects of
  Computer Science}.\hskip 1em plus 0.5em minus 0.4em\relax Springer, 1992, pp.
  567--579.

\bibitem{matouvsek1996subexponential}
J.~Matou{\v{s}}ek, M.~Sharir, and E.~Welzl, ``A subexponential bound for linear
  programming,'' \emph{Algorithmica}, vol.~16, no.~4, pp. 498--516, 1996.

\end{thebibliography}

\end{document}